\keywords{Concurrency, process algebra, time-outs, priority, CCSP, labelled transition systems, semantic equivalences, linear time, branching time, failure trace semantics, absolute expressiveness, full abstraction, safety properties, may testing.}
\def\eg{{\em e.g.}}
\newfont{\bbb}{bbm10 scaled 1100}                       % blackboard bold
\newfont{\bbbs}{bbm10 scaled 900}                       % blackboard bold
\newcommand{\IN}{\mbox{\bbb N}}                         % natural numbers
\newcommand{\INs}{\mbox{\bbbs N}}                       % small natural numbers
\newcommand{\IP}{\mbox{\bbb P}}                         % processes
\DeclareSymbolFont{frenchscript}{OMS}{ztmcm}{m}{n}
\DeclareMathSymbol{\Pow}{\mathord}{frenchscript}{80}    % powerset
\DeclareMathSymbol{\R}{\mathord}{frenchscript}{82}      % bisimulation
\DeclareMathSymbol{\fO}{\mathord}{frenchscript}{79}     % class of operators
\DeclareMathSymbol{\fG}{\mathord}{frenchscript}{71}     % class of good properties
\newcommand{\Rn}{\mathcal{R}}                           % renaming
\newcommand{\I}{\mathcal{I}}                            % initial actions
\newenvironment{definition}[1]{\begin{defi} \label{df:#1} }{\end{defi}}
\newenvironment{theorem}[1]{\begin{thm} \rm \label{thm:#1} }{\end{thm}}
\newenvironment{proposition}[1]{\begin{prop} \rm \label{pr:#1} }{\end{prop}}
\newenvironment{lemma}[1]{\begin{lem} \rm \label{lem:#1} }{\end{lem}}
\newenvironment{corollary}[1]{\begin{cor} \rm \label{cor:#1} }{\end{cor}}
\newenvironment{example}[1]{\begin{exa} \rm \label{ex:#1} }{\end{exa}}
\newenvironment{observation}[1]{\begin{obs} \rm \label{obs:#1} }{\end{obs}}
\newcommand{\df}[1]{Definition~\ref{df:#1}}
\newcommand{\thrm}[1]{Theorem~\ref{thm:#1}}
\newcommand{\pr}[1]{Proposition~\ref{pr:#1}}
\newcommand{\lemref}[1]{Lemma~\ref{lem:#1}}
\newcommand{\corref}[1]{Corollary~\ref{cor:#1}}
\newcommand{\obsref}[1]{Observation~\ref{obs:#1}}
\newcommand{\tab}[1]{Table~\ref{tab:#1}}
\def\comesfrom{\@transition\leftarrowfill}
\def\goesto{\@transition\rightarrowfill}
\def\ngoesto{\@transition\nrightarrowfill}
\def\Goesto{\@transition\Rightarrowfill}
\def\nGoesto{\@transition\nRightarrowfill}
\def\xmapsto{\@transition\mapstofill}
\def\nxmapsto{\@transition\nmapstofill}
\def\@transition#1{\@@transition{#1}}
\newbox\@transbox
\newbox\@arrowbox
\def\@@transition#1#2%
\wd\@transbox{#1}
\@transbox\hbox{$\mathop{\box\@arrowbox}\limits^{\box\@transbox}$}
\def\nrightarrowfill{$\m@th\mathord-\mkern-6mu%
  \cleaders\hbox{$\mkern-2mu\mathord-\mkern-2mu$}\hfill
  \mkern-6mu\mathord\not\mkern-2mu\mathord\rightarrow$}
\def\Rightarrowfill{$\m@th\mathord=\mkern-6mu%
  \cleaders\hbox{$\mkern-2mu\mathord=\mkern-2mu$}\hfill
  \mkern-6mu\mathord\Rightarrow$}
\def\nRightarrowfill{$\m@th\mathord=\mkern-6mu%
  \cleaders\hbox{$\mkern-2mu\mathord=\mkern-2mu$}\hfill
  \mkern-6mu\mathord\not\mathord\Rightarrow$}
\def\mapstofill{$\m@th\mathord\mapstochar\mathord-\mkern-6mu%
  \cleaders\hbox{$\mkern-2mu\mathord-\mkern-2mu$}\hfill
  \mkern-6mu\mathord\rightarrow$}
\def\nmapstofill{$\m@th\mathord\mapstochar\mathord-\mkern-6mu%
  \cleaders\hbox{$\mkern-2mu\mathord-\mkern-2mu$}\hfill
  \mkern-6mu\mathord\not\mkern-2mu\mathord\rightarrow$}
\newcommand{\plat}[1]{\raisebox{0pt}[0pt][0pt]{#1}}     % no vertical space
\newcommand{\rec}[1]{\plat{$                            % recursion
        \stackrel{\mbox{\tiny $/$}}
        {\raisebox{-.3ex}[.3ex]{\tiny $\backslash$}}
        \!\!#1\!\!
        \stackrel{\mbox{\tiny $\backslash$}}
        {\raisebox{-.3ex}[.3ex]{\tiny $/$}}$}}
\newcommand{\obis}[2]{\mathrel{_{#1}\,                  % open bisimulation
        \raisebox{.3ex}{$\underline{\makebox[.7em]{$\leftrightarrow$}}$}
                  \,_{#2}}}
\newcommand{\bis}[1]{\obis{}{#1}}                       % bisimulation
\newcommand{\us}{me}
\newcommand{\we}{I}
\newcommand{\We}{I}
\newcommand{\our}{my}
\newcommand{\rt}{{\rm t}}                         % time-out
\newcommand{\spar}[1]{\mathbin{\|^{}_{#1}}}        % parallel composition
\newcommand{\ini}[1]{\textit{ini}^{}_{#1}}         % initialisation of FT set
\newcommand{\fft}{FT^*}                           % partial failure traces
\newcommand{\ffte}{FT^e}                          % special failure traces
\newcommand{\rfft}{FT^{r*}}                        % partial rooted failure traces
\newcommand{\rffte}{FT^{re}}                       % partial rooted failure traces
\newcommand{\ffteq}{\equiv_{FT}^*}                 % partial failure trace equivalence
\newcommand{\rffteq}{\equiv_{FT}^{r*}}              % partial failure trace equivalence
\newcommand{\RS}{\mathcal{S}}                     % recursive specification
\newcommand{\col}{\textit{col}}                   % collapse closure
\newcommand{\st}{\mbox{\scriptsize\sc stab}}      % stability bit
\newcommand{\pst}{\mbox{\scriptsize\sc poststab}}  % postponed stability bit
\begin{document}

\title{Failure Trace Semantics{\\} for a Process Algebra with Time-outs}
\titlecomment{This paper is dedicated to Jos Baeten, at the occasion of his retirement.}

\author[R.J. van Glabbeek]{Rob van Glabbeek}
\address{Data61, CSIRO, Sydney, Australia}
\address{School of Computer Science and Engineering,
University of New South Wales, Sydney, Australia}
\email{rvg@cs.stanford.edu}

%%%%%%%%%%%%%%%%%%%%%%%%%%%%%%%%%%%%%%%%%%%%%%%%%%%%%%%%%%%%%%%%%%%%%%%%%%%

\begin{abstract}
This paper extends a standard process algebra with a time-out operator, thereby increasing its
absolute expressiveness, while remaining within the realm of untimed process algebra, in the sense
that the progress of time is not quantified. Trace and failures equivalence fail to be congruences
for this operator; their congruence closure is characterised as failure trace equivalence.
\end{abstract}

\maketitle

\section{Motivation}

This work has four fairly independent motivations, in the sense that it aims to provide a common solution
to four different problems. How it yields a solution to the last two of these problems is left for future work, however.

\subsection{The passage of time in processes modelled as labelled transition systems}

The standard semantics of untimed non-probabilistic process algebras is given in terms of labelled
transition systems (LTSs), consisting of a set of states, with action-labelled transitions between them.
The behaviour of a system modelled as a state in an LTS can be visualised as a token, starting in
this initial state, that travels through the LTS by following its transitions. A question that has
plagued me since I first heard of this model, in December 1984, is where exactly (and how) does time
progress when this token travels: in the states, during the transitions, or maybe both?

In some sense this question is more of a philosophical than an practical nature. Process algebra has
flourished, without any convincing answer to this question sitting in its way.

One reasonable attitude is that in untimed progress algebra we \emph{abstract} from time, which makes
the entire question meaningless. While this does make sense, I still desire seeing an untimed
process as a special case of a timed process, one where any period of time that
might be specified in a timed formalism is instantiated by a nondeterministic choice allowing
\emph{any} amount of time. Such a point of view would certainly assist in relating timed and
untimed process algebras. It is from this perspective that my question is in need of an answer.

The common answer given by the process algebra community is that there is no need to model
transitions that take time, for a durational transition can simply be modelled as a pair of
instantaneous transitions with a time-consuming state in between. Here I follow this train of
thought, and regard transitions as occurring instantaneously. This implies that time must elapse in
states, if at all.

Now visualise a system reaching a state $s$ in an LTS, that has a single outgoing transition $u$
leading to a desired goal state. Assume, moreover, that this transition cannot be blocked by the environment.
A fundamental assumption made in most process algebraic formalisms is that the system will
eventually reach this goal state, that is, that it will not stay forever in state $s$.
In \cite{GH19} this assumption is called \emph{progress}. Without it it is not possible
to prove meaningful \emph{liveness properties} \cite{Lam77}, saying that ``something [good] must eventually happen''.

When our LTS would model all activities of the represented system, it is hard to believe how the
system, upon reaching state $s$, comes to a rest, spends a finite amount of time in state $s$ while
doing nothing whatsoever, and then suddenly, and without any clear reason, takes the transition to
the desired goal state. It appears more plausible that if we allow the system to stay in state $s$ for a
finite amount of time, it can (and perhaps even must) stay there for an infinite amount of time.

The way out of this paradox is that our system, being discrete, necessarily abstracts from a lot of activity.
The abstracted activity that is relevant for the visit of the system to state $s$ must be some durational
activity, an amount of work to be done, that for a while sits in the way of taking transition $u$.
As soon as that work is done, and nothing further sits in the way, transition $u$ takes place.

In this paper this abstracted activity is made explicit, in the shape of a time-out transition $\goesto{\rt}$.
Similar to the internal transition $\goesto\tau$, modelling the occurrence of an instantaneous
action from which we abstract, the time-out transition $\goesto{\rt}$ models the end of a time-consuming
activity from which we abstract. A state $s$ in which the system is supposed to spend some time---a
positive and finite, but otherwise unquantified amount---can now be modelled as a pair of states
$s_1$ and $s_2$ with a time-out transition between them; the outgoing transitions of $s$ are then
moved to $s_2$. Time will be spent in state $s_1$ only, where some abstracted time-consuming
activity takes place. The only thing in the model that testifies to this activity is the time-out
transition $s_1\goesto\rt s_2$, that is guaranteed to occur as soon as the time-consuming work is done.
Immediately afterwards, the system will take one of the outgoing transitions of $s_2$, unless they
are all blocked by the environment in which our reactive system is running.

An alternative proposal on how systems spend finite amounts of time in states is essentially due to
Milner \cite{Mi90ccs}. It says that the states in an LTS model \emph{reactive systems} that merely
react on stimuli from the environment. This reaction may be instantaneous, but the environment
itself takes time between issuing stimuli. Usually, each outgoing transition of a state $s$ is
labelled by a visible action $a$, and only when the environment chooses to synchronise on channel
$a$ will such a transition occur. Normally, it may take a while before the environment is ready to
synchronise with any outgoing transition of state $s$, and this is exactly what accounts for the time
spent in state $s$. A special case occurs when state $s$ has an outgoing $\tau$-transition.
Since such transitions do not require synchronisation with the environment, it appears that the
represented system is not allowed to linger in such a state.

The current paper combines the time-out transitions proposed above with the reactive viewpoint adopted by
Milner. So any time the system spends in a state is either due to the environment not allowing an
outgoing transition, or the system waiting for a time-out transition to occur (or both).

Admitting time-out transitions in an LTS not only allows the specification of states in which the
system spends a positive but finite amount of time, as described above; it also allows the
specification of states in which the system spends no time whatsoever, unless forced by the
environment---this is achieved by not using any time-out transitions leaving that state. Moreover,
we can model states $s$ in which some outgoing transitions have to wait for a time-out to
occur, and others do not.\footnote{Enriching the model with this kind of of states is not merely a
  luxury in which {\we} indulge;\vspace{-1pt} such states arise naturally through parallel composition.
  In {\our} interleaving semantics it will turn out that
  $t.a \| t.b \goesto{t} a\|t.b = a.t.b + t.(a.b+b.a)$.}
This is done by modelling $s$ as $s_1\goesto\rt s_2$, were the outgoing
transitions of $s_2$ have to wait for the time-out, and the ones of $s_1$ do not.
In that case the time-out will occur only if the environment fails to
synchronise in time with any of the outgoing transitions from $s_1$.
This implements a priority mechanism, in which from the system's perspective the outgoing
transitions of $s_1$ have priority over the ones of~$s_2$.

\subsection{Failure trace semantics}\label{testing scenarios}

In \cite{vG01,vG93} I classified many semantic equivalences on processes, and proposed testing
scenarios for these semantics in terms of button-pushing experiments on reactive and generative
machines, such that two processes are inequivalent iff their difference can be detected through the
associated testing scenario. Figure~\ref{spectrum} shows the various equivalences in the absence of
internal transitions $\tau$ (or $\rt$). For equivalences such as simulation equivalence (point $S$
in Figure~\ref{spectrum}), or bisimilarity ($B$), the testing scenarios require a replication facility,
allowing the experimenter to regularly make copies of a system \emph{in its current state}, and
expose each of those copies to further tests. This facility could be regarded as fairly unrealistic,
in the sense that one would not expect an actual implementation of such a facility to be available.
When one takes this opinion, (bi)similarity makes distinctions between processes that are not well justified.\linebreak[3]
Likewise, readiness equivalence ($R^*$) has a testing scenario that allows us, in certain states,
to see the menu of all available actions the environment (= tester) can synchronise with.
It may be deemed unrealistic to assume such a menu to be generally available. Or for systems in
which such a menu is available, one might say that it has to be specified as part of the system
behaviour, so that we do not need it explicitly in a testing scenario. When one takes this opinion,
readiness equivalence, and its finer variants, also makes distinctions between processes that are not well justified.

Following this line of reasoning, the finest (= most discriminating) semantic equivalence that does
have a realistic testing scenario is \emph{failure trace semantics} ($FT$). Its testing scenario
allows the tester to control which actions may take place (are available for synchronisation) and
which are not. It also allows the tester to record sequences of actions that take place during a run
of the system, interspersed with periods of idling, where each idle period is annotated with the set
of actions made available for synchronisation by the tester during this period. Such a sequence is a
\emph{failure trace}, and two systems are distinguished iff one has a failure trace that the other
has not. In Figure~\ref{spectrum} two variants of failure trace equivalence are recorded ($FT^*$ and
$FT^\infty$), depending on whether one only allows finite observations, leading to \emph{partial}
failure traces, or also infinite ones.

Failures semantics ($F^*$) is the default semantics of the process algebra CSP \cite{BHR84,Ho85}.
It is coarser than failure trace semantics, in the sense that it makes more identifications.
Its testing scenario is the variant of the one for failure trace semantics described above, in which
the environment/tester cannot alter the set of allowed actions once the system idles (= reaches a
state of deadlock). Thus, when two systems are failures inequivalent, one can think of an
environment such that when placed in that environment one of them deadlocks and the other does not.
An important insight is that such an environment can always be built from some basic process
algebraic operators. So when $P$ are $Q$ are failures inequivalent, there is process algebraic
context $\mathcal{C}[\_\!\_\,]$, such that $\mathcal{C}[P]$ has a deadlock which $\mathcal{C}[Q]$
has not (or vise versa).

Given that two processes $P$ are $Q$ are failure trace inequivalent iff, under a realistic testing scenario,
an environment can see their difference, one would expect, analogously to the situation with failures
semantics, that one can build an environment from process algebraic operators that exploits that
difference, i.e., that one can find a context $\mathcal{C}[\_\!\_\,]$, such that the difference
between $\mathcal{C}[P]$ and $\mathcal{C}[P]$ becomes much more tangible.
However, standard process algebras lack the expressiveness to achieve this.

So far the search for suitable process algebraic operators that allow us to exploit the difference
between failure trace inequivalent processes has not born fruit. Most operators fall short in doing
so, whereas others have too great a discriminating power. A prime example of the latter is the
priority operator of Baeten, Bergstra \& Klop \cite{BBK87b}. It allows building contexts that
distinguish processes whenever they are \emph{ready trace} equivalent. Here ready trace semantics
$(RT^*)$ is a bit finer than failure trace semantics $(FT^*)$. When analysing how priority
operators distinguish processes that are failure trace equivalent, one finds that the priority
operator in fact has unrealistic powers, and is not likely implementable.

The present paper shows that the addition of a time-out operator to a standard process algebra
suffices to exploit the difference between failure trace inequivalent processes. Here the time-out
operator is simply an instance of the traditional action prefixing operator $\alpha.\_\!\_\,$,
taking for $\alpha$ the time-out action $\rt$. {\We} show that with this operator one can build,
for any two failure trace inequivalent processes $P$ and $Q$, a context $\mathcal{C}[\_\!\_\,]$
such that  $\mathcal{C}[P]$ and $\mathcal{C}[P]$ are trace inequivalent. In fact, $P$ and $Q$ can be
distinguished with \emph{may testing}, as proposed by De Nicola \& Hennessy \cite{DH84}.
Without the time-out operator, may testing merely distinguishes processes when they are trace inequivalent.

\subsection{Capturing liveness properties while assuming justness}\label{justness}

In \cite{vG19c} I present a research agenda aiming at laying the foundations of a theory of
concurrency that is equipped to ensure liveness properties of distributed systems without making
fairness assumptions. The reason is that fairness assumptions, while indispensable for some applications,
in many situations lead to false conclusions \cite{GH19,vG19c}. Merely assuming progress, on the other
hand, is often insufficient to obtain intuitively valid liveness properties. As an alternative to
fairness assumptions, \cite{GH19} proposes the weaker assumption of \emph{justness}, that does not
lead to false conclusions.

As pointed out in \cite{vG19c}, when assuming justness but not fairness, liveness properties are not
preserved by strong bisimilarity, let alone by any of the other equivalences in the linear time --
branching time spectrum. An adequate treatment of liveness therefore calls for different semantic
equivalences, ones that do not make all the identifications of strong bisimilarity.

Whereas adapting the notion of bisimilarity to take justness considerations properly into account is
far from trivial, it appears that a version of failure trace semantics that preserves liveness
properties when assuming justness but not fairness comes naturally, and is in a denotational
approach to semantics hard to avoid. Here it is crucial that the semantics is based on
\emph{complete} failure traces, modelling potentially infinite observations of systems.

The present paper paves the way for such an approach by considering \emph{partial} failure traces.
This yields a semantic equivalence that is coarser than strong bisimilarity, and can be justified
from an operational point of view, although it completely fails to respect liveness properties.
The move from partial to complete failure trace equivalence is left for future work.

\subsection{Absolute expressiveness}\label{expressiveness}

In comparing the expressiveness of process algebraic languages, an important distinction between
absolute and relative expressiveness is made \cite{Parrow08}. Relative expressiveness deals with the
question whether an operator in one language can be faithfully mimicked by a context or open term in
another.  It is usually studied by means of valid encodings between languages
\cite{Gorla:unified,vG18e}.  As there is a priori no upper bound on how convoluted an operator one
can add to a process algebra, it may not be reasonable to aim for a universally expressive language,
in which all others can be expressed, unless a thorough discipline is proclaimed on which class of
operators is admissible.

Absolute expressiveness deals with the question whether there are systems that can be denoted by a
closed term in one language but not in another. It can for instance be argued that up to strong
bisimilarity the absolute expressiveness of a standard process algebra like CCS with guarded
recursion is not decreased upon omitting the parallel composition (although its relative
expressiveness surely is). Namely, up to strong bisimilarity, each transition system that can be
denoted by a closed CCS expression can already be denoted by such a CCS expression that does not
employ parallel composition. This consideration occurs in the proof of \thrm{completeness}.
When using absolute expressiveness, the goal of a universally expressive process algebra becomes
much less unrealistic. In fact, when not considering time, probabilities, or other features that are
alien to process algebras like CCS, it could be argued, and is perhaps widely believed, that the
version of CCS with arbitrary infinite sums and arbitrary systems of recursive equations is already
universally expressive. The reason is that all that can be expressed by CCS-like languages are
states in LTSs, and up to strong bisimilarity, each state in each LTS can already be denoted by such
a CCS expression.

This conclusion loses support when factoring in justness, as discussed in Section~\ref{justness},
for in this setting strong bisimilarity becomes too coarse an equivalence to base a theory of
expressiveness upon. In fact, the idea that standard process algebras like CCS are universally
expressive has been challenged in \cite{vG05d} and \cite{GH15b}, where concrete and useful
distributed systems are proposed that can not be modelled in such languages.  In \cite{vG05d} this
concerns a system that stops executing a repeating task after a time-out goes off, whereas the
systems considered in \cite{GH15b} are fair schedulers and mutual exclusion protocols.  In both
cases a proper treatment of justness appears to be a prerequisite for the correct modelling of such systems,
and for this reason this task falls outside the scope of the current paper. However, additionally,
both examples need something extra beyond what CCS has to offer, and both papers indicate that a
simple priority mechanism would do the job.

The current paper offers such a simple priority mechanism, and thereby paves the way for expressing
the systems that have been proposed as witnesses for the lack of universal expressiveness of
standard process algebras.

\section{The process algebra CCSP\texorpdfstring{$_\rt$}{}}

Let $A$ and $V$ be countably infinite sets of \emph{visible actions} and \emph{variables}, respectively.
The syntax of CCSP$_\rt$ is given by\vspace{-1pt}
$$E ::= 0 ~\mbox{\Large $\,\mid\,$}~ \alpha.E ~\mbox{\Large $\,\mid\,$}~ E+E
~\mbox{\Large $\,\mid\,$}~ E \spar{S} E ~\mbox{\Large $\,\mid\,$}~ \tau_I(E) ~\mbox{\Large $\,\mid\,$}~\Rn(E) \mbox{\Large
~$\,\mid\,$}~ X ~\mbox{\Large $\,\mid\,$}~ \rec{X|\RS}\mbox{ (with }X \mathbin\in V_\RS)$$
with $\alpha \mathbin\in Act := A \uplus\{\tau,\rt\}$, $S,I\mathbin\subseteq A$, $\Rn \mathbin\subseteq A \mathop\times A$,
$X \mathbin\in V$ and $\RS$ a {\em recursive specification}: a set of equations
$\{Y = \RS_{Y} \mid Y \mathbin\in V_\RS\}$ with $V_\RS \subseteq V$
(the {\em bound variables} of $\RS$) and $\RS_{Y}$ a CCSP$_\rt$ expression.

The constant $0$ represents a process that is unable to perform any
action. The process $\alpha.E$ first performs the action $\alpha$ and then
proceeds as $E$. The process $E+F$ will behave as either $E$ or $F$.
$\spar{S}$ is a partially synchronous parallel composition operator; 
actions $a\in S$ must synchronise---they can occur only when both arguments
are ready to perform them---whereas actions $\alpha\notin S$ from both arguments are interleaved.
$\tau_I$ is an abstraction operator; it conceals the actions in $I$ by renaming them into the hidden
action $\tau$.
The operator $\Rn$ is a relational renaming: it renames a given action $a\in A$ into a choice between
all actions $b$ with $(a,b)\mathbin\in \Rn$. {\We} require that all sets ${\{b\mid (a,b)\in \Rn\}}$ are finite.
Finally, $\rec{X|\RS}$ represents the $X$-component of a solution of the system of recursive equations $\RS$.
A CCSP$_\rt$ expression $E$ is {\em closed} if every occurrence of a variable $X$ is \emph{bound}, i.e., occurs in a
subexpression $\rec{Y|\RS}$ of $E$ with $X \mathbin\in V_\RS$.

The interleaving semantics of CCSP$_\rt$ is given by the labelled transition relation
$\mathord\rightarrow \subseteq \IP\times Act \times\IP$
on the set $\IP$ of closed CCSP$_\rt$ terms or \emph{processes}, where the transitions 
{$P\goesto{a}Q$} are derived from the rules of \tab{sos CCSP}.
Here $\rec{E|\RS}$ for $E$ an expression and $\RS$ a recursive specification
denotes the expression $E$ in which $\rec{Y|\RS}$ has been substituted for the
variable $Y\!$, for all $Y \mathbin\in V_\RS$.

\begin{table}[t]
\vspace{-6pt}
\caption{Structural operational interleaving semantics of CCSP$_\rt$}
\label{tab:sos CCSP}
\begin{center}
\framebox{$\begin{array}{@{}c@{\qquad}c@{\qquad}c@{}}
\multicolumn{3}{c}{
\alpha.x \goesto{\alpha} x \qquad
\displaystyle\frac{x \goesto{\alpha} x'}{x+y \goesto{\alpha} x'} \qquad
\displaystyle\frac{y \goesto{\alpha} y'}{x+y \goesto{\alpha} y'} \qquad
\displaystyle\frac{x \goesto{\alpha} x'} {\Rn(x) \goesto{\beta} \Rn(x')}
~\left( \begin{array}{@{}r@{}} \scriptstyle\alpha=\beta=\tau \\[-3pt]
\scriptstyle \vee~~ \alpha=\beta=\rt\\[-3pt]\scriptstyle \vee~(\alpha,\beta)\in \Rn\!\end{array}\right)}\\[1.5em]

\displaystyle\frac{x \goesto{\alpha} x'}{x\spar{S} y \goesto{\alpha} x'\spar{S} y}~(\alpha\not\in S) &
\displaystyle\frac{x \goesto{a} x'\quad y \goesto{a} y'}{x\spar{S} y \goesto{a} x'\spar{S} y'} ~(a\in S) &
\displaystyle\frac{y \goesto{\alpha} y'}{x\spar{S} y \goesto{\alpha} x\spar{S} y'}~(\alpha\not\in S) \\[1.5em]

\displaystyle\frac{x \goesto{\alpha} x'}{\tau_I(x) \goesto{\alpha} \tau_I(x')}~(\alpha\not\in I) &
\displaystyle\frac{x \goesto{a} x'}{\tau_I(x) \goesto{\tau} \tau_I(x')}~(a\in I) &
\displaystyle\frac{\rec{\RS_{X}|\RS} \goesto{\alpha} y}{\rec{X|\RS}\goesto{\alpha}y} \\[1em]
\end{array}$}
\end{center}
\end{table}

The language CCSP is a common mix of the process algebras CCS \cite{Mi90ccs} and CSP \cite{BHR84,Ho85}.
It first appeared in \cite{Ol87}, where it was named following a suggestion by M. Nielsen.
The family of parallel composition operators $\|_S$ stems from \cite{OH86}, and incorporates
the two CSP parallel composition operators from \cite{BHR84}.
The relation renaming operators $\Rn(\_\!\_)$ stem from \cite{Va93}; they combine both the
(functional) renaming operators that are common to CCS and CSP, and the inverse image operators of CSP\@.
The remaining constructs are common to CCS and CSP\@. The syntactic form of inaction $0$, action
prefixing $\alpha.E$ and choice $E+F$ follows CCS, whereas the syntax of abstraction $\tau_I(\_\!\_)$
and recursion $\rec{X|\RS}$ follows ACP \cite{BW90}. The only addition by {\us} is the prefixing
operator $\rt.\_\!\_\,$; so far it is merely an action that admits no synchronisation, concealment,
or renaming.

\begin{definition}{bisimulation}
A \emph{strong bisimulation} is a symmetric relation $\R$ on $\IP$, such that, for all
$(P,Q)\mathbin\in\R$ and $\alpha\mathbin\in Act$,\vspace{-2pt}
\begin{itemize}
\item if $P \goesto{\alpha} P'$ then $Q\goesto{\alpha} Q'$ for some $Q'$ with $P'\R Q'$.\vspace{-2pt}
\end{itemize}
Processes $P,Q\mathbin\in\IP$ are \emph{strongly bisimilar}, $P \bis{} Q$, if $P \R Q$ for some strong bisimulation $\R$.
\end{definition}

\noindent
Strong bisimilarity lifts in the standard way to \emph{open} CCSP$_\rt$ expressions, containing free with variables:
$E \bis{} F$ iff $P \bis{} Q$ for each pair of closed substitution instances $P$ and $Q$ of $E$ and $F$.

The common \emph{strong bisimulation semantics} of process algebras like CCSP$_\rt$ interprets
closed expressions as $\bis{}$-equivalence classes of processes, thereby identifying strongly
bisimilar processes. Operators, or more generally open terms $E$, then denote $n$-ary operations on such
equivalences classes, with $n$ the number of free variables occurring in $E$.
To make sure that the meaning of the operators is independent of the choice of
representative processes within their equivalences classes, it is essential that $\bis{}$ is a
congruence for all $n$-ary operators $f$:
\begin{quote}
if $P_i \bis{} Q_i$ for all $i=1,\dots,n$ then $f(P_1,\dots,P_n) \bis{} f(Q_1,\dots,Q_n)$.
\end{quote}
This property holds for CCSP$_\rt$ since the structural operation semantics of the recursion-free
fragment of the language,
displayed in \tab{sos CCSP}, fits the \emph{tyft/tyxt format} of \cite{GrV92}.
Likewise, to make sure that also the meaning of the recursion construct is independent of the choice of
representative expressions within their equivalences classes, $\bis{}$ needs to be \emph{full
  congruence} for recursion \cite{vG17b}:
\begin{quote}
if $\RS_Y \bis{} \RS'_Y$ for all $Y\in V_\RS=V_{\RS'}$ then $\rec{X|\RS} \bis{} \rec{X|\RS'}$
\end{quote}
for all recursive specifications $\RS$ and $\RS'$ and variables $X$ with $X \in V_\RS=V_{\RS'}$.
Again, this property holds since the semantics of \tab{sos CCSP} fits the \emph{tyft/tyxt format
with recursion} of \cite{vG17b}.

\begin{definition}{guarded}
Given a recursive specification $\RS$, write $X \goesto{u} Y$, for variables $X,Y\in V_\RS$, if $Y$
occurs in the expression $\RS_X$ outside of all subexpressions $\alpha.E$ of $\RS_X$.
$\RS$ is \emph{guarded} iff there is no infinite chain $X_0 \goesto{u} X_1 \goesto{u} X_2 \goesto{u} \dots$.
A process is \emph{guarded} if all recursive specifications called by it are guarded.
\end{definition}
\noindent
Often, one restricts attention to guarded processes.
\begin{table}[t]
\vspace{-6pt}
\caption{A complete axiomatisation for strong bisimilarity on guarded CCSP$_\rt$ processes}
\label{tab:axioms CCSP}
\begin{center}
\framebox{$\begin{array}{@{}r@{~=~}l@{\qquad}r@{~=~}l@{}c r@{~=~}l@{}}
    x+(y+z) & (x+y)+z  & \tau_I(x+y) & \tau_I(x) + \tau_I(y)   && \Rn(x+y) & \Rn(x) + \Rn(y)
\\
x+y & y+x          & \tau_I(\alpha.x) & \alpha.\tau_I(x) & \mbox{\small if $\alpha\mathbin{\notin} I$}
                   & \Rn(\tau.x) & \tau.\Rn(x)
\\
x+x & x            & \tau_I(\alpha.x) & \tau.\tau_I(x) & \mbox{\small if $\alpha\mathbin\in I$}
                   & \Rn(\rt.x) & \rt.\Rn(x)
\\
x+0 & 0            & \rec{X|\RS} & \rec{\RS_X | \RS} &
                   & \Rn(a.x) & \plat{\hspace{-1em}$\displaystyle\sum_{\{b\mid (a,b)\in \Rn\}}\hspace{-1em} b.\Rn(x)$}
\\
\multicolumn{7}{l}{\mbox{If $\rule{0pt}{15pt}\displaystyle P= \sum_{i\in I}\alpha_i.P_i$ and $\displaystyle Q= \sum_{j\in J}\beta_j.Q_j$ then}}\\
\multicolumn{7}{c}{\displaystyle P \spar{S} Q = \sum_{i\in I,~\alpha_i \notin S}\alpha_i.(P_i \spar{S} Q) + \sum_{j\in J,~\beta_j\notin S}\beta_j.(P \spar{S} Q_j)
+ \!\!\!\sum_{i\in I,~j\in J,~ \alpha_i=\beta_j\in S}\!\!\! \alpha_i.(P_i \spar{S} Q_j)}\\
\hline
\multicolumn{7}{l}{\mbox{Recursive Specification Principle (RSP)} \qquad\qquad \rule{0pt}{15pt}\RS
  \Rightarrow X = \rec{X|\RS}} \qquad\qquad \mbox{($\RS$ guarded)}
\end{array}$}
\end{center}
\end{table}
The axioms of \tab{axioms CCSP} are \emph{sound} for $\bis{}$, meaning that writing $\bis{}$ for $=$,
and substituting arbitrary expressions for the variables $x,y,z$, or the meta-variables $P_i$ and $Q_j$,
turns them into true statements. In these axioms $\alpha,\beta$ range over $Act$ and $a,b$ over $A$.
All axioms involving variables are equations. The axiom involving $P$ and $Q$ is a template that
stands for a family of equations, one for each fitting choice of $P$ and $Q$. This is the CCSP$_\rt$ version
of the \emph{expansion law} from \cite{Mi90ccs}. The axiom $\rec{X|\RS} = \rec{\RS_X | \RS}$ is the
\emph{Recursive Definition Principle} (RDP) \cite{BW90}. It says that recursively defined processes
$\rec{X|\RS}$ satisfy their set of defining equations $\RS$. In particular, this entails that each
recursive specification has a solution. The axiom RSP \cite{BW90} is a conditional equation, with as
antecedents the equations of a guarded recursive specification $\RS$. It says that the $X$-component of any
solution of $\RS$---a vector of processes substituted for the variables $V_\RS$---equals $\rec{X | \RS}$.
This is equivalent to the statement that the solutions of guarded recursive specifications must be unique.

\begin{theorem}{completeness}
For guarded $P,Q\in\IP$, one has $P \bis{} Q$ iff $P=Q$ is derivable from the axioms of \tab{axioms CCSP}.
\end{theorem}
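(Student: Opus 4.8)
The two implications split in the usual way: derivability of $P=Q$ implies $P\bis{}Q$ by soundness of the axioms, which is already asserted above (each axiom is sound and $\bis{}$ is a congruence and a full congruence for recursion, so derivations preserve $\bis{}$); all the work is in the converse, so assume $P\bis{}Q$ with $P,Q$ guarded and closed, and follow Milner's classical argument for recursive CCS~\cite{Mi90ccs}. Call a guarded recursive specification $\RS$ \emph{linear} when every right-hand side has the form $\RS_Y=\sum_{i\in I_Y}\alpha_i.Y_i$ with $I_Y$ finite, $\alpha_i\in Act$ and $Y_i\in V_\RS$ (an empty sum denoting $0$); such a system is automatically guarded, and $\rec{X|\RS}\goesto{\alpha}\rec{X'|\RS}$ holds exactly when $\alpha.X'$ is a summand of $\RS_X$. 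The first step shows, by induction on the structure of a guarded closed term $P$, that $P$ is provably equal (from the axioms of \tab{axioms CCSP}) to some $\rec{X_0|\RS}$ with $\RS$ linear. The cases $0$, $\alpha.E$ and $E+F$ are routine --- for $+$ one takes the disjoint union of the two systems together with a fresh start variable whose body is the sum of the two root bodies, using associativity, commutativity and idempotence of $+$. For $E\spar{S}F$, $\tau_I(E)$ and $\Rn(E)$ one unfolds the roots with RDP (and the induction hypothesis) and then drives the operator inward using the expansion law, respectively the distribution laws for $\tau_I$ and $\Rn$ over $+$ and over prefixing; this yields a new, manifestly linear system --- over pairs of variables of the subsystems for $E$ and $F$ in the parallel case, over copies of one subsystem's variables otherwise --- in which all sums remain finite (for $\Rn$ this is where the requirement that every $\{b\mid(a,b)\in\Rn\}$ be finite is used), and RSP then identifies $P$ with the start variable of this system. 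The time-out action $\rt$ needs no special handling here: \df{bisimulation} and all the axioms treat it exactly like an ordinary visible action, so it simply rides along.

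Once both $P$ and $Q$ are presented by linear systems, the second step merges them along the bisimulation. Write $P=\rec{X_0|\RS}$ and $Q=\rec{Y_0|\mathcal{T}}$ with $\RS,\mathcal{T}$ linear and with disjoint sets of bound variables, say $\RS_X=\sum_i\alpha_i.X_i$ and $\mathcal{T}_Y=\sum_j\beta_j.Y_j$. Let $W$ be the set of pairs $(X,Y)$ reachable from $(X_0,Y_0)$ along matching transitions with $\rec{X|\RS}\bis{}\rec{Y|\mathcal{T}}$. For $(X,Y)\in W$ the transfer property of the bisimulation lets us choose, for each summand $\alpha_i.X_i$ of $\RS_X$, a summand $\alpha_i.Y_{\gamma(i)}$ of $\mathcal{T}_Y$ with $\rec{X_i|\RS}\bis{}\rec{Y_{\gamma(i)}|\mathcal{T}}$, and symmetrically a map $\delta$ on the summands of $\mathcal{T}_Y$; in particular all of $(X_i,Y_{\gamma(i)})$ and $(X_{\delta(j)},Y_j)$ lie in $W$. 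Define a linear specification $\mathcal{U}$ with bound variables $W$ by $\mathcal{U}_{(X,Y)}=\sum_i\alpha_i.(X_i,Y_{\gamma(i)})+\sum_j\beta_j.(X_{\delta(j)},Y_j)$. Then the $W$-indexed family $\bigl(\rec{X|\RS}\bigr)_{(X,Y)\in W}$ is a solution of $\mathcal{U}$: substituting it into $\mathcal{U}_{(X,Y)}$ produces $\sum_i\alpha_i.\rec{X_i|\RS}+\sum_j\beta_j.\rec{X_{\delta(j)}|\RS}$, and since each $\beta_j.X_{\delta(j)}$ is, by construction, already a summand of $\RS_X$, RDP together with the $+$-axioms collapses this to $\sum_i\alpha_i.\rec{X_i|\RS}=\rec{X|\RS}$; by symmetry $\bigl(\rec{Y|\mathcal{T}}\bigr)_{(X,Y)\in W}$ solves $\mathcal{U}$ as well. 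As $\mathcal{U}$ is linear, hence guarded, RSP applies --- even though $W$ may be infinite --- and gives $\rec{X|\RS}=\rec{(X,Y)|\mathcal{U}}=\rec{Y|\mathcal{T}}$ for every $(X,Y)\in W$; taking $(X_0,Y_0)$, which lies in $W$ precisely because $P\bis{}Q$, yields $P=Q$.

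The step I expect to be the real obstacle is the recursion case of the first step: turning an arbitrary guarded $\rec{X|\RS'}$ --- whose bodies may freely nest $\spar{S}$, $\tau_I$, $\Rn$ and further occurrences of $\rec{\cdot|\cdot}$ --- into linear form. The idea is to unfold each $\RS'_Y$ finitely often, which is possible exactly because $\RS'$ is guarded, until every variable of $\RS'$ occurs prefix-guarded; the remaining non-prefix-guarded subterms then mention only variables bound by nested recursions, hence are closed, so the induction hypothesis linearises them, and the resulting systems are merged into $\RS'$ after renaming bound variables apart. Making this rigorous is cleanest if the first-step claim is strengthened to a statement about open terms relative to a finite set of parameter variables and proved by a nested induction (on nesting depth of recursion, then on term structure); and, as in Milner's treatment, one should be prepared to pass through intermediate specifications that are guarded but not yet linear before a final normalisation pass. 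Throughout, the finiteness of the $\Rn$-images has to be carried along to ensure that every sum occurring in a normal form stays finite.
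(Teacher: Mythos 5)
Your proof is correct and follows essentially the same route as the paper's: both implement Milner's classical argument, normalising each process to a guarded linear recursive specification using RDP/RSP and then merging the two specifications along the bisimulation into a common one of which both are solutions, so that RSP yields $P=Q$. The only organisational difference lies in the normalisation step: the paper derives a head normal form $\sum_{i}\alpha_i.P_i$ for every process reachable from $P$ and assembles these directly into a linear specification whose variables are indexed by the reachable processes (then applies RSP once), which sidesteps the delicate recursion case of your term-structural induction that you rightly flag as the main obstacle.
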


\begin{trivlist}
\item[\hspace{\labelsep}{\em Proof Sketch.}]
``If'', the \emph{soundness} of the axiomatisation of \tab{axioms CCSP}, is an immediate
consequence of the soundness of the individual axioms.

``Only if'', the \emph{completeness} of the axiomatisation:
Using the axioms from the first box of \tab{axioms CCSP} any guarded CCSP$_\rt$ process $P$ can be
brought in the form $\sum_{i\in I}\alpha_i.P_i$---a \emph{head normal form}.
In fact, $\{(\alpha_i,P_i)\mid i\in I\}$ can be chosen to be $\{(\alpha_i,P_i)\mid P \goesto\alpha P_i\}$.
Given a process $P$, let $\textit{reach}(P)$ be the smallest set of processes containing $P$, such that 
$R\goesto\alpha R'$ with $R \in \textit{reach}(P)$ implies $R'$ in $\textit{reach}(P)$.
Now dedicate to each $R \in \textit{reach}(P)$ a variable $X_R$, and consider the recursive
specification $\RS$ with $V_\RS$ the set of all those variables, and as equations $X_R = \sum_{k\in K}\alpha_k.X_{R_k}$,
using the head normal form of $R$. Using RSP, we derive $P=\rec{X_P|\RS}$. 
Likewise, we can derive $Q=\rec{X_Q|\RS'}$ for a recursive specification $\RS'$, consisting of
equations of the form $X_S = \sum_{l\in L}\beta_l.X_{S_l}$.

Once we have two such recursive specifications, as described for instance in \cite{Mi90ccs} we can
create a combined recursive specification $\RS''$ such that both $P$ and $Q$ are the X-components of
solutions of $\RS''$. With RSP one then derives $P=Q$.\vspace{1ex}
\hfill $\Box$\end{trivlist}
The above proof idea stems from Milner \cite{Mi90ccs} and can be found in various places in the
literature, but always applied to finite-state processes, or some other restrictive subset of a
calculus like CCSP$_\rt$. Since the set of true statements $P \bis{} Q$, with $P$ and $Q$ processes
in a process algebra like CCSP$_\rt$, is well-known to be undecidable, and even not recursively
enumerable, it was widely believed that no sound and complete axiomatisation of strong bisimilarity could exist.
Only in March 2017, Kees Middelburg \cite{Mid17} observed (in the setting of the process
algebra ACP \cite{BW90}) that the above standard proof applies verbatim to arbitrary guarded processes.
His result does not contradict the non-enumerability of the set of true statements $P \bis{} Q$,
due to the fact that RSP is a proof rule with infinitely many premises.

\section{Adding a time-out to CCSP}

The process algebra CCSP, just like CCS, CSP and ACP, is an \emph{untimed} process algebra, meaning
that its semantics abstracts from a quantification of the amounts of time that elapse during or
between the execution of the actions. Untimed process algebras do not deny that the execution of
processes takes time; they abstract from timing merely in not specifying how much time elapses here
and there. An untimed process could therefore be seen as a timed process, in which each occurrence
of a particular period of time is replaced by a nondeterministic choice allowing \emph{any} amount
of time.

From this perspective, one could wonder whether time elapses during the execution of the actions
$\alpha\in Act$, or between the actions, that is, in the states. In some works, notably \cite{GV87},
it is assumed that time happens during the execution of actions. This leads to a rejection of
equations like $a \| b = a.b+b.a$,\footnote{{\We} abbreviate $a.0$ by $a$ and $\spar{\emptyset}$ by $\|$.
Moreover, $+$ binds weaker than $a.\_\!\_\,$, and $\spar{S}$ binds weakest of all.}
which are fundamental for \emph{interleaving semantics} \cite{Mi90ccs,BHR84,Ho85,BW90,OH86},
for the left-hand side allows an overlap in time of the actions $a$ and $b$, whereas the right-hand
side does not.
The default opinion, however, is that the execution of actions is instantaneous, so that time must elapse between the actions.
The following formulation of this assumption is taken from Hoare \cite[Page 24]{Ho85}:
\begin{quote}\small
     The actual occurrence of each event in the life of an object should be regarded as an
     instantaneous or an atomic action without duration. Extended or time-consuming actions should
     be represented by a pair of events, the first denoting its start and the second denoting its
     finish.
\end{quote}
In \cite{GV97} it is pointed out that simply replacing each occurrence of a durational action $\alpha$ by a
sequence $\alpha^+.\alpha^-$ of two instantaneous actions, one denoting the start and one the end of $\alpha$,
is insufficient to adequately capture the durational nature of actions---it even makes a difference
whether actions are split into two or into three parts. The problem arises from the possibility that
two actions $\alpha$ may occur independently in parallel, and the above splitting allows for the
possibility that the end of one such action is confused with the end of the other.
A possible solution is to model a durational action $\alpha$ as a process $\sum_{i\in I}\alpha^+_i.\alpha^-_i$,
where each start action $\alpha^+$ is equipped with a tag $i\mathbin\in I$ that has to be matched by the
corresponding end action $\alpha^-$. When the choice $I$ of tags is large enough, this suffices to
model durational actions in terms of instantaneous actions and durational states. It is for that
reason that in the present paper {\we} focus on time elapsing in states only.

Let $s$ be the state between the $a$- and $\tau$-transitions in the process $a.\tau.P$.
Assume that time may elapse in state $s$, in the sense that after the execution of $a$ and
before the execution of $\tau$, the modelled system, for some positive amount of time, does nothing whatsoever.
Then it is hard to imagine what could possibly trigger the system to resume activity after this amount of time.
A system that does nothing whatsoever may be regarded as dead, and remains dead.

The philosophy of Milner \cite{Mi90ccs} appears to be that once a system reaches state $s$, it will
proceed with the $\tau$-transition immediately. However, once it reaches a state $\sum_{i\in I}a_i.P_i$,
where the $a_i$ are visible actions, it remains idle until the environment of the system triggers
the execution of one of the actions $a_i$. The system is \emph{reactive}, in the sense that visible
actions occur only in reaction to stimuli from the environment. The latter could be modelled as the
environment being a user of the system that may press buttons labelled $a$, for $a\mathbin\in A$.
In a state where the system is not able to engage in an $a$-transition, the user may exercise
pressure on the $a$ button, but it will not go down.
In this philosophy, all time that elapses is due to the system waiting on its environment.

An unsatisfactory aspect of the above philosophy is that there is an asymmetry between a system and
its environment. The latter can spontaneously cause delays, but the former may delay only in 
reaction to the latter. When we would model the environment also as a closed process algebraic
expression, and put it in parallel with the system, in such a way that no action depends on triggers
from outside the system/environment composition, then the resulting composition performs all its
actions instantaneously, possibly until it reaches a deadlock, from which it cannot recover.

The present paper offers an alternative account on the passage of time, with a symmetric view on
systems and their environments. In fact, it tries to be as
close as possible to the philosophy of Milner described above, but adds the expressiveness to model
delays caused by a system rather than its environment. Here there are two overriding design
decisions. First, amounts of time will not be quantified, for {\we} aim to remain in the realm of
untimed process algebras. Second, the assumption of \emph{progress} \cite{GH19} needs to be maintained.
Progress says that from state $s$ above one will sooner or later reach the process $P$, i.e., an
infinite amount of waiting is ruled out. This assumption is essential for the formulation and
verification of liveness properties, saying that a system will reach some goal eventually \cite{GH19}.

Following \cite{vG01,vG93}, {\we} explicitly allow the environment of a system to exercise pressure on
multiple buttons at the same time. The actions for which, at a given time, the environment is
exercising pressure, are \emph{allowed} at this time, whereas the others are \emph{blocked}.
Instead of a model with buttons, one may imagine a model with switches for each visible action, where
the environment, at discrete points in time, can toggle the switches of any action between \emph{allowed}
and \emph{blocked}. In particular the occurrence of a visible action may trigger the environment to
rearrange it switches.

To model time elapsing in a state, {\we} assume that the state may be equipped with one of more time-outs.
A \emph{time-out} is a kind of clock that performs some internal activity, from which {\we} abstract,
and at the end of this activity emits a signal that causes a state-change.
The time-out always runs for an unspecified, but positive and finite amount of time.
For each time-out of a state there is an outgoing transition labelled $\rt$ that models this state-change.
A $\rt$-transition is not observable by the environment. So, apart from the associated time-out, it
is similar to a $\tau$-transition.

The state $a.P + b.Q + \rt.R$, for instance, has one outgoing $\rt$-transition, corresponding with
one time-out. If, upon reaching this state, the environment is allowing one of $a$ or $b$, this action will
happen immediately, reaching the follow-up state $P$ or $Q$; in case the environment allows both $a$
and $b$, the choice between them is made nondeterministically, i.e., by means that are not part of
{\our} model. If, upon reaching the state $a.P + b.Q + \rt.R$, the environment is not allowing $a$
or $b$, the system idles, until either the environment changes and allows $a$ or $b$ after
all, or the time-out goes off, triggering the $\rt$-transition.

A state like $a.P + \rt.Q + \rt.R$ simply has two time-outs; which one goes off first is not predetermined.
If the transition to $R$ occurs, the other time-out is simply discarded.
A state like $a.P + b.Q$ has no time-outs. Upon reaching this state the system idles until the
environment allows $a$ or $b$ to happen.

Note that in $a.P + b.Q + \rt.R$ the action $\rt$ should not be seen as a durational $\tau$-action,
for that would suggest that as soon as that action starts, one has lost the option to perform $a$ or $b$.
Instead, the options $a$ and $b$ remain open until the instantaneous transition to $R$ occurs.

\subsection*{The expressive power of time-outs}

To model that a process like $a.\tau.P$ simply spends some time in the state $s$ right before the
$\tau$-action, one could write $a.\rt.\tau.P$. Furthermore, a durational action $\alpha$,
contemplated above, could be modelled as $\sum_{i\in I}\alpha^+_i.\rt.\alpha^-_i$.
Both uses do not involve placing a process $\rt.P$ in a $+$-context.
The latter is useful for modelling a simple priority mechanism.
Suppose one wants to model a process like $a.P + b.Q$, which can react on $a$- and $b$-stimuli from
the environment. However one has a preference for $a.P$, in the sense that if the environment allows
both possibilities, $a.P$ should be chosen. This can be modelled as
$a.P + \rt.b.Q$ or $a.P + \rt.(a.P + b.Q)$. In an environment allowing both $a$ and $b$, this system
will perform $a.P$. But in an environment just allowing $b$, first some idling occurs, and then $b.Q$.

\subsection*{Three crucial laws}

The most fundamental law concerning time-outs, which could be added as an axiom to \tab{axioms CCSP}, is
\begin{equation}\label{1}
  \tau.P + \rt.Q = \tau.P \;.
\end{equation}
It says that a choice between $\tau.P$ and $\rt.Q$ will always be resolved in favour of $\tau.P$,
because the $\tau$-action is not contingent on the environment and thus can happen immediately,
before the time-out associated with $\rt.Q$ has a chance to go off.

A process $\rt.(\tau.P + b.Q)$ will, after waiting for some time, make a choice between $\tau.P$ and $b.Q$.
In case the environment allows $b$, this choice is nondeterministic. In case the environment blocks $b$ at the
time the time-out associated with the $\rt$-transition goes off, the system will immediately take the
$\tau.P$ branch, thereby discarding $b.Q$. Henceforth {\we} will substitute $\tau_{\{b\}}(P)$ for $P$,
to indicate that this is a process that cannot perform a $b$-action. Now consider two such processes placed in parallel,
synchronising on the action $b$. Each of these processes acts as the environment in which the other
will be placed. In this paper {\we} assume that here the synchronisation on $b$  can \emph{not} occur:
\begin{equation}\label{2}
  \rt.(\tau.\tau_{\{b\}}(P) + b.Q)  \spar{\{b\}} \rt.(\tau. \tau_{\{b\}}(S)+ b.T) =
  \rt.\tau.\tau_{\{b\}}(P)\spar{\{b\}} \rt.\tau.\tau_{\{b\}}(S)\;.
\end{equation}
Intuitively, the reason is that the time-outs on the left and on the right go off at random points
in real time. {\We} explicitly assume no causal link of any kind between the relative timing of these events.
Hence the probability that the two time-outs go off at the very same moment is $0$, and this
possibility can be discarded. So either the left-hand side in the parallel composition reaches
the state $L:=\tau.\tau_{\{b\}}(P) + b.Q$ before the right-hand process reaches the state
$R:=\tau.\tau_{\{b\}}(S) + b.T$, or vice versa.
For reasons of symmetry {\we} only consider the former case.
When the state $L$ is reached, the environment of that process is not
yet ready to synchronise on $b$, so $\tau.\tau_{\{b\}}(P)$ will happen instead. This forces the right-hand side,
when reaching the state $R$, to choose $\tau.\tau_{\{b\}}(S)$, for $\tau_{\{b\}}(P)$ cannot synchronise on $b$.

Algebraically, (\ref{2}) can be derived from (\ref{1}) and the axioms of \tab{axioms CCSP}, when
taking for granted the obvious identity $\tau_{\{b\}}(x) \spar{\{b\}} \rt.(y+b.z) = \tau_{\{b\}}(x)\spar{\{b\}} \rt.y$\,
(whose closed instances with guarded recursion are derivable from the axioms of \tab{axioms CCSP}):
\[\begin{array}{l}
\rt.L  \spar{\{b\}} \rt.R \stackrel{\mbox{\scriptsize (Expansion Law)}}= \\
\rt.\left(L  \spar{\{b\}} \rt.R\right)
    + \rt.\left(\rt.L\spar{\{b\}}R\right)  \stackrel{\mbox{\scriptsize (Expansion Law)}}= \\
\rt.\left(\tau.\left(\tau_{\{b\}}(P)\spar{\{b\}} \rt.R\right)
    + \rt.\left(L  \spar{\{b\}}R\right)\right)
    + \rt.\left(\rt.L\spar{\{b\}}R\right) \stackrel{\mbox{\scriptsize (\ref{1})}}= \\
\rt.\left(\tau.\left(\tau_{\{b\}}(P)\spar{\{b\}} \rt.R\right)
    \right)
    + \rt.\left(\rt.L\spar{\{b\}}R\right) \stackrel{\mbox{\scriptsize (obvious identity; same reasoning on the right)}}= \\
\rt.\left(\tau.\left(\tau_{\{b\}}(P)\spar{\{b\}} \rt.\tau.\tau_{\{b\}}(S)\right) \right)
+ \rt.\left(\tau.\left(\rt.\tau.\tau_{\{b\}}(P)\spar{\{b\}} \tau_{\{b\}}(S)\right) \right) \stackrel{\mbox{\scriptsize (same way back)}}= \\
  \rt.\tau.\tau_{\{b\}}(P)\spar{\{b\}} \rt.\tau.\tau_{\{b\}}(S)\;.
\end{array}\]
To reject (\ref{2}) one would need to take into account seriously the possibility that the two
time-outs take exactly equally long. This corresponds to the addition of the rule
\[\displaystyle\frac{x \goesto{\rt} x'\quad y \goesto{\rt} y'}{x\spar{S} y \goesto{\rt} x'\spar{S} y'}\]
to the structural operational semantics of CCSP$_\rt$, and the summand
$$\mbox{} + \sum_{i\in I,~j\in J,~ \alpha_i=\beta_j=\rt} \rt.(P_i \spar{S} Q_j)$$
to the expansion law of CCSP$_\rt$.
While this could very well lead to an interesting alternative treatment of timeouts,
{\we} will not pursue this option here.

Another law, that could also be added as an axiom to \tab{axioms CCSP}, is
\begin{equation}\label{3}
  a.P + \rt.(Q + \tau.R + a.S) = a.P + \rt.(Q + \tau.R) \;.
\end{equation}
When the process $a.P + \rt.(Q + \tau.R + a.S)$ runs in an environment that allows $a$, the summand 
$\rt.(Q + \tau.R + a.S)$ will not be taken, so it does not matter if there is a summand $a.S$ after
the $\rt$-action. And when this process runs in an environment that keeps blocking $a$ long enough,
the summand  $a.S$ will not be taken. So the only way to execute $a.S$ is in an environment that
initially blocks $a$, but then makes a change to allow $a$.
In this case we need to consider two independently chosen points in time: time $t_e$ where the
environment starts allowing $a$, and time $t_s$ where the time-out goes off that makes the system
take the (unobservable) $\rt$-transition to the state $Q + \tau.R + a.S$. Following the same
reasoning as for law (\ref{2}), the probability that $t_e = t_s$ is $0$, so this possibility can
be ignored. In case $t_e$ occurs before $t_s$, the system will never reach the state $Q + \tau.R + a.S$.
In case $t_e$ occurs after $t_s$, the system will, in state $Q + \tau.R + a.S$, choose $\tau.R$ (or
a branch from $Q$) over $a.S$. In either case, the summand $a.S$ can just as well be omitted.

\subsection*{Time guardedness}

CCSP$_\rt$ allows expressions such as $\rec{X| X=\tau.X}$ or $\rec{X| X=a.X}$ where infinitely many
actions could happen instantaneously. When this is felt as a drawback, one could impose syntactic
restrictions on the language to rule out such behaviour. The following is an example of such a restriction.

\begin{definition}{time guarded}
Given a recursive specification $\RS$, write $X \goesto{t u} Y$, for variables $X,Y\in V_\RS$, if $Y$
occurs in expression $\RS_X$ outside of all subexpressions $\rt.E$ of $\RS_X$.
$\RS$ is \emph{time guarded} iff there is no infinite chain $X_0 \mathbin{\goesto{t u}} X_1 \mathbin{\goesto{t u}} \dots$.
A process is \emph{time guarded} if all recursive specifications called by it are time guarded.
\end{definition}
\noindent
Clearly, the time guarded CCSP$_\rt$ expressions, a subset of the guarded CCSP$_\rt$ expressions,
allow only finitely many actions to occur between any two time-out transitions.

\section{Trace and failures equivalence fail to be congruences}

The paper \cite{vG01} presents a spectrum of semantic equivalences on processes encountered in the literature.
It only deals with \emph{concrete} processes, not featuring the internal action $\tau$.
Of course, the time-out action $\rt$ does not occur in \cite{vG01} either.
Figure~\ref{spectrum} reproduces the spectrum of \cite[Figure~9]{vG01}, while omitting
those semantic equivalences that fail to be congruences for the
$\tau$-free fragment of CCSP---those were coloured red in \cite[Figure~9]{vG01}.%
\begin{figure}[ht]
\input{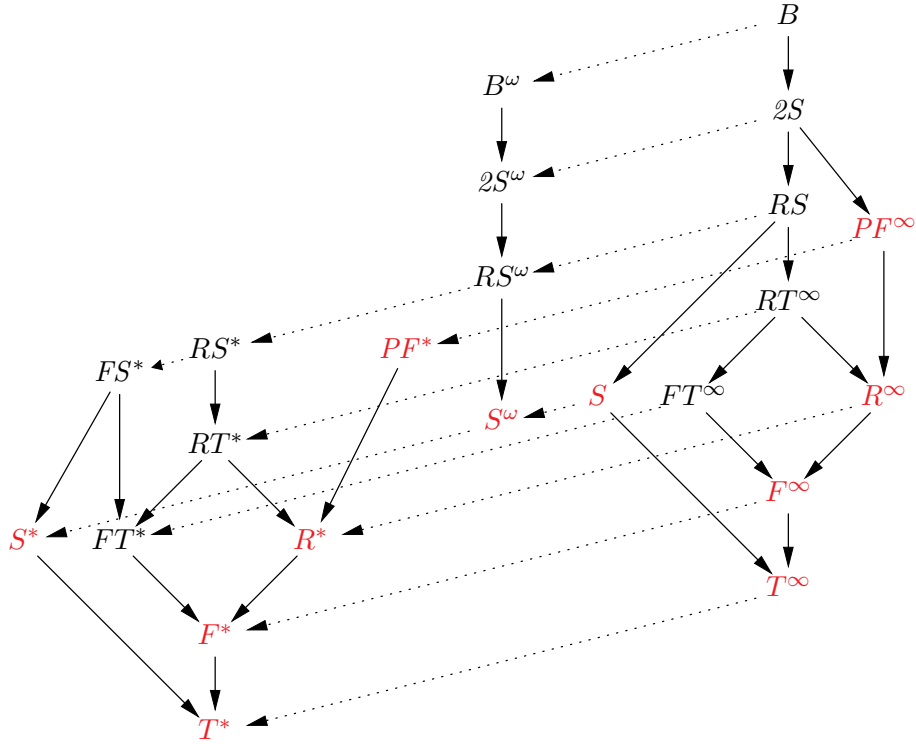}
\centerline{\box\graph}
\caption{The linear time -- branching time spectrum \cite{vG01}}
\label{spectrum}
\end{figure}

(Strong) bisimilarity ($B$) \cite{Mi90ccs,Pa81}, cf.\ \df{bisimulation}, is the finest semantics of
Figure~\ref{spectrum}, i.e., making the fewest identifications between processes, or yielding the smallest equivalence classes.
It sits at the \emph{branching time} side of the spectrum, as it distinguishes processes based on
the timing of choices between different execution sequences, \eg, $a.(b.c+b.d) \neq a.b.c + a.b.d$.

(Partial) trace semantics ($T^*$) \cite{Ho80} is the coarsest semantics of Figure~\ref{spectrum}, i.e., making
the most identifications. It sits at the \emph{linear time} side of the spectrum, as it abstracts
a process into the sequences of action occurrences that can be observed during its runs, its \emph{traces},
and identifies two processes when they have the same traces.

Failures semantics ($F^*$) is the default semantics of the process algebra CSP \cite{BHR84,Ho85}.
When restricting to concrete processes, its falls neatly between $T^*$ and $B$.
Similar to bisimulation semantics, failures semantics rejects the equation $a.(b+c) = a.b + a.c$ that
holds in trace semantics. The main reason for doing so is that in an environment where action $c$
remains blocked, the left-hand side will surely do the $b$-action, whereas the right-hand side may
deadlock instead. However, no such argument distinguishes $a.(b.c+b.d)$ and $a.b.c + a.b.d$, and
hence these processes are identified in failures semantics.

Failure trace semantics ($FT^*$) was proposed in \cite{vG01}, and first presented in \cite{Ba86},
as the natural completion of the square suggested by failures, readiness \cite{OH86} and ready trace semantics \cite{BBK87b}.
For finitely branching processes it coincides with \emph{refusal semantics}, introduced by Phillips in \cite{Ph87}.
Like failure semantics it rejects the identity $a.(b+c) = a.b + a.c$, but accepts $a.(b.c+b.d) = a.b.c + a.b.d$.
The standard example of two processes that are failures equivalent but not failure trace equivalent
is displayed in Figure~\ref{RvsFT}.
\begin{figure}[ht]
\input{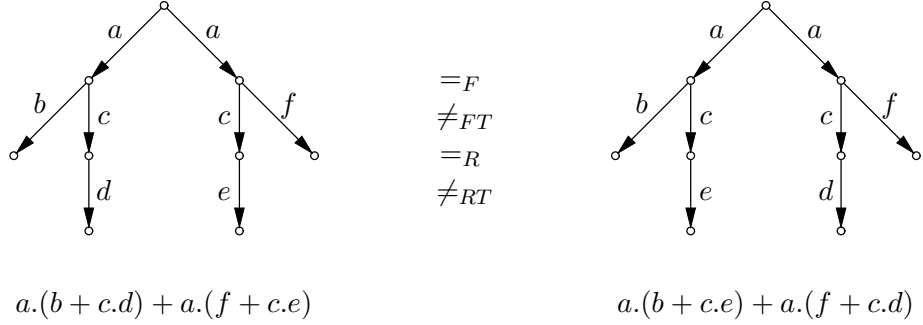}
\centerline{\raise 1em\box\graph}
\caption{Failures and ready equivalent, but not failure trace or ready trace equivalent\label{RvsFT}~\cite{vG01}}
\end{figure}

The following argument shows that neither trace nor failures equivalence are congruences for the
operators of CCSP$_\rt$. That is, there are failures equivalent processes $P$ and $Q$ (that are
therefore certainly trace equivalent) and a recursion-free CCSP$_\rt$-context
$\mathcal{C}[\_\!\_\,]$ such that  $\mathcal{C}[P]$ and $\mathcal{C}[P]$ are trace inequivalent.

\begin{example}{congruence}
Take $P:= a.(b+c.d) + a.(f+c.e)$ and $Q := a.(b+c.e) + a.(f+c.d)$.
\\
Use the context $\mathcal{C}[\_\!\_\,] := \tau_{\{a,b,c\}}(a.(b+\rt.c) \spar{\{a,b,c,f\}} \, \_\!\_ \,)$.
Then only $\mathcal{C}[Q]$ can ever perform the action~$d$.
Namely, the context, enforcing synchronisation on the actions $a$, $b$, $c$ and $f$, implements a priority
of $b$ over $c$ (while blocking $f$ altogether); whenever its argument process $P$ or $Q$ faces a choice between a $b$- and a
$c$-transition, within this context it must take the $b$-transition. This explains why
$\mathcal{C}[P]$ can never reach the action $d$. The abstraction operator $\tau_{\{a,b,c\}}$ is not
redundant, although $\tau_{\{b\}}$ would have worked as well. Without such an operator, the process 
$\mathcal{C}[P]$ might run in an environment in which $b$ is blocked, and here the $d$ could occur
after all.
\end{example}
\noindent
Interestingly, I could make this argument without even providing definitions of trace and failures
equivalence on LTSs that feature time-out transitions. The only two things we need to know about such
equivalences are that they must (a) distinguish a process that can never do the action $d$ from one that can,
and (b) identify the two processes of Figure~\ref{RvsFT}. Property (b) is satisfied when merely
requiring that on concrete process (without $\tau$ and $\rt$) these equivalences agree with the definitions in \cite{vG01}.
The argument is thus also independent of the question how these equivalences are extended to a setting
with the hidden action $\tau$.

The same example, with its above analysis, also shows that ready equivalence ($R^*$) fails to be a
congruence for CCSP$_\rt$, for it also identifies the processes $P$ and $Q$.

In Figure~\ref{spectrum} I have coloured red all those semantics that can be seen to fail the
congruence requirement by a similar argument. For the infinitary versions of trace, failures and
readiness semantics ($T^\infty$, $F^\infty$ and $R^\infty$), the same example applies.
For possible-futures equivalence ($PW^*$ and $PW^\infty$) one takes as witness the two processes of
\cite[Counterexample~7]{vG01}, in combination with the obvious context that gives priority of $b$
over $a$ as well as $c$. Simulation equivalence ($S$) and its finitary variants ($S^*$ and $S^\omega$)
make the identification $a.b + a = a.b$ \cite[Counterexample~2]{vG01}. Using this, the following
argument shows that these semantics also fail to be congruences for CCSP$_\rt$.
\begin{example}{simulation}
Take $P:= a.b + a$ and $Q=a.b$, and use the context
$\mathcal{C}[\_\!\_\,] := \tau_{\{a,b\}}(a.(b+ \rt.d) \spar{\{a,b\}} \, \_\!\_ \,)$.
Then only $\mathcal{C}[P]$ can ever perform the action $d$.
Here $d$ can be done by the context only, rather than by the processes $P$ or $Q$.
However, after synchronising on $a$, the context will proceed towards $d$ only when the environment of
$b+ \rt.d$, which is the process $P$ or $Q$ synchronising on $b$, blocks the $b$-transition.
This is possible for $P$ but not for $Q$.
In this example the possibility to deadlock (or the impossibility to perform a certain action)
is made observable through a context.
\end{example}
\noindent
It follows that failure trace semantics is in fact the coarsest semantics reviewed in \cite{vG01}
that could be a congruence for the operators of CCSP$_\rt$. By the arguments in Section~\ref{testing scenarios}
it also is the finest semantics with a convincing testing scenario, in the sense that all finer
semantics make distinctions that are hard to justify. Together, this tells us that failure trace
semantics may be the only reasonable semantics for CCSP$_\rt$.

The next section proposes an extension of the definition of failure trace semantics (in \cite{vG01}
only given for concrete processes, without the special actions $\tau$ and $\rt$) to arbitrary
LTSs featuring $\tau$ and $\rt$, and thereby to CCSP$_\rt$.
Of course on LTSs with $\tau$ but without $\rt$ it coincides with the definition given in \cite{vG93}.

Then Section~\ref{congruence} shows that, when restricting the choice operator $+$ of CCSP$_\rt$ to
guarded choice, failure trace equivalence is in fact a congruence for the operators of CCSP$_\rt$.
To obtain a congruence for the $+$, a \emph{rooted} version of failure trace equivalence is defined that
records more information on the behaviour of process around their initial states.
This situation is the same as for weak bisimilarity \cite{Mi90ccs} and virtually all other semantic
equivalences that partially abstract from the hidden action $\tau$.

In Section~\ref{coarsest} {\we} then show that for any two failure trace inequivalent processes $P$
and $Q$ one can find a recursion-free CCSP$_\rt$-context
$\mathcal{C}[\_\!\_\,]$ such that  $\mathcal{C}[P]$ and $\mathcal{C}[P]$ are trace inequivalent.
This means that the arguments from Examples~\ref{ex:congruence} and~\ref{ex:simulation} not only
apply to the semantics surveyed in \cite{vG01}, but in fact to \emph{any} semantics that is
incomparable with or coarser than failure trace semantics.

\section{Partial failure trace semantics of CCSP\texorpdfstring{$_\rt$}{}}
\newcommand{\stable}{{\it stable}}

A \emph{finite failure trace} is a sequence $\sigma \in (A \cup \Pow(A))^*$ of actions and sets of actions.
It represents an observation of a system. Each action $a \in A$ occurring in $\sigma$ represents the observation
of the instantaneous occurrence of that action. Each set $X\subseteq A$ occurring in $\sigma$
represents the observation of a period of idling (meaning that no actions occur) during which the
environment allows (i.e., is ready to synchronise with) exactly the actions in $X$.
Such a set is called a \emph{refusal set} \cite{BHR84,vG01}, because it it corresponds with an offer
$X$ from the environment that is refused by the system.
A \emph{partial failure trace} $\sigma \top$ consists of a finite failure trace $\sigma$ followed by
the tag $\top$. It represents a partial observation of a system. The symbol $\top$ indicates the
act, on the part of the observer, of ending the observation. 

\begin{table}[ht]
\caption{Operational failure trace semantics of CCSP$_\rt$}
\label{tab:FT CCSP}
\begin{center}
\framebox{$\begin{array}{c@{\qquad\!}c@{\qquad\!}c}
\top \in \fft(x) &
\displaystyle\frac{x \goesto{a} y \quad \rho \in \fft(y)}{a\rho \in \fft(x)}&
\displaystyle\frac{x \goesto{\tau} y \quad \rho \in \fft(y)}{\rho \in \fft(x)} \\[1.5em]
\displaystyle\frac{\begin{array}{@{}c@{}} x {\ngoesto{\alpha}} \mbox{\footnotesize~for all $\alpha\mathbin\in X \cup\{\tau\}$}\\
      \rho \in \fft(x)\end{array}}{X\rho \in \fft(x)} &
\displaystyle\frac{\begin{array}{@{}c@{}} x {\ngoesto{\alpha}} \mbox{\footnotesize~for all $\alpha\mathbin\in X \cup\{\tau\}$}\\
       x \goesto{\rt} y \quad X\rho \in \fft(y)\end{array}}{X\rho \in \fft(x)} &
\displaystyle\frac{\begin{array}{@{}c@{}} x {\ngoesto{\alpha}} \mbox{\footnotesize~for all $\alpha\mathbin\in X \cup\{\tau\}$}\\
       x \goesto{\rt} y \quad a\rho \in \fft(y)\end{array}}{X a\rho \in \fft(x)} ~(a\mathbin\in X)
\end{array}$}
\end{center}
\end{table}

\tab{FT CCSP} derives the set $\fft(P)$ of partial failure traces of a process $P$.
Here concatenation of sequences is denoted by juxtaposition.
The first rule testifies that it is always possible to end the observation and just record $\top$.
The next two rules say that visible actions are observed, whereas hidden actions are not.
The fourth rule says that, when the environment allows the set of actions $X$, 
idling in a given state $x$ is possible iff from that state no actions $a\in X$, nor the hidden
action $\tau$, can occur. These four rules stem in essence from \cite{vG01,vG93}.
The last two rules deal with the time-out $\rt$. To follow a transition $x \goesto{\rt} y$, the
system must be idling in state $x$ until the time-out occurs. Let $X$ be the set of actions
allowed by the environment at this time. As explained with Law (\ref{3}), {\we} ignore the possibility that
the environment changes the set of allowed actions at the exact same time as the occurrence of the
time-out. So right after the time-out occurs, $X$ is still the set of actions allowed by the environment.
At this time there are two possibilities. Either the system keeps idling, namely when none of the
actions from $X$ can occur in state $y$; or the system performs one of the actions $a\mathbin\in X$
(possibly after some $\tau$-transitions). These two possibilities correspond with the remaining two rules.

\begin{definition}{fft}
Processes $P,Q\mathbin\in \IP$ are \emph{partial failure trace equivalent}, $P\equiv^*_{FT}Q$, iff $\fft(P)\mathbin=\fft(Q)$.
\end{definition}

\subsection*{A path-based characterisation of partial failure traces}

\begin{definition}{path}
  A finite \emph{path} $\pi: P_0 \goesto{\alpha_1} P_1 \goesto{\alpha_2 } \cdots \goesto{\alpha_\ell} P_\ell$
  is an alternating sequence of states/processes and transitions, starting and ending with a state, each
  transition going from the state before it to the state after it. One says it is a path
  \emph{of} $P\in\IP$ if its first state $P_0$ is $P$.

  A process $P\in\IP$ is \emph{stable}, notation $\stable(P)$, if $P{\ngoesto\tau}$, i.e., if it has no outgoing $\tau$-transitions.
  The set $\I(P)$ of \emph{initial visible actions} of a process $P$ is $\{a \in A\mid P{\goesto{a}}\}$,
  provided $P$ is stable.  Here $P{\goesto{a}}$ means that $P \goesto{a} Q$ for some $Q\in\IP$.
  If $P$ is unstable, $\I(P):=\bot$.

  The \emph{concrete ready trace} $\textit{crt}(\pi)$ of a finite path $\pi$ as above is
  $\I(P_0) \alpha_1 \I(P_1) \alpha_2 \cdots \alpha_\ell \I(P_\ell)$.\\
  The (abstract) \emph{ready trace} $\emph{rt}(\pi)$ of $\pi$ is obtained from $\textit{crt}(\pi)$
  by leaving out all occurrences of $\tau$ and $\bot$.
\end{definition}
\noindent
Note that each partial failure trace $\sigma$ of a process $P$ is generated by a path $\pi$ of $P$,
and is in fact completely determined by $\textit{rt}(\pi)$.
Each occurrence of action $a$ in $\sigma$ originates from a transition $\goesto{a}$ occurring in
$\pi$ (by application of the second rule of \tab{FT CCSP}) and each occurrence of a refusal set $X$
in $\sigma$ originates (by application of the fourth or sixth rule) from a stable process $P_i$ in
$\pi$ such that $\I(P_i)\cap X=\emptyset$.
The last symbol $\top$ of $\sigma$ originates from the last state $P_\ell$ of $\pi$ (by application of the first rule).
The mapping from the action occurrences in $\sigma$ to the $A$-labelled transitions in $\pi$ must be bijective,
and any two elements (actions, refusal sets or $\top$) of $\sigma$ occur in the same order in
$\sigma$ as their originators occur in $\pi$. 

Let $\xi$ be the occurrence of an action, refusal set or $\top$ in $\sigma$, originating from $\goesto{\alpha_j}$ or $P_j$.
If $\xi$ is the first element of $\sigma$, the \emph{realm} of $\xi$ is the prefix of $\pi$ ending
in $P_j$. In case $a$ or $Y$ is the symbol in $\sigma$ preceding $\xi$, with $a$ originating from
$\goesto{\alpha_i}$ or $Y$ originating from $P_i$, then $j\geq i$ and the \emph{realm} of $\xi$ is the
subpath $P_i \goesto{\alpha_{i+1} } \cdots \goesto{\alpha_j} P_j$.
Thus, $\sigma$ partitions $\pi$ into realms, with each two adjacent realms having exactly one state in common.

Each transition in the realm of $\top$ must be labelled $\tau$, matching an application of the third
rule of \tab{FT CCSP}. Each transition $\goesto{\alpha_k}$ in the realm of a refusal set $X$ must be
labelled $\rt$ or $\tau$, matching applications of the third and fifth rules; if $\alpha_k\mathbin=\rt$ then
$P_{k-1}$ must be stable and $\I(P_{k-1})\cap X=\emptyset$. Only the first transition in the realm of
an action occurrence $a$ could be labelled $\rt$; if it is then in $\sigma$ action $a$ is preceded
with a refusal set $X$ such that $a\in X$. This matches an application of the sixth rule.

In particular, a path featuring a $\rt$-labelled transition leaving from an unstable state does not
generate any partial failure traces.

\subsection*{System- versus environment-ended periods of idling}

\begin{observation}{doubling}
$\sigma X \rho \in \fft(P) \Leftrightarrow \sigma XX \rho \in \fft(P)$.
\end{observation}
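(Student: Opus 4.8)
The plan is to prove the two implications separately, each by induction on the derivation of its hypothesis via the six rules of \tab{FT CCSP}; the statement is read as universally quantified over all processes $P$, sequences $\sigma,\rho$, and refusal sets $X$. In each induction the case analysis is organised by the leading one or two symbols of the sequence in the hypothesis together with the last rule applied, and the recurring move is: peel off a leading transition or symbol, apply the induction hypothesis to the strictly smaller sub-derivation, and re-apply the same rule.

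For ``$\sigma X\rho\in\fft(P)\Rightarrow\sigma XX\rho\in\fft(P)$'': the last rule of a derivation of $\sigma X\rho\in\fft(P)$ is never the first rule. If $\sigma$ is non-empty and begins with a visible action it is the second or third rule, and if $\sigma$ is non-empty and begins with a refusal set it is the third, fourth, fifth, or sixth rule; all of these go through by the recurring move, where for the third and fifth rules the leading refusal set is carried over unchanged across a $\tau$- resp.\ $\rt$-step, and for the sixth rule one observes that $\sigma$ must have length at least two (that rule outputs a refusal set immediately followed by an action). The only remaining possibility is $\sigma=\epsilon$, so that the $X$ to be doubled is exactly the refusal set introduced by the last rule; this rule is the fourth, fifth, or sixth (the third once more reduces via the induction hypothesis). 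The key point is that each of these three rules has the premise ``$P\ngoesto{\alpha}$ for all $\alpha\in X\cup\{\tau\}$'', which says precisely that $P$ is stable and $\I(P)\cap X=\emptyset$; hence one may simply apply the fourth rule to the hypothesis $X\rho\in\fft(P)$ itself to conclude $XX\rho\in\fft(P)$.

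For the converse ``$\sigma XX\rho\in\fft(P)\Rightarrow\sigma X\rho\in\fft(P)$'' one argues symmetrically on a derivation of $\sigma XX\rho\in\fft(P)$; the leading-action, $\tau$- and $\rt$-step cases, and all cases with $\sigma$ non-empty, are treated exactly as above. When $\sigma=\epsilon$ the last rule deriving $XX\rho\in\fft(P)$ cannot be the sixth, since that rule puts a visible action in second position whereas $XX\rho$ has the refusal set $X$ there; it is therefore the third, fourth, or fifth rule, and for the fourth rule its sole premise is literally the desired $X\rho\in\fft(P)$, while the third and fifth again reduce via the induction hypothesis. Overall the argument is routine: there is no genuine obstacle, and the only thing demanding care is the bookkeeping of the case split on the first one or two symbols of the sequence at hand---which is unsurprising, since the rules of \tab{FT CCSP} were set up precisely so that an idle period carrying a fixed environment offer $X$ may be transparently split into, or coalesced from, two consecutive such periods.
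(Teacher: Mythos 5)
Your proof is correct. The paper states this as an unproved observation, and your induction on the derivation from the rules of \tab{FT CCSP} is exactly the routine argument it leaves implicit; in particular you correctly isolate the only point of substance, namely that in the base case $\sigma=\epsilon$ the fourth, fifth and sixth rules all carry the stability premise ``$P\ngoesto{\alpha}$ for all $\alpha\in X\cup\{\tau\}$'', so that for the $\Rightarrow$ direction rule four can be re-applied to the hypothesis itself to insert the duplicate $X$, while for the $\Leftarrow$ direction rule six is excluded by the shape of $XX\rho$ and rule four's premise is literally the contracted trace.
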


An occurrence of a set $X$ in a partial failure trace $\sigma$ denotes a period $p$ of idling, during
which $X$ is the set of actions allowed by the environment. During period $p$ the system idles
because none of the actions in $X$ is enabled by the system, and none of the others is allowed by
the environment. Such a period can end in exactly two ways:
either by a time-out within the modelled system, or by the environment changing the set of actions it allows.
The latter can be thought of as the occurrence of time-out outside the modelled system.
In this paper {\we} ignore the possibility that a time-out within the system occurs at the exact
same moment as a time-out outside the system. Hence there can never be ambiguity on which party ends
a period of idling.

{\We} will now show how from the shape of a partial failure trace one can deduce which party ends
a given period of idling. If, within $\sigma$, $X$ is followed by
a set $Y\neq X$, the period $p$ is followed by a period $q$ of idling, this time with $Y$ the set
of actions allowed by the environment. The transfer from period $p$ to period $q$ must therefore be
triggered by the environment, namely by changing the set of actions it allows to occur.
If, within $\sigma$, $X$ is followed by an action $a \in X$, the state of the system in which action
$a$ occurs must be different from the state of the system in which $X$ is refused, and action
$a$ not enabled. It follows that period $p$ must be ended by a time-out occurring within the system.
If, within $\sigma$, $X$ is followed by an action $a \notin X$, period $p$ must be ended by the
environment. For if the set of actions allowed by the environment is not changed after period $p$, the
action $a \notin X$ would still not be allowed.

\begin{definition}{system-ended}
An occurrence of $X$ in a partial failure trace $\sigma$ is \emph{system-ended} iff within $\sigma$,
$X$ is followed by an action $a \in X$.
\end{definition}

\section{Congruence properties}\label{congruence}

In this section {\we} show that partial failure trace equivalence is a congruence for the operators
of CCSP$_\rt$, with the exception of the choice operator $+$. {\We} also characterise the coarsest
equivalence included in $\ffteq$ that also is a congruence for $+$.

\begin{definition}{col}
Let $F\subseteq (A \cup \Pow(A))^* \top$, meaning $F$ is a set of partial failure traces with the end tag $\top$.
Then $\col(F)$ is the smallest set containing $F$ such that
$\sigma X X \rho\in \col(F) \Rightarrow \sigma X \rho\in \col(F)$.
\end{definition}

\begin{theorem}{congruence parallel}
$\equiv_{FT}^*$ is a congruence for the parallel composition operators $\spar{S}$.
\end{theorem}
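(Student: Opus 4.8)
The plan is to derive the theorem from a \emph{parallel decomposition lemma}: a description of $\fft(P_1 \spar{S} P_2)$ in terms of $\fft(P_1)$ and $\fft(P_2)$ alone. Concretely, I would define a binary \emph{synchronous merge} $\|_S$ on sets of partial failure traces and prove $\fft(P_1\spar{S}P_2) = \fft(P_1)\mathbin{\|_S}\fft(P_2)$. The theorem is then immediate: if $P_i \ffteq Q_i$ for $i=1,2$ then $\fft(P_1\spar{S}P_2) = \fft(P_1)\mathbin{\|_S}\fft(P_2) = \fft(Q_1)\mathbin{\|_S}\fft(Q_2) = \fft(Q_1\spar{S}Q_2)$.

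The merge is defined on individual traces, yielding a set of possible results, and then applied pointwise to the $\col$-saturations of the two argument sets (so that repeated refusal sets may be freely introduced before aligning). It mimics the operational behaviour of $\spar{S}$: an action $a\notin S$ occurring in one of the traces is interleaved into the result; an action $a\in S$ must occur synchronously in both traces, at matching positions, contributing a single $a$ to the result; the tag $\top$ must be present in both and yields $\top$; and a refusal set $X$ in the result must be \emph{attributed} to an aligned refusal set $X_1$ of the first trace and $X_2$ of the second with $X\setminus S \subseteq X_1 \cap X_2$ and $X\cap S \subseteq X_1 \cup X_2$. This condition is exactly what is needed: since $x\spar{S}y$ is stable iff both $x$ and $y$ are, and then $\I(x\spar{S}y) = ((\I(x)\cup\I(y))\setminus S)\cup(\I(x)\cap\I(y)\cap S)$, one checks that $\I(x)\cap X_1=\emptyset$ together with $\I(y)\cap X_2=\emptyset$ gives $\I(x\spar{S}y)\cap X=\emptyset$, and conversely any $X$ refused by a stable $x\spar{S}y$ can be split as $X_1 := (X\setminus S)\cup\{a\in X\cap S \mid a\notin\I(x)\}$, $X_2 := (X\setminus S)\cup\{a\in X\cap S\mid a\notin\I(y)\}$. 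The role of $\col$ and \obsref{doubling} is to absorb the fact that one idling period of $P_1\spar{S}P_2$ may correspond to several consecutive idling periods in a component (when that component performs an interleaved $a\notin S$ or a $\rt$-step while the other stays idle) and vice versa; $\rt$-steps themselves are treated via the last two rules of \tab{FT CCSP}, noting that a $\rt$-step of $P_1\spar{S}P_2$ always stems from a $\rt$-step of exactly one component across which the common refusal set persists, and that whether the idling period it belongs to is system-ended in the sense of \df{system-ended} is determined by the shape of the merged trace.

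For the inclusion $\fft(P_1\spar{S}P_2) \subseteq \fft(P_1)\mathbin{\|_S}\fft(P_2)$ I would use the path-based characterisation: a generating path $\pi$ of $P_1\spar{S}P_2$ for $\sigma$ projects, by the SOS rules for $\spar{S}$, onto a path $\pi_1$ of $P_1$ and a path $\pi_2$ of $P_2$; reading off $\textit{rt}(\pi_1)$ and $\textit{rt}(\pi_2)$ and distributing each refusal set occurring in $\sigma$ over the two sides as above produces $\sigma_1\in\fft(P_1)$, $\sigma_2\in\fft(P_2)$ whose merge contains $\sigma$ (modulo $\col$). For the reverse inclusion, given generating paths $\pi_1,\pi_2$ for $\sigma_1,\sigma_2$ and a trace $\sigma$ in their merge, I would interleave $\pi_1$ and $\pi_2$ into a path $\pi$ of $P_1\spar{S}P_2$ --- synchronising the $S$-labelled transitions, which is possible precisely because $S$-actions occur synchronously in the two traces --- and verify that $\pi$ generates $\sigma$, using the attribution condition at each recorded refusal set and the appropriate rule of \tab{FT CCSP} at each $\rt$-step. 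I expect the main obstacle to be combinatorial rather than conceptual: pinning down the definition of $\|_S$ (in particular the alignment of idling periods and their \emph{realms} across the two components, and the interaction of this alignment with the collapsing behaviour forced by $\rt$-steps inside $\spar{S}$) so that both inclusions go through uniformly, and in particular ruling out spurious contributions from paths containing a $\rt$-transition out of an unstable composite state, which --- as observed after the path-based characterisation --- generate no partial failure traces at all.
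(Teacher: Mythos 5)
Your overall strategy coincides with the paper's: define a merge operator on sets of partial failure traces, prove the characterisation $\fft(P\spar{S}Q)=\col(\fft(P)\spar{S}\fft(Q))$ by projecting a generating path of the composite onto its components for one inclusion and interleaving component paths for the other, and read the congruence off immediately. (The relaxation of the paper's equalities in \df{decomposition} to inclusions $X\setminus S\subseteq X_1\cap X_2$ and $X\cap S\subseteq X_1\cup X_2$ is harmless: it only strengthens the hypotheses on the components, so soundness is unaffected, and the witnesses constructed for completeness satisfy the equalities anyway.)

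However, your merge as defined is unsound, and what is missing is precisely the delicate part of this construction. Aligning refusal sets subject only to the set-theoretic conditions above captures \emph{which} actions are refused, but says nothing about \emph{which party ends} each idling period. Concretely, take $a\spar{\emptyset}\rt.b$: one has $\{b\}a\top\in\fft(a)$ and $\{b\}b\top\in\fft(\rt.b)$, the two occurrences of $\{b\}$ align (your set conditions are vacuous for $S=\emptyset$), and interleaving $a$ before $b$ puts $\{b\}ab\top$ into your merge; yet $a\spar{\emptyset}\rt.b$ has no such partial failure trace, since after $a$ the residual $0\spar{\emptyset}\rt.b$ can reach $b$ only via a further idle period. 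Likewise $\{a,b\}ab\top$ lands in the merge of $\fft(\rt.a)$ and $\fft(\rt.b)$, although $\rt.a\spar{\emptyset}\rt.b$ does not have this trace: it would require both time-outs to fire within one and the same idle period. The paper's \df{decomposition} therefore imposes two further validity conditions phrased in terms of \emph{system-ended} occurrences of refusal sets (\df{system-ended}): an occurrence $\sigma_i$ is system-ended in $\sigma$ iff at least one of $\sigma_i^{\rm L},\sigma_i^{\rm R}$ is system-ended in its component trace, and never both. Your remark that a $\rt$-step of $P_1\spar{S}P_2$ always stems from a $\rt$-step of exactly one component is the right operational intuition, but it must be encoded as a constraint inside the trace-level definition of the merge --- which is possible, since system-endedness is readable off the shape of a trace --- for otherwise the inclusion $\col(\fft(P)\spar{S}\fft(Q))\subseteq\fft(P\spar{S}Q)$ fails. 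Examples~\ref{ex:consistency}, \ref{ex:consistency2} and~\ref{ex:consistency2S} of the paper are dedicated to exactly this point.
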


\begin{proof}
{\We} need to show that $P\ffteq P'$ and $Q\ffteq Q'$ implies $P\spar{S}Q \ffteq P'\spar{S} Q'$.
This is equivalent to showing that $\fft(P \spar{S}Q)$ is completely determined by $\fft(P)$, $S$
and $\fft(Q)$.

Below I define, for each $S\subseteq A$, a binary operator $\spar{S}$ that takes as arguments 
sets of partial failure traces, and produces again 
a set of partial failure traces. In the appendix---\pr{spar explicit}---I prove that
\begin{equation}\label{spar explicit}
\fft(P \spar{S} Q) = \col(\fft(P) \spar{S} \fft(Q))\;.
\end{equation}
This yields \thrm{congruence parallel}.
\end{proof}
\noindent
In order to define the operator $\spar{S}$, I analyse how a partial failure trace of $P\spar{S}Q$ decomposes into partial failure
traces of $P$ and $Q$.

\begin{definition}{decomposition}
A \emph{decomposition} of a partial failure trace
$\sigma \mathbin= \sigma_1 \sigma_2 \cdots \sigma_n \top \mathbin\in (A \cup \Pow(A))^*\top$
w.r.t.\ a set $S\mathbin\subseteq A$ is a pair $\sigma_{\rm L},\sigma_{\rm R}$ of partial failure
traces, obtained by leaving out all $\star$-elements from sequences
$\sigma^{\rm L} = \sigma^{\rm L}_1 \sigma^{\rm L}_2 \cdots \sigma^{\rm L}_n \top$
and
$\sigma^{\rm R} = \sigma^{\rm R}_1 \sigma^{\rm R}_2 \cdots \sigma^{\rm R}_n \top \mathbin\in (A \cup \{\star\} \cup \Pow(A))^*\top$,
such that, for all $i=1,\dots,n$,
\begin{itemize}
\item if $\sigma_i \in A\setminus S$ then either $\sigma^{\rm L}_i =\sigma_i$ and $\sigma^{\rm R}_i= \star$,
      or $\sigma^{\rm R}_i =\sigma_i$ and $\sigma^{\rm L}_i= \star$,
\item if $\sigma_i \in S$ then $\sigma^{\rm L}_i = \sigma^{\rm R}_i =\sigma_i$, and
\item if $\sigma_i \in\Pow(A)$ then $\sigma^{\rm L}_i  \in\Pow(A)$ and $\sigma^{\rm R}_i  \in\Pow(A)$.
\end{itemize}
This decomposition is \emph{valid} if, for each $i=1,\dots,n$ for which $\sigma_i \in\Pow(A)$, one has
\begin{align}
\label{consistency}
   \hspace{-9pt}\mbox{$\sigma_i$ is system-ended in $\sigma$} \Leftrightarrow
   \big((\mbox{$\sigma_i^{\rm L}$ is system-ended in $\sigma_{\rm L}$}) \!\vee\! (\mbox{$\sigma_i^{\rm R}$ is system-ended in $\sigma_{\rm R}$})\big)\!\!\!
\\
\label{consistency2}
   \neg \big((\mbox{$\sigma_i^{\rm L}$ is system-ended in $\sigma_{\rm L}$}) \!\wedge\! (\mbox{$\sigma_i^{\rm R}$ is system-ended in $\sigma_{\rm R}$})\big)\!\!\!
\\
\label{decomposition}
   \sigma_i^{\rm L} \setminus S = \sigma_i \setminus S = \sigma_i^{\rm R} \setminus S \qquad\mbox{and}\qquad
   (\sigma_i^{\rm L} \cap S) \cup (\sigma_i^{\rm R} \cap S) = \sigma_i \cap S \;.
\end{align}
For $F,G \subseteq (A \cup \Pow(A))^* \top$ sets of partial failure traces, let
$F\spar{S}G \subseteq (A \cup \Pow(A))^* \top$ be the set of partial failure traces
$\sigma$, such that for some valid decomposition one has $\sigma_{\rm L} \in F$ and $\sigma_{\rm R} \in G$.
\end{definition}
\noindent
\df{decomposition} says that in any partial failure trace $\sigma\in\fft(P \spar{S}Q)$, each occurrence of an action $a\notin S$ in
$\sigma$ must stem from either $P$ or $Q$, whereas each occurrence of an action $a\in S$ must stem
from both.\linebreak[2] A set $X$ occurring in $\sigma$ denotes a period of idling; necessarily both $P$ and $Q$
idle during this period.
I justify (\ref{consistency})--(\ref{decomposition}) informally below; formally, these requirements
are fully justified by the fact that they allow me to prove (\ref{spar explicit}).

In a parallel composition $P\spar{S}Q$ there are three parties that can block actions, or end idle periods: $P$, $Q$
and the global environment $\mathcal{E}$ of the composition. The environment of $P$ consists of $Q$ and $\mathcal{E}$.

An idle period of $P\spar{S}Q$ ends through a single time-out. If this timeout occurs
in $P\spar{S}Q$ itself, rather then in its environment, it must occur in exactly one of the components $P$ and $Q$.
And if the time-out stems from the environment it occurs in neither component.
This explains conditions (\ref{consistency}) and (\ref{consistency2}).

An action $a\notin S$ can, from the perspective of one component, not be
blocked by the other component. So it is blocked by the environment of that component iff it is
blocked by the environment $\mathcal{E}$ of the parallel composition. Hence
\mbox{$a\in \sigma_i^{\rm L} \Leftrightarrow a\in \sigma_i \Leftrightarrow a\in \sigma_i^{\rm R}$}.
This gives the first formula of (\ref{decomposition}).

An action $a\in S$ requires cooperation of all three parties, $P$, $Q$ and $\mathcal{E}$,
so it can be blocked by any of them.
If such an action is in $\sigma_i^{\rm L}$ (resp.\ $\sigma_i^{\rm R}$), it is not blocked by the environment of $P$
(resp.\ $Q$), and thus certainly not by $\mathcal{E}$.
This gives direction $\subseteq$ of the second formula of (\ref{decomposition}).

To justify the other direction, note that each set $X$ occurring in a partial failure trace of a
process $R$ corresponds with path $R_0 \goesto{\alpha_1} R_1 \goesto{\alpha_2} \cdots \goesto{\alpha_n} R_n$
(its realm),
where $R_0$ is reachable from $R$, such that (i) all $\alpha_j$ are either $\tau$ or $\rt$, and (ii)
for each $j\in\{0,\dots,n{-}1\}$ such that $\alpha_{j+1}=\rt$, and also for $j=n$, one has $R_j \ngoesto{\beta}$
for all $\beta\in X \cup\{\tau\}$. In the special case that $R=P\spar{S} Q$, and where $X (= \sigma_i)$ is
decomposed into $X_{\rm L} (=\sigma_i^{\rm L})$ and $X_{\rm R}(=\sigma_i^{\rm R})$, each such $R_j$ has the
form $P_j\spar{S}Q_j$, with $P_j \ngoesto{\beta}$ for $\beta\in X_{\rm L} \cup\{\tau\}$ and 
$Q_j \ngoesto{\beta}$ for $\beta\in X_{\rm R} \cup\{\tau\}$. Since, for $a\in S$, $P_j\spar{S} Q_j\ngoesto{a}$
iff either $P_j\ngoesto{a}$ or $Q_j\ngoesto{a}$, one can always choose the decomposition of $X$ into
$X_{\rm L}$ and $X_{\rm R}$ in such a way that $a\in X$ implies $a \in X_{\rm L} \vee a \in X_{\rm R}$.
This gives direction $\supseteq$ of the second formula of (\ref{decomposition}).

There is one circumstance, however, where the above argument breaks down, namely if different
choices of $j\in \{0,\dots,n\}$ give rise to a different decomposition of $X$ into $X_{\rm L}$ and $X_{\rm R}$.
However, in such a case one can simply write the occurrence of $X$ as $XX\dots X$, such that now
each occurrence of $X$ has a unique decomposition. The original failure trace is then recovered by
the closure operator $\col$ in (\ref{spar explicit}).

The following example shows the necessity of Condition~(\ref{consistency}).

\begin{example}{consistency}
  Consider the process $a \spar{\emptyset} \rt.b$.
  Here $\{b\} a \top \in \fft(a)$ and $\{b\}b\top \in \fft(\rt.b)$.
  So without Condition~(\ref{consistency}) one would obtain $\{b\}ab\top \in\fft(a \spar{\emptyset} \rt.b)$.
  Yet $a \spar{\emptyset} \rt.b$ has no such partial failure trace.
\end{example}
\noindent
Two similar examples, without and with synchronisation, show the necessity of Condition~(\ref{consistency2}).

\begin{example}{consistency2}
  Consider the process $\rt.a \spar{\emptyset} \rt.b$.
  Here $\{a,b\} a \top \in \fft(\rt.a)$ and $\{a,b\}b\top \in \fft(\rt.b)$.
  So without Condition~(\ref{consistency2}) one would obtain $\{a,b\}ab\top \in\fft(\rt.a \spar{\emptyset} \rt.b)$.
  Yet $\rt.a \spar{\emptyset} \rt.b$ has no such partial failure trace.
\end{example}

\begin{example}{consistency2S}
  Consider the process $\rt.(\tau + b) \spar{\{b\}} \rt.(\tau + b)$.
  Here $\{b\} b \top \in \fft(\rt.(\tau + b))$.
  So without Condition~(\ref{consistency2}) one would obtain $\{b\}b\top \in\fft(\rt.(\tau + b) \spar{\{b\}} \rt.(\tau + b))$.
  Yet, as explained at (\ref{2}), $\rt.(\tau + b) \spar{\{b\}} \rt.(\tau + b)$ has no such partial failure trace.
\end{example}
\noindent
The next example shows that the operation $\col$ in (\ref{spar explicit}) cannot be omitted.

\begin{example}{collapse required}
One has $\sigma:=\{b\}\{b\}a\top \in\fft(b+\rt.a)\spar{\{a,b\}}\fft(\rt.(b+\rt.a))$, taking
$\sigma_{\rm L}:= \emptyset\{b\}a\top$ and $\sigma_{\rm R}:= \{b\}\emptyset a\top$. Namely\hfill
\hfill $\top \in \fft(0)$, \hfill $a\top\in\fft(a)$,
\hfill $\{b\}a\top\in\fft(a)$, \hfill $\emptyset\{b\}a\top\in\fft(a)$, \hfill so \hfill $\emptyset\{b\}a\top\in\fft(b+\rt.a)$,
\hfill and \hfill
$\emptyset a\top\in\fft(a)$, \hfill $\emptyset a\top\in\fft(b+\rt.a)$,
\hfill $\emptyset a\top \in\fft(\rt.(b+\rt.a))$, \mbox{} so
\mbox{} $\{b\}\emptyset a\top \in\fft(\rt.(b+\rt.a))$.\\
Yet $\{b\}a\top \notin\fft(b+\rt.a)\spar{\{a,b\}}\fft(\rt.(b+\rt.a))$.
\end{example}

\begin{theorem}{congruence abstraction}
$\equiv_{FT}^*$ is a congruence for the abstraction operators $\tau_I$.
\end{theorem}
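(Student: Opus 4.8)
Following the blueprint of \thrm{congruence parallel}, I would reduce the claim to showing that $\fft(\tau_I(P))$ is completely determined by $\fft(P)$ and $I$. To that end I would define a unary operation on sets of partial failure traces --- again written $\tau_I$ --- and prove, deferring the details to the appendix as for $\spar{S}$, an identity of the shape
$$\fft(\tau_I(P)) = \col\big(\tau_I(\fft(P))\big) .$$
Since $\tau_I$ and $\col$ are then operations on sets of partial failure traces, this gives $\fft(P)=\fft(P') \Rightarrow \fft(\tau_I(P))=\fft(\tau_I(P'))$, which is \thrm{congruence abstraction}.

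The operation $\tau_I$ on trace sets is read off --- as in the parallel case --- from an analysis, based on \df{path}, of how a partial failure trace of $\tau_I(P)$ relates to one of $P$. Two facts drive it: a path of $\tau_I(P)$ is exactly a path of $P$ with every $I$-labelled transition relabelled into $\tau$; and $\tau_I(R)$ is stable iff $R$ is stable with $\I(R)\cap I=\emptyset$, in which case $\I(\tau_I(R))=\I(R)$. From this: \emph{(i)} every occurrence of an action $a\in I$ in a failure trace of $P$ is erased, having become an invisible $\tau$; \emph{(ii)} a refusal set $X$ occurring in a failure trace of $P$ can survive into $\fft(\tau_I(P))$ only at states that in addition refuse all of $I$, so at the trace level I would restrict to the failure traces of $P$ all of whose refusal sets contain $I$ --- this is the trace surrogate for ``$\tau_I(R)$ is stable'', and it is harmless since any state $R$ responsible for such a refusal set $X$ has $\I(R)\cap X=\emptyset$, hence $\I(R)\cap I=\emptyset$; and \emph{(iii)} a single idling period of $\tau_I(P)$ can straddle several $\rt$-transitions, each followed by a now-hidden $I$-action, and such a period corresponds in the $P$-trace --- forced through the sixth rule of \tab{FT CCSP} --- to a chain of consecutive refusal-set occurrences separated by those $I$-actions, so that after erasing the $I$-actions one is left with a repetition of the refusal set, which only $\col$ reduces to a single occurrence, exactly as happened for $\spar{S}$. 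Thus $\tau_I(F)$ would be, roughly, the set of traces obtained from some $\sigma\in F$ all of whose refusal sets contain $I$ by deleting every $I$-action and then freely adding or deleting actions of $I$ within the surviving refusal sets, which $\fft(\tau_I(P))$ cannot observe.

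The substance of the proof is the appendix verification of the identity, by double inclusion via \df{path}. For $\supseteq$: given such a $\sigma\in\fft(P)$ with a generating path, relabel its $I$-transitions into $\tau$ to get a path of $\tau_I(P)$, and check that it generates $\tau_I(\sigma)$; then \obsref{doubling} --- which states that $\fft$ of any process is closed under the rewriting $\sigma XX\rho \leftrightarrow \sigma X\rho$, hence under $\col$ --- absorbs the outer $\col$. For $\subseteq$: lift a generating path of $\rho\in\fft(\tau_I(P))$ to a path of $P$, enlarge the refusal sets at its now-stable idling states so that they contain $I$, reinsert the hidden $I$-actions, and read off an admissible $\sigma\in\fft(P)$ whose image equals $\rho$ up to $\col$.

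I expect the main obstacle to be calibrating the side conditions on the trace-level $\tau_I$ tightly enough that the identity holds on the nose, much as conditions (\ref{consistency})--(\ref{decomposition}) had to be tuned for $\spar{S}$. The delicate point is the interplay of a hidden $I$-action with an adjacent $\rt$-transition: in a failure trace of $P$, a refusal set whose idling period is ended by such a $\rt$ is forced by the sixth rule to be followed by the (hidden) action, so hiding it must neither create a refusal-set boundary where $\tau_I(P)$ has none --- whence the $\col$ --- nor corrupt the record, in the sense of \df{system-ended}, of which party ends each idling period. Settling these conditions so that the operation, once closed under $\col$, reproduces $\fft(\tau_I(P))$ precisely is where the real work lies; the remainder is routine bookkeeping parallel to \thrm{congruence parallel}.
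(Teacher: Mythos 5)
Your architecture coincides with the paper's: reduce the congruence claim to an explicit characterisation of $\fft(\tau_I(P))$ as a function of $\fft(P)$ and $I$, via a trace-level abstraction operation whose correctness is verified in the appendix by induction on the derivation rules of \tab{FT CCSP}. Your ingredients---erasing hidden actions, keeping only traces all of whose refusal sets contain $I$, treating membership of $I$ in refusal sets as unobservable, and absorbing refusal sets that come to straddle a hidden action---all appear in the paper, which packages them into a predicate ``$\rho$ survives abstraction from $I$'' together with a partial map $\tau_I(\rho)$, and proves $\sigma\in\fft(\tau_I(P))\Leftrightarrow\exists\rho\in\fft(P).\ \tau_I(\rho)=\sigma\cup I$. (The paper needs no outer $\col$, since the contraction of repeated refusal sets is built into $\tau_I(\rho)$; by \obsref{doubling} that difference is immaterial.)

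The gap is that the operation you actually write down---delete every $I$-action, freely adjust membership of $I$ in the surviving refusal sets, close under $\col$---is extensionally wrong: it overgenerates, and the two side conditions you defer as ``calibration'' are precisely the substance of the paper's proof. The paper additionally requires of the source trace that (ii) any two refusal sets separated only by a block of hidden actions be \emph{equal}, and (iii) any action $a\notin I$ reached from a refusal set $X$ through a nonempty block of hidden actions satisfy $a\in X$; traces violating these are discarded ($\tau_I(\rho)$ is undefined). Concretely, take $P=\rt.c.(a+b)$ and $I=\{c\}$. The sixth rule of \tab{FT CCSP} gives $\{b,c\}ca\top\in\fft(P)$, its only refusal set contains $I$, and your operation yields $\{b,c\}a\top$ (and $\{b\}a\top$); yet $\{b,c\}a\top\notin\fft(\tau_{\{c\}}(P))$: the fourth rule fails at the root since $a\top\notin\fft(\tau_{\{c\}}(P))$, the sixth fails since $a\notin\{b,c\}$, and the fifth fails because after the time-out and the now-hidden $c$ the process sits in $a+b$, which does not refuse $\{b,c\}$. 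The phenomenon is the one you half-name but do not formalise: a visible $c$ is a synchronisation at which the environment may rearrange its offer between the $c$ and the $a$, but once $c$ is hidden that opportunity disappears, so the offer $\{b,c\}$ persists into $a+b$ and forces $b$. Conditions (ii) and (iii) are exactly the trace-level record of this, and $\col$ cannot repair their absence, since it only merges \emph{equal} adjacent refusal sets.
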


\begin{proof}
  A partial failure trace $\rho$ \emph{survives abstraction from} $I\subseteq A$ iff
  \begin{enumerate}[(i)]
  \item each set $X$ occurring in $\rho$ satisfies $I\subseteq X$,
  \item each subsequence $X c_0 \dots c_n Y$ with $c_0,\dots,c_n\in I$ satisfies $X=Y$, and
  \item each subsequence $X c_0 \dots c_n a$ with $c_0,\dots,c_n\in I$ and $a\notin I$ satisfies $a\in X$.
  \end{enumerate}
  If this is the case, then $\tau_I(\rho)$ is the result of contracting any subsequence $X c_0 \dots c_n X$
  of $\rho$ to $X$, and omitting all remaining occurrences of actions $c\in I$.
  If this is not the case $\tau_I(\rho)$ is undefined.
Let $\sigma\cup I$ denote the partial failure trace obtained from a partial failure trace
$\sigma$ by replacing each occurrence of $X$ in $\sigma$ by $X\cup I$. {\We} claim that
\begin{equation}\label{abstraction congruence}
\sigma\in \fft(\tau_I(P)) ~~\Leftrightarrow~~ \exists \rho \in \fft(P).~ \tau_I(\rho) =\sigma\cup I\;.
\end{equation}
This shows that $\fft(\tau_I(P))$ is completely determined by $\fft(P)$ and $I$,
which yields \thrm{congruence abstraction}.\linebreak[2]
A formal proof of (\ref{abstraction congruence}) is given in the appendix---\pr{abstraction explicit}.
\end{proof}
\noindent
Intuitively, an occurrence of $X$ in $\sigma$ denotes a period of idling in which $X$ is the set of
actions allowed by the environment $\mathcal{E}$ of $\tau_I(P)$. The environment $\mathcal{E}_P$ of
$P$ when $P$ is placed in a $\tau_I(\_\!\_)$-context always allows the actions from $I$, and for the
rest is like $\mathcal{E}$. This explains (i) and the use of $\sigma\cup I$ in (\ref{abstraction congruence}).
Conditions (ii) and (iii) express that the instantaneous occurrence of a sequence of internal actions does
not constitute an opportune moment for the environment to change its mind on which actions are allowed.
Since $c_0\in X$ (by (i)), the period $X$ of idling ends through a time-out action of the system, right
before the occurrence of $c_0$. Whereas in $P$ the action $c_0$ could synchronise with the
environment, and thus cause a change in the set of allowed actions, this option is no longer
available for $\tau_I(P)$. So after the sequence of internal actions, the same set of actions $X$ is
allowed by the environment, either resulting in further idling, or in the execution of an action $a\in X$.

An occurrence of $X$ in $\sigma$ imposes the requirement on some reachable states $\tau_I(P')$
of $\tau_I(P)$ that $\tau_I(P')\ngoesto{\alpha}$ for all $\alpha\in X\cup\{\tau\}$. It is
satisfied iff $P'\ngoesto{\alpha}$ for all $\alpha\in X\cup I \cup \{\tau\}$.
The particular way of hiding actions $c\in I$ in $\tau_I(\rho)$, together with Conditions (ii) and
(iii), exactly matches the two rules for bridging a time-out transition in \tab{FT CCSP}.

\begin{theorem}{congruence renaming}
$\equiv_{FT}^*$ is a congruence for the relational renaming operators $\Rn$.
\end{theorem}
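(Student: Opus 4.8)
The plan is to follow the pattern of the proofs of \thrm{congruence parallel} and \thrm{congruence abstraction}: I would lift $\Rn$ to an operator on sets of partial failure traces and show that $\fft(\Rn(P))$ is completely determined by $\fft(P)$ and $\Rn$, whence $P\ffteq P'$ implies $\Rn(P)\ffteq\Rn(P')$. Since $\Rn$ renames visible actions one transition at a time and leaves $\tau$ and $\rt$ untouched, the situation is closer to that of $\tau_I$ than to that of $\spar S$; in particular I expect no closure operator $\col$ to be needed here.

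For $Y\subseteq A$ write $\Rn^{-1}(Y):=\{a\in A\mid\exists b\in Y.\,(a,b)\in\Rn\}$. For a partial failure trace $\rho$ let $\Rn(\rho)$ consist of the partial failure traces $\sigma$ of the same length and with the same pattern of actions and refusal sets, such that (i) at every action position, if $\rho$ has $a$ and $\sigma$ has $b$ there then $(a,b)\in\Rn$; (ii) at every refusal-set position, if $\rho$ has $X$ and $\sigma$ has $Y$ there then $\Rn^{-1}(Y)\subseteq X$; and (iii) whenever a refusal set $X$ of $\rho$ is system-ended (\df{system-ended}) --- so $X$ is immediately followed in $\rho$ by an action $a\in X$, which $\sigma$ renames to $b$ --- the corresponding refusal set $Y$ of $\sigma$ satisfies $b\in Y$ (the converse being automatic, since $b\in Y$ gives $a\in\Rn^{-1}(\{b\})\subseteq\Rn^{-1}(Y)\subseteq X$). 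Put $\Rn(F):=\bigcup_{\rho\in F}\Rn(\rho)$. The core claim would be
\begin{equation}\label{renaming congruence}
\sigma\in\fft(\Rn(P))\quad\Longleftrightarrow\quad\sigma\in\Rn(\fft(P))\;,
\end{equation}
which yields \thrm{congruence renaming}; its formal proof I would defer to the appendix---\pr{renaming explicit}.

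Conditions (i)--(iii) are motivated as follows. The operator $\Rn$ preserves stability, and for stable $P'$ one has $\I(\Rn(P'))=\{b\mid\exists a\in\I(P').\,(a,b)\in\Rn\}$, so $\Rn(P')\ngoesto{\alpha}$ for all $\alpha\in Y\cup\{\tau\}$ exactly when $P'\ngoesto{\alpha}$ for all $\alpha\in\Rn^{-1}(Y)\cup\{\tau\}$. A refusal set occurring in a partial failure trace of $P'$ need not be maximal, so to certify that $Y$ is refusable by $\Rn(P')$ it suffices that $\fft(P)$ offers \emph{some} refused set $X\supseteq\Rn^{-1}(Y)$ there (condition (ii)), while conversely one may always take $X:=\Rn^{-1}(Y)$. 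Condition (iii) reflects the side condition $a\in X$ of the sixth rule of \tab{FT CCSP}, which is used precisely when a system-ended period of idling is crossed by a time-out followed by an action; it cannot be dropped. For $\Rn=\{(a,b),(a,b')\}$ with $b\neq b'$ one has $\{a\}\,a\,\top\in\fft(\rt.a)$, yet the state $\Rn(a.0)=b.0+b'.0$ reached after the time-out in $\Rn(\rt.a)$ cannot refuse any set containing $b'$, so $\{b'\}\,b\,\top\notin\fft(\Rn(\rt.a))$; without (iii) the renaming $\{a\}\mapsto\{b'\}$, $a\mapsto b$ would be admitted, making the right-hand side of (\ref{renaming congruence}) too large. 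Condition (iii) rules it out because $b\notin\{b'\}$.

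The formal proof of (\ref{renaming congruence}) would run by induction on derivations in \tab{FT CCSP}, in both directions, using the operational rule for $\Rn(x)$ to translate each transition $\Rn(P')\goesto{\beta}\Rn(P'')$ into the underlying $P'\goesto{\alpha}P''$ (with $\alpha=\beta\in\{\tau,\rt\}$, or $(\alpha,\beta)\in\Rn$) and back. The first rule and the action and $\tau$ rules should be routine. The main obstacle is the interplay of refusal sets with the two time-out rules: in the ``$\Leftarrow$'' direction the side condition $b\in Y$ of the sixth rule on the $\Rn(P)$-side is handed over by (iii); in the ``$\Rightarrow$'' direction, when reconstructing $\rho$ from $\sigma$, one should take every refused set of $\rho$ to be exactly $\Rn^{-1}(Y)$, and must then check that this possibly smaller set still meets the side conditions of the fifth and sixth rules on the $P$-side --- in the system-ended case this relies on $a\in\Rn^{-1}(\{b\})\subseteq\Rn^{-1}(Y)$, i.e.\ exactly condition (iii) read in the other direction.
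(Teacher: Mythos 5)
Your proof is correct and follows essentially the same route as the paper's: both reduce the congruence claim to an explicit characterisation of $\fft(\Rn(P))$ in terms of $\fft(P)$ and $\Rn$ via a trace-level renaming relation whose side condition ensures that system-ended periods of idling correspond to system-ended periods of idling (the paper states this as ``$b\in Y$ iff $a\in\Rn^{-1}(Y)$'', which, given your condition (ii), is equivalent to your condition (iii) together with its automatic converse). The only immaterial difference is that the paper works with a backward map $\Rn^{-1}(\sigma)$ whose refusal sets are exactly $\Rn^{-1}(Y)$, whereas you use a forward image permitting any $X\supseteq\Rn^{-1}(Y)$ on the source side; the two coincide because refusal sets in partial failure traces may always be shrunk.
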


\begin{proof}
  For $X\subseteq A$, let $\Rn^{-1}(X):=\{a\in A \mid \exists b \in X.~ (a,b)\in \Rn\}$.
  Furthermore, let $\Rn^{-1}(\sigma)$ denote the set of partial failure traces obtained from $\sigma$
  by (i) replacing each occurrence of a set $X$ by the set $\Rn^{-1}(X)$, and
  (ii) replacing each occurrence of an action $b$ by some action $a$ with $(a,b)\in \Rn$,
  subject to the condition that if the occurrence of $b$ is preceded by an occurrence of a set $X$,
  then $b\in X$ iff $a\in \Rn^{-1}(X)$.  Now, as shown by \pr{renaming explicit} in the appendix,
  \begin{equation}\label{renaming congruence}
  \sigma\in \fft(\Rn(P)) ~~\Leftrightarrow~~ \exists \rho \in \fft(P).~ \rho \in \Rn^{-1}(\sigma) \;.
  \end{equation}
  The side condition ensures that for each period of idling, the party that ends it (system or
  environment) does not differ between $\sigma$ and $\rho$.
  This shows that $\fft(\Rn(P))$ is completely determined by $\Rn$ and $P$, which yields \thrm{congruence renaming}.
\end{proof}
\noindent
The following example shows that the side condition in the definition of $\Rn^{-1}(\sigma)$ cannot be omitted.
\begin{example}{renaming}
Let $P=\rt.a$ and $\Rn=\{(a,b),(a,c)\}$. Then $\rho=\{a\}a\top \in \fft(P)$, so without the side
condition one would obtain $\sigma=\{b\}c\top \in\fft(\Rn(P))$. However, $\Rn(P)$ has no such failure
trace, for if the environment allows just $b$ when the time-out occurs, $\Rn(P)$ will proceed with
$b$ rather than $c$.
\end{example}

\subsection*{Rooted partial failure trace semantics}

Partial failure trace equivalence, however, like all default equivalences that abstract from
internal activity~\cite{Mi90ccs},
fails to be a congruence for the choice operator $+$. In particular, $\fft(b) = \fft(\tau.b)$, yet
$\fft(a+b) \neq \fft(a+\tau.b)$, for only $a+\tau.b$ has a partial failure trace $\{a\}\top$, and
only $a+b$ has a partial failure trace $\emptyset a\top$.
The classical way to solve this problem for languages like CCSP is to add one bit of information to
the semantics of processes. Two CCSP processes could be called \emph{partial failure trace congruent}
iff $\fft(P)=\fft(Q) \wedge \textit{stable}(P)\Leftrightarrow\textit{stable}(Q)$.
This turns out to be a congruence for CCSP, and moreover the coarsest congruence finer than partial failure trace equivalence.
In the presence of time-outs this one-bit modification of partial failure trace equivalence is insufficient,
for $\fft(\rt.b) = \fft(\rt.\rt.b)$, yet $\fft(a+\rt.b) \neq \fft(a+\rt.\rt.b)$. Namely, only $a+\rt.\rt.b$
has a partial failure trace $\{b\}\{a,b\}b$.

To make partial failure trace equivalence a congruence, one needs additionally to keep track of
partial failure traces of the form $X\sigma$ that are lifted by the fifth rule of \tab{FT CCSP}
through an initial $\rt$-transition. {\We} also use an additional bit, telling that a stable state
is reachable along a nonempty path of $\tau$-transitions.
\begin{definition}{rfft}
Let $\rfft(P) :=$\vspace{-1ex} \[\fft(P) \begin{array}[t]{@{}l@{}}
\mbox{} \cup \{\st \mid P {\ngoesto\tau}\} \cup \{\rt X \sigma \mid \exists P'.~ P\goesto\rt P' \wedge X \sigma\in\fft(P') \wedge P {\ngoesto\tau} \wedge \I(P)\cap X=\emptyset\}\\
\mbox{} \cup \{\pst \mid P{\goesto\tau} \wedge \emptyset \top\in\fft(P) \}.
\end{array}\vspace{-1ex}\]
Processes $P,Q\in \IP$ are \emph{rooted partial failure trace equivalent}, $P \rffteq Q$, iff $\rfft(P)=\rfft(Q)$.
\end{definition}
\noindent
One has $\rt.b \not\rffteq \rt.\rt.b$ because $ \rt \{a,b\} b \in \rfft(\rt.\rt.b)$ but $\rt \{a,b\} b \notin \rfft(\rt.b)$.

For the purpose of defining $\rffteq$, the $\pst$ bit is redundant,
as it is clearly determined by the other elements of $\rfft(P)$.
However, it will turn out to be useful in defining a rooted partial failure trace preorder
$\sqsubseteq^{r*}_{FT}$ in Section~\ref{preorders}.

\begin{observation}{rfft}
Rooted partial failure trace equivalence is finer than partial failure trace equivalence.
\end{observation}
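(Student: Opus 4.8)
The plan is to prove the inclusion $\rffteq \subseteq \ffteq$ of relations on $\IP$, together with strictness via an explicit witness. The only substantive point is that $\fft(P)$ can be recovered from $\rfft(P)$, and I would package this as the identity
\[
\fft(P) \;=\; \rfft(P) \cap \big((A\cup\Pow(A))^*\top\big)\,,
\]
valid for every $P\in\IP$. The inclusion $\subseteq$ is immediate from \df{rfft}: $\rfft(P)$ is $\fft(P)$ together with three further sets, and every element of $\fft(P)$ lies in $(A\cup\Pow(A))^*\top$ by a glance at \tab{FT CCSP}, since each of its rules produces a sequence over $A\cup\Pow(A)$ terminated by the tag $\top$. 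For the reverse inclusion I would check that none of the three added sets contributes anything in $(A\cup\Pow(A))^*\top$: the markers $\st$ and $\pst$ are reserved symbols, not partial failure traces at all, and an element of the second added set has the form $\rt X\sigma$, beginning with the time-out symbol $\rt$; since $\rt\in Act\setminus A$, this symbol does not belong to $A\cup\Pow(A)$, so $\rt X\sigma\notin(A\cup\Pow(A))^*\top$. Hence intersecting $\rfft(P)$ with $(A\cup\Pow(A))^*\top$ exactly discards the three extra ingredients and leaves $\fft(P)$.

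With the identity in hand the inclusion is routine: if $P\rffteq Q$, i.e.\ $\rfft(P)=\rfft(Q)$, then intersecting both sides with $(A\cup\Pow(A))^*\top$ yields $\fft(P)=\fft(Q)$, that is, $P\ffteq Q$. For strictness I would reuse a witness already supplied just before the statement: $\fft(\rt.b)=\fft(\rt.\rt.b)$, whereas $\rt\{a,b\}b\in\rfft(\rt.\rt.b)\setminus\rfft(\rt.b)$, so $\rt.b\ffteq\rt.\rt.b$ but $\rt.b\not\rffteq\rt.\rt.b$. (The pair $b$ and $\tau.b$ works equally well, since $b\ffteq\tau.b$ but $\st\in\rfft(b)\setminus\rfft(\tau.b)$.)

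I do not anticipate a genuine obstacle here. The one place that warrants a moment's care is the bookkeeping claim that the three extra components of $\rfft(P)$ are syntactically disjoint from genuine partial failure traces — in particular that $\st$ and $\pst$ are fresh tags and that a leading $\rt$ cannot occur in an element of $\fft(\cdot)$ — so that the projection onto $(A\cup\Pow(A))^*\top$ cleanly recovers $\fft$ from $\rfft$.
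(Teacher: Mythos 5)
Your proof is correct and matches the paper's (implicit) reasoning: the observation is immediate from \df{rfft} because $\fft(P)$ is recoverable from $\rfft(P)$ by discarding the tags $\st$, $\pst$ and the traces beginning with $\rt$ -- exactly the projection identity you formalise, and the same fact the paper later invokes in the proof of \thrm{congruence}. The strictness witness $\rt.b$ versus $\rt.\rt.b$ is likewise the one the paper supplies just before the observation.
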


\noindent
Let $\ini{Z}(F)$, for $Z\mathbin\subseteq A$ and $F\mathbin\subseteq (A \cup \Pow(A))^* \top$, be the least set such that
(a) $F\mathbin\subseteq \ini{Z}(F)$, (b) $\top\mathbin\in \ini{Z}(F)$, and (c) if $\sigma \in \ini{Z}(F)$ and $X\cap Z = \emptyset$ then
$X\sigma\in \ini{Z}(F)$. It represents the set of partial failure traces of a stable process $P$
derivable by the first and fourth rule of \tab{FT CCSP} when $F\mathbin\subseteq \fft(P)$ and $\I(P)\mathbin=Z$.

\begin{theorem}{congruence}
Rooted partial failure trace equivalence ($\rffteq$) is a congruence for the operators of CCSP$_\rt$.
\end{theorem}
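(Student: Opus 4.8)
The plan is to show, for every operator $f$ of CCSP$_\rt$, that $\rfft(f(P_1,\dots,P_n))$ is completely determined by $\rfft(P_1),\dots,\rfft(P_n)$; this is precisely the assertion that $\rffteq$ is a congruence for $f$. For the constant $0$ there is nothing to prove. For each remaining operator I decompose $\rfft(R)$ into its four constituents according to \df{rfft}: the ordinary partial failure traces $\fft(R)$, the stability tag $\st$, the timeout-traces $\rt X\sigma$, and the tag $\pst$. A fact used throughout is that whenever $R$ is stable one has $\I(R)=\{a\in A\mid \{a\}\top\notin\fft(R)\}$, so that the set of initial visible actions of a stable process — which governs the side conditions of the rules of \tab{FT CCSP} — is recoverable from its ordinary partial failure traces, hence from its $\rfft$.

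For the operators $\alpha.\_\!\_$, $\tau_I$, $\Rn$ and $\spar{S}$ the $\fft$-constituent of $\rfft(f(\vec P))$ is already under control: for $\alpha.\_\!\_$ one reads $\fft(\alpha.P)$ off \tab{FT CCSP} as a function of $\alpha$ and $\fft(P)$ (three cases, according to whether $\alpha\in A$, $\alpha=\tau$ or $\alpha=\rt$), and for the other three operators \thrm{congruence parallel}, \thrm{congruence abstraction} and \thrm{congruence renaming} — via the explicit constructions of \df{decomposition}, \df{col} and equations (\ref{abstraction congruence}),(\ref{renaming congruence}) — express $\fft(f(\vec P))$ as a function of the $\fft(P_i)$, which are parts of the given $\rfft(P_i)$. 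It then remains to check that the three auxiliary constituents of $\rfft(f(\vec P))$ are determined; each such check mirrors an operational rule. Stability of $f(\vec P)$ is expressible through the stability tags of the arguments (together with ``$\I(P)\cap I=\emptyset$'' for $\tau_I$). The timeout-traces of $f(\vec P)$ arise from its $\rt$-transitions, each of which is inherited from an $\rt$-transition of a single argument, so that ``$\rt X\sigma\in\rfft(f(\vec P))$'' is governed by the timeout-traces $\rt(\cdot)\in\rfft(P_i)$ combined one step deeper with the same explicit $\fft$-operations underlying \thrm{congruence parallel}, \thrm{congruence abstraction} and \thrm{congruence renaming}, subject to the side condition $\I(f(\vec P))\cap X=\emptyset$ (which is itself recoverable, using the basic fact above). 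Finally $\pst\in\rfft(f(\vec P))$ is determined because both ``$f(\vec P){\goesto\tau}$'' and ``$\emptyset\top\in\fft(f(\vec P))$'' are.

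The real work is the choice operator $+$, where I case-split on the stability of $P$ and $Q$. If at least one of $P,Q$ is unstable then $P+Q$ is unstable, rules 4--6 of \tab{FT CCSP} never fire at its root, and $\fft(P+Q)$ is the union of the ``non-idling-at-root'' fragments of $\fft(P)$ and $\fft(Q)$ — which is $\fft(R)$ itself when $R$ is unstable and $\{\top\}\cup\{a\rho\in\fft(R)\mid a\in A\}$ when $R$ is stable, in either case recoverable from $\rfft(R)$ via its stability tag; moreover $\st\notin\rfft(P+Q)$, the defining clause for $\rt X\sigma$-traces is vacuous (it demands stability), and $\pst\in\rfft(P+Q)$ iff $\emptyset\top\in\fft(P+Q)$. (Here $\fft(b)=\fft(\tau.b)$ versus $\fft(a+b)\neq\fft(a+\tau.b)$ is exactly what forces the stability tag into $\rfft$.) If both $P$ and $Q$ are stable then so is $P+Q$, with $\I(P+Q)=\I(P)\cup\I(Q)$, and from \tab{FT CCSP} one obtains
\[
  \fft(P+Q)=\ini{\I(P)\cup\I(Q)}\big(\mathrm{act}\cup\mathrm{to}\big),
\]
where $\mathrm{act}$ is the set of action-headed traces of $\fft(P)$ and of $\fft(Q)$, and $\mathrm{to}$ collects the timeout-contributions — namely $\{X\rho\mid \rt X\rho\in\rfft(P)\wedge \I(Q)\cap X=\emptyset\}\cup\{X\rho\mid\rt X\rho\in\rfft(Q)\wedge\I(P)\cap X=\emptyset\}$ together with the traces $Xa\rho$ with $a\in X$ supplied by rule~6 from the $\rt$-successors — while the outer $\ini{\I(P)\cup\I(Q)}$ accounts for rule~4 (pure idling). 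The auxiliary constituents are then immediate: $\st\in\rfft(P+Q)$, $\pst\notin\rfft(P+Q)$, and $\rt X\sigma\in\rfft(P+Q)$ iff $(\rt X\sigma\in\rfft(P)\wedge\I(Q)\cap X=\emptyset)\vee(\rt X\sigma\in\rfft(Q)\wedge\I(P)\cap X=\emptyset)$.

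The main obstacle I expect is precisely this stable sub-case of $+$: getting the $\ini{\I(P)\cup\I(Q)}$-reconstruction of $\fft(P+Q)$ exactly right, in particular verifying that the timeout-traces of $\rfft(P)$ and $\rfft(Q)$ carry precisely the information that rules~5 and~6 need about the $\rt$-successors of $P$ and $Q$ once the refusal sets blocked by $\I(P)\cup\I(Q)$ are accounted for, and that nothing less suffices — the latter witnessed by $\rt.b\not\rffteq\rt.\rt.b$, which, together with $\fft(\rt.b)=\fft(\rt.\rt.b)$, is exactly why the timeout-traces $\rt X\sigma$ must be built into $\rfft$. A secondary, more mechanical, source of fiddliness is the ``one step deeper'' rerun of the appendix analyses needed for the timeout-constituent of $\rfft(P\spar{S}Q)$ and $\rfft(\tau_I(P))$. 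Congruence for the recursion construct is a full-congruence question rather than an operator-congruence one and is treated separately.
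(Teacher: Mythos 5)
Your proposal is correct and follows essentially the same route as the paper: compositional equations expressing each of the four constituents of $\rfft(f(\vec P))$ in terms of the $\rfft(P_i)$, a four-way stability case split for $+$ with the $\ini{\I(P)\cup\I(Q)}$-closure in the stable--stable case, and a ``one level deeper'' extension of the decomposition/abstraction/renaming machinery to the $\rt X\sigma$-traces of the static operators (which the paper likewise delegates to the appendix). The only differences are presentational, e.g.\ recovering $\I(R)$ via $\{a\}\top\notin\fft(R)$ rather than via the action-headed traces, and characterising the $\pst$ bit directly through its two determined conjuncts instead of the paper's explicit case formulas.
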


\begin{proof}
It suffices to show that $\rfft(P+Q)$ is fully determined by $\rfft(P)$ and $\rfft(Q)$, that
$\rfft(\alpha.P)$ is fully determined by $\alpha$ and $\rfft(P)$, and similarly for $\rfft(P\spar{S}Q)$, $\rfft(\tau_I(P))$ and $\rfft(\Rn(P))$.
The following equations establish this.
\begin{align}
\rfft(a.P) ~~=~~&  \ini{\{a\}}\big(\{a\sigma \mid \sigma \in \fft(P) \}\big) \cup \{\st\}\label{action}\\
\rfft(\tau.P) ~~=~~& \fft(P) \cup \{\pst \mid \emptyset\top \in\fft(P)\}\\
\rfft(\rt.P) ~~=~~& \begin{array}[t]{@{}l@{}}\ini\emptyset\big(\{X \sigma \mid X \sigma \in \fft(P)\} \cup
                                            \{X a \sigma \mid a\sigma \in \fft(P) \wedge a\in X\}\big)\\
                     \mbox{} \cup \{\st\} \cup \{\rt X \sigma \mid X \sigma \in \fft(P)\}\end{array}\label{time-step}\\
\rfft(P+Q) ~~=~~& \!\!\left\{\begin{array}{@{}ll@{}}
                        \rfft(P) \cup \rfft(Q) & \mbox{if}~~ \neg\stable(P) \wedge \neg\stable(Q)\\
                        \{a\sigma\mid a\sigma\in \rfft(P)\} \cup \rfft(Q) & \mbox{if}~~ \stable(P) \wedge \neg\stable(Q)\\
                        \rfft(P) \cup \{a\sigma\mid a\sigma\in \rfft(Q)\} & \mbox{if}~~ \neg\stable(P) \wedge \stable(Q)\\
                        \ini{\I(P)\cup \I(Q)}\big(H\big) \cup \{\st\} \cup K & \mbox{if}~~ \stable(P) \wedge \stable(Q)\\
                        \end{array}\right.\label{sum}\\
\rfft(P\spar{S}Q) ~~=~~&  \col(\rfft(P) \spar{S} \rfft(Q)) \label{rpar}
\end{align}
\begin{align}
\rfft(\tau_I(P)) ~~=~~& \begin{array}[t]{@{}ll@{}}
                        \{\sigma \mid (\st\neq\sigma\neq\pst) \wedge \exists \rho \in \rfft(P).~ \tau_I(\rho) =\sigma\cup I\} \\
                        \mbox{}\cup \{\st\mid\st\in\rfft(P) \wedge \I(P)\cap I=\emptyset\} \\
                        \mbox{}\cup \{\pst \mid \begin{array}[t]{@{}ll@{}}
                        (\exists c_0c_1\dots c_n I\top \in \rfft(P) \mbox{~where~}n\geq 0 \mbox{~and all~}c_i\in I)\hspace{-2em}\\
                          \mbox{} \vee (\pst\in \rfft(P) \wedge I\top\in \rfft(P) )\}
                        \end{array}
                        \end{array} \label{rabs}\\
\rfft(\Rn(P)) ~~=~~& \{\sigma \mid \exists \rho \in \rfft(P).~ \rho \in \Rn^{-1}(\sigma)\} \label{rren}
\end{align}
Here $H:= \begin{array}[t]{@{}l@{}}
                  \{a\sigma\mid a\sigma\in \rfft(P) \cup \rfft(Q)\} \cup \mbox{}\\
                  \{X \sigma \mid \rt X \sigma \in \rfft(P) \cup \rfft(Q) \wedge (\I(P)\cup \I(Q)) \cap X=\emptyset\}  \cup \mbox{}\\
                  \{X a\sigma \mid X a\sigma\in \rfft(P)\cup \rfft(Q) \wedge a \mathbin\in X \wedge (\I(P)\cup \I(Q)) \cap X=\emptyset\}
                  \end{array}$\\
and $K= \{\rt X \sigma \mid \rt X \sigma \in \rfft(P) \cup \rfft(Q) \wedge (\I(P)\cup \I(Q)) \cap X=\emptyset\}$.
\\[1ex]
The first four equations follow immediately from \df{rfft} and \tab{FT CCSP}.
In (\ref{time-step}) the argument of $\ini\emptyset$ generates the partial failure traces
contributed by the fifth and sixth rule of \tab{FT CCSP}; the operation $\ini\emptyset$ then closes
this set under applications of the first and fourth rule.
In the fourth case of (\ref{sum}), the three lines of $H$ generate the partial failure traces
contributed by the second, fifth and sixth rule of \tab{FT CCSP}, respectively; the operation
$\ini{\I(P)\cup \I(Q)}$ then closes this set under applications of the first and fourth rule.

To see that (\ref{action})--(\ref{sum}) fulfil their intended purpose, note that $\fft(P)$ is fully determined by $\rfft\!(P)$,
namely by dropping the bits $\st$ and $\pst$, and all traces $\rt\sigma$.
Moreover $\stable(P)$ holds iff $\st\mathbin\in\rfft(P)$ and for stable $P$ the set $\I(P)$ can be computed
by $\I(P) = \{a\mathbin\in A \mid \exists \sigma.~ a\sigma \mathbin\in \rfft(P)\}$.

To explain (\ref{rpar}), the definition of decomposition needs to be extended to rooted partial
failure traces $\rt X\sigma$. In the first bullet-point of \df{decomposition},
``$\sigma_i\in A\setminus S$'' should now be read as ``$\sigma_i\in (A\setminus S) \cup \{\rt\}$''.
The definition of $F \spar{S} G$ is unchanged, except that rooted
partial failure traces $\rt X \sigma$ are dropped from $F \spar{S} G$ unless $\st\in F$ and $\st \in G$;
furthermore $\st$ is deemed to be in  $F \spar{S} G$ iff it is in both $F$ and $G$, and
$\pst$ is deemed to be in  $F \spar{S} G$ if either $\pst\in F \wedge \pst\in G$, or 
$\pst\in F \wedge \st\in G$, or $\st\in F \wedge \pst\in G$.
Moreover, the operator $\col$ extends to sets of rooted partial failure traces.
In the appendix it is shown that with these definitions (\ref{rpar}) holds---see \pr{spar explicit rooted}.

In (\ref{rabs}), the concept ``survives abstraction from $I\subseteq A$'' from the proof of
\thrm{congruence abstraction} is applied verbatim to rooted partial failure traces $\rt X\sigma$,
as is the operation $\tau_I$. The validity of (\ref{rabs}) is shown in the
appendix---\pr{abstraction explicit rooted}.

In (\ref{rren}), the operator $\Rn^{-1}$ extends verbatim to rooted partial failure traces.
The validity of (\ref{rabs}) is shown in the appendix---\pr{renaming rooted}.
\end{proof}
As remarked in \cite{vG94a}, the countable choice operator $\sum_{i\in \INs}P_i$ is expressible in terms
of the binary choice operator $+$ and recursion:
$\sum_{i\in \INs}P_i = \rec{X_0|\RS}$, where $X_i = P_i + X_{i+1}$ for $i\in\IN$.
An equation similar to (\ref{sum}) shows that $\rffteq$ also is a congruence for the countable choice.

\begin{theorem}{full abstraction +}
Rooted partial failure trace equivalence ($\rffteq$) is the largest congruence for the operators of CCSP$_\rt$
that is included in $\ffteq$.
\end{theorem}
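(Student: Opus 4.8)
The plan is to prove maximality: since $\rffteq$ is already a congruence for the operators of CCSP$_\rt$ (\thrm{congruence}) and is contained in $\ffteq$ (\obsref{rfft}), what remains is to show that every congruence $\equiv$ with $\equiv\,\subseteq\,\ffteq$ satisfies $\equiv\,\subseteq\,\rffteq$. In the standard way I would reduce this to the statement that whenever $\rfft(P)\neq\rfft(Q)$ there is a CCSP$_\rt$ context $\mathcal{C}[\_\!\_\,]$ with $\fft(\mathcal{C}[P])\neq\fft(\mathcal{C}[Q])$: for then $P\equiv Q$ would give $\mathcal{C}[P]\equiv\mathcal{C}[Q]$, hence $\fft(\mathcal{C}[P])=\fft(\mathcal{C}[Q])$, a contradiction. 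If already $\fft(P)\neq\fft(Q)$ the empty context does the job, so from now on I may assume $\fft(P)=\fft(Q)$ while $\rfft(P)\neq\rfft(Q)$.

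Next I would locate the discrepancy. Reading off \df{rfft}, and using that $\fft(P)=\fft(Q)$ fixes whether $\emptyset\top\in\fft(\cdot)$ and therefore makes the $\pst$-bit agree exactly when the $\st$-bit does (recall $\pst\in\rfft(R)$ iff $R{\goesto\tau}$ and $\emptyset\top\in\fft(R)$, whereas $\st\in\rfft(R)$ iff $R{\ngoesto\tau}$), the remaining difference must be (i) in the $\st$-bit, or (ii) among the rooted traces of the form $\rt X\sigma$. In both cases the separating context will be $\mathcal{C}[\_\!\_\,]:=c.0+\_\!\_\,$ for a carefully chosen action $c$; the key observation, via equation~(\ref{sum}), is that forming $c.0+R$ only forces $c$ to be absent from every initial refusal set of $R$ and, for stable $R$, surfaces (through the fifth and sixth rule of \tab{FT CCSP}) precisely the traces reached along $R$'s own initial $\rt$-transitions — which is exactly the extra information recorded in $\rfft$.

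For case (i), say $P$ is stable and $Q$ is not; pick $c\in A\setminus\I(P)$ not occurring in $Q$ (unproblematic for guarded processes, which are finitely branching). Since $P$ is stable and refuses $\{c\}$ we have $\{c\}\top\in\fft(P)=\fft(Q)$, and since $Q$ is unstable equation~(\ref{sum}) gives $\fft(Q)\subseteq\fft(c.0+Q)$, so $\{c\}\top\in\fft(c.0+Q)$; but $c.0+P$ is stable with $c\in\I(c.0+P)$, so it can never refuse $\{c\}$ and $\{c\}\top\notin\fft(c.0+P)$. For the principal part of case (ii), $P$ and $Q$ are both stable with common initial-action set $\I(P)=\I(Q)=:Z$ (read off $\fft(P)=\fft(Q)$), and WLOG $\rt X\sigma\in\rfft(P)\setminus\rfft(Q)$; thus $P{\goesto\rt}P'$ with $X\sigma\in\fft(P')$ and $X\cap Z=\emptyset$, while — the side conditions $R{\ngoesto\tau}$ and $\I(R)\cap X=\emptyset$ holding equally for $P$ and $Q$ — no $Q{\goesto\rt}Q'$ has $X\sigma\in\fft(Q')$. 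Assuming $X$ contains an action $c$ not occurring in $Q$, I would take $\mathcal{C}[\_\!\_\,]:=c.0+\_\!\_\,$ and use the trace $(X\setminus\{c\})X\sigma$: a refusal-weakening step (any occurrence of a set in a failure trace may be preceded by one of its subsets, in the spirit of \obsref{doubling}) turns $X\sigma\in\fft(P')$ into $(X\setminus\{c\})X\sigma\in\fft(P')$, whence $(X\setminus\{c\})X\sigma\in\fft(c.0+P)$ by the fifth rule, since $c.0+P$ is stable, refuses $X\setminus\{c\}$ and steps $\rt$ to $P'$; whereas in $c.0+Q$ the set $X$ cannot follow $X\setminus\{c\}$ at the initial state (as $c\in X\cap\I(c.0+Q)$), so the only route to $(X\setminus\{c\})X\sigma$ is to bridge the leading $X\setminus\{c\}$ along a chain of $\rt$-steps starting with some $Q{\goesto\rt}Q'$, and because $c$ does not occur in $Q$ every state on that chain refuses $\{c\}$, hence refuses all of $X$, so the chain is a legal realm for $X$ as well and we obtain $X\sigma\in\fft(Q')$ — contradicting the choice of $\rt X\sigma$.

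The hard part is the residue of case (ii): when $X=\emptyset$, or when $X$ is too large to contain an action outside $Q$ (and, relatedly, the degenerate situation $\I(P)=A$ where no fresh action exists at all), the context $c.0+\_\!\_\,$ no longer works and I expect to need either a richer separating context — e.g.\ one using an $\rt$-prefixed summand, or a parallel composition in which a witness process forces the revelation of which party ends an idle period — or a preliminary argument showing that such a discrepancy already entails one of the shapes handled above. Apart from that, everything is routine: the behaviour of $\fft$ and $\rfft$ under $+$ invoked above (equations~(\ref{action}) and~(\ref{sum})), the refusal-weakening lemma, and the finite-branching remark all follow directly from the path-based characterisation of the rules of \tab{FT CCSP}.
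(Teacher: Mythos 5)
Your overall strategy is the paper's: reduce maximality to exhibiting, for $P\not\rffteq Q$, a context of the form ``add a summand involving a fresh action'' that separates $P$ and $Q$ up to $\ffteq$, and then split on whether the discrepancy lies in the $\st$/$\pst$ bits or in a rooted trace $\rt X\sigma$. Your case (i) is essentially the paper's argument (which uses the context $\_\!\_+f$ and the trace $\emptyset f\top$, resp.\ $\{f\}\top$ for $\pst$), except that the paper first applies a \emph{bijective renaming} of $A$ onto $A\setminus\{f\}$ to guarantee a globally fresh action $f$; your appeal to finite branching of guarded processes is neither available in general (the theorem is not restricted to guarded processes, and recursion encodes infinite sums) nor necessary once the renaming is in place.

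The genuine gap is exactly where you flag it, in case (ii), and it is the crux of the proof rather than a residue. Your separating trace $(X\setminus\{c\})X\sigma$ only works if the given refusal set $X$ happens to contain an action $c$ foreign to $Q$; it fails outright for $X=\emptyset$ and whenever $X$ consists solely of actions of $Q$. The paper's key move, which you are missing, is not to look for a fresh action \emph{inside} $X$ but to \emph{insert} one: having secured a globally fresh $f$ by renaming, one has the padding property $\sigma X\rho\in\fft(R)\Leftrightarrow\sigma(X\cup\{f\})\rho\in\fft(R)$, and combining it with \obsref{doubling} turns $X\rho\in\fft(Q')$ into $X^-X^+\rho\in\fft(Q')$ with $X^-:=X\setminus\{f\}$ and $X^+:=X\cup\{f\}$. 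Then $X^-X^+\rho\in\fft(Q+f)$ by the fifth rule of \tab{FT CCSP}, while in $P+f$ the fourth rule is blocked because $X^+\ni f$ cannot be refused by the stable state $P+f$; so the fifth rule forces $P\goesto{\rt}P'$ with $X^-X^+\rho\in\fft(P')$, which collapses back to $X\rho\in\fft(P')$ and yields $\rt X\rho\in\rfft(P)$, the desired contradiction. This works uniformly for every $X$, including $X=\emptyset$. Note that your own machinery almost gets there: since $f$ is fresh, $\rt X\sigma\in\rfft(R)\Leftrightarrow\rt(X\cup\{f\})\sigma\in\rfft(R)$, so you could first fatten $X$ by $f$ and then run your $(X\setminus\{c\})X\sigma$ argument with $c:=f$ --- but that observation, or an equivalent one, must be supplied; as written the proposal does not close its central case.
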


\begin{proof}
In view of \obsref{rfft} and Theorem~\ref{thm:congruence} it suffices to find, for any two
rooted partial failure trace inequivalent processes $P \not\rffteq Q$, a CCSP$_\rt$-context
$\mathcal{C}$ such that $\mathcal{C}(P) \not\ffteq \mathcal{C}(Q)$.

First of all, one can apply a bijective renaming operator on $P$ and $Q$, mapping $A$ to $A\setminus\{f\}$
for some action $f\mathbin\in A$. Consequently, {\we} may assume that $f$ is \emph{fresh} in the sense that
for each process $R$ reachable from $P$ or $Q$ one has $R{\ngoesto{f}}$.
For any such $R$ one has
\[
\sigma X \rho \in \fft(R) \Leftrightarrow \sigma (X{\cup}\{f\}) \rho \in \fft(R).
\tag{*}
\]

Let $\sigma \in \rfft(Q) \setminus \rfft(P)$. (The other case proceeds by symmetry.)
The only relevant cases are that $\sigma\mathbin=\mbox{\scriptsize\sc (post)}\st$ or $\sigma\mathbin=\rt X\rho$,
since in all other cases one immediately has $\sigma \in \fft(Q) \setminus \fft(P)$.

Let $\sigma =\st$, that is, $Q$ is stable but $P$ is not. Then $\emptyset f \top \in \fft(Q+f)$ but
$\emptyset f \top \notin \fft(P+f)$. So it suffices to take the context $\mathcal{C}(\_\!\_\,) = \_\!\_\, + f$.

Now let $\sigma\mathbin=\rt X\rho$. So $Q{\ngoesto\alpha}$
for all $\alpha \mathbin\in X\cup\{\tau\}$, $Q \mathbin{\goesto\rt} Q'$ and $X\rho\mathbin\in\fft(Q')$.
{\We} may assume that $P$ is stable, as the case that $Q$ is stable but $P$ is not has been treated already.
By \obsref{doubling} one has $XX\rho\in\fft(Q')$, and (*) yields $X^- X^+ \rho \in\fft(Q')$, where
$X^- := X\setminus \{f\}$ and $X^+ := X\cup\{f\}$.
The fifth rule of \tab{FT CCSP} gives $X^- X^+ \rho \in\fft(Q{+}f)$, using that $Q{+}f{\ngoesto\alpha}$
for all $\alpha \in X^-\cup\{\tau\}$.
It suffices to show that $X^- X^+ \rho\notin\fft(P{+}f)$, for then the context
$\mathcal{C}(\_\!\_\,) = \_\!\_\, + f$ finishes the proof.

So assume, towards a contradiction, that $X^- X^+ \rho\in\fft(P{+}f)$.
This could not have been derived by the fourth rule of \tab{FT CCSP}, since surely
$X^+ \rho\notin\fft(P{+}f)$. Hence the fifth rule must have been used, so that
$P{\ngoesto\alpha}$ for all $\alpha \mathbin\in X\cup\{\tau\}$,
$P \mathbin{\goesto\rt} P'$ and $X^- X^+\rho\mathbin\in\fft(P')$.
Now $X X \rho \in \fft(P')$ by (*), and $X\rho \mathbin\in \fft(P')$ by \obsref{doubling}.
Thus $\rt X\rho \mathbin\in \rfft(P)$, yielding the required contradiction.

Finally, let $\sigma\mathbin=\pst$. Then $\{f\}\top \in \fft(Q+f)$, yet $\{f\}\top \notin \fft(P+f)$.
\end{proof}
\noindent
In the quest for a congruence, an alternative solution to changing partial failure trace equivalence
into rooted partial failure trace equivalence, is to restrict the expressiveness of CCSP$_\rt$.
Let CCSP$^g_\rt$ be the version of  CCSP$_\rt$ where the binary operator $+$ is replaced by guarded
sums $\sum_{i=1}^n\alpha_i. P_i$, for $n\mathbin\in\IN\cup\{\infty\}$. 
Here each guarded sum $\sum_{i=1}^n\alpha_i.\_\!\_\,$ counts as a separate $n$-ary operator.
The constant $0$ and action prefixing $\alpha.P$ are the special cases with $n=0$ and $n=1$.
The congruence property for guarded sums demands
that $P_i\ffteq Q_i$ for $i=1,\dots,n$ implies $\sum_{i=1}^n\alpha_i.P_i = \sum_{i=1}^n\alpha_i. Q_i$.
That this holds is an immediately consequence of \thrm{fft precongruence} below.

\section{Partial failure trace preorders}\label{preorders}

Here {\we} introduce (rooted) partial failure trace preorders $\sqsubseteq^*_{FT}$ and
$\sqsubseteq^{r*}_{FT}$ in such a way that (rooted) partial failure trace equivalence is its
kernel, i.e., $P \ffteq Q \Leftrightarrow (P\sqsubseteq^*_{FT} Q \wedge Q \sqsubseteq^*_{FT} P)$
and, likewise, $P \rffteq Q \Leftrightarrow (P\sqsubseteq^{r*}_{FT} Q \wedge Q \sqsubseteq^{r*}_{FT} P)$.
These preorders should be defined in such a way that they are precongruences.

\begin{definition}{fftpre}
Write $P\sqsubseteq^*_{FT}Q$, for processes $P,Q\mathbin\in \IP$, iff $\fft(P)\supseteq \fft(Q)$.
\end{definition}
\noindent
The orientation of the symbol $\sqsubseteq^*_{FT}$ aligns with that of the safety preorder,
discussed in the next section.
The following proposition says that the preorder $\sqsubseteq^*_{FT}$ is a \emph{precongruence} for the
operators of CCSP$^g_\rt$, or that recursion-free CCSP$^g_\rt$-contexts are \emph{monotone} w.r.t.\ $\sqsubseteq^*_{FT}$.
\begin{theorem}{fft precongruence}
  Let $\mathcal{C}$ be a unary recursion-free CCSP$^g_\rt$-context.\\ If $P\sqsubseteq^*_{FT}Q$ then
  $\mathcal{C}(P)\sqsubseteq^*_{FT}\mathcal{C}(Q)$.
\end{theorem}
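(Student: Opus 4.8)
The plan is to argue by structural induction on the recursion-free CCSP$^g_\rt$-context $\mathcal{C}$. The engine of the argument is the observation—already implicit in the proofs of \thrm{congruence parallel}, \thrm{congruence abstraction} and \thrm{congruence renaming}—that for each operator $f$ of CCSP$^g_\rt$ the set $\fft(f(\vec P))$ is obtained from the sets $\fft(P_i)$ by applying a fixed function $\Phi_f$, depending only on the syntactic parameters of $f$ (such as $S$, $I$ or $\Rn$), that is \emph{monotone} with respect to inclusion: if $\fft(P_i)\supseteq\fft(P_i')$ for all $i$ then $\Phi_f(\dots,\fft(P_i),\dots)\supseteq\Phi_f(\dots,\fft(P_i'),\dots)$. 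Granting this for every operator, the induction is immediate. For the hole $\mathcal{C}=\_\!\_$, monotonicity is the hypothesis $\fft(P)\supseteq\fft(Q)$ itself; if the hole does not occur in $\mathcal{C}$, then $\mathcal{C}(P)=\mathcal{C}(Q)$ and the claim is trivial; and for $\mathcal{C}=f(\mathcal{C}_1,\dots,\mathcal{C}_k)$ the induction hypothesis gives $\fft(\mathcal{C}_i(P))\supseteq\fft(\mathcal{C}_i(Q))$ for all $i$, whence $\fft(\mathcal{C}(P))=\Phi_f(\dots,\fft(\mathcal{C}_i(P)),\dots)\supseteq\Phi_f(\dots,\fft(\mathcal{C}_i(Q)),\dots)=\fft(\mathcal{C}(Q))$ by monotonicity of $\Phi_f$.

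It therefore remains to supply $\Phi_f$ and verify its monotonicity for each operator family of CCSP$^g_\rt$. For parallel composition, abstraction and renaming the equations (\ref{spar explicit}), (\ref{abstraction congruence}) and (\ref{renaming congruence}) already do the job; one only needs to note that $\col$ is monotone, that the set-level operator $F\spar{S}G$ is monotone in $F$ and $G$ (enlarging $F$ or $G$ only adds candidate decompositions), and that the right-hand sides of (\ref{abstraction congruence}) and (\ref{renaming congruence}), being existential quantifications over $\rho\in\fft(P)$, are monotone in $\fft(P)$. The only genuinely new case is the guarded sum $\sum_{i=1}^n\alpha_i.\_\!\_$ (which subsumes $0$ at $n=0$ and $\alpha.\_\!\_$ at $n=1$). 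Reading off \tab{FT CCSP}: if some $\alpha_i=\tau$, then $\sum_{i=1}^n\alpha_i.P_i$ is unstable, so only the first three rules can fire and
\[\fft(\textstyle\sum_{i=1}^n\alpha_i.P_i)=\{\top\}\cup\{a\rho\mid\exists i.~\alpha_i=a\in A \wedge \rho\in\fft(P_i)\}\cup\{\rho\mid\exists i.~\alpha_i=\tau \wedge \rho\in\fft(P_i)\};\]
if no $\alpha_i=\tau$, then the process is stable with $\I(\sum_{i=1}^n\alpha_i.P_i)=Z:=\{\alpha_i\mid\alpha_i\in A\}$, and with
\[H_\Sigma := \{a\rho\mid\exists i.~\alpha_i=a\in A\wedge\rho\in\fft(P_i)\}\cup\{X\rho\mid\exists i.~\alpha_i=\rt\wedge X\rho\in\fft(P_i)\wedge X\cap Z=\emptyset\}\cup\{Xa\rho\mid\exists i.~\alpha_i=\rt\wedge a\in X\wedge a\rho\in\fft(P_i)\wedge X\cap Z=\emptyset\}\]
one obtains $\fft(\sum_{i=1}^n\alpha_i.P_i)=\ini{Z}(H_\Sigma)$, since the refusal-set rules 4 are captured exactly by the closure $\ini{Z}$. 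In both cases stability and $\I$ are determined purely by the $\alpha_i$, and the resulting expression is visibly monotone in the sets $\fft(P_i)$. (These formulas are the unrooted shadows of (\ref{action})--(\ref{time-step}) together with the stable case of (\ref{sum}), and could alternatively be extracted from there.)

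I expect the main obstacle to be essentially bookkeeping: pinning down the guarded-sum formula correctly—in particular noticing that a $\tau$-summand destroys stability and thereby neutralises the last three rules of \tab{FT CCSP}, and that the $\ini{Z}$-closure is precisely what rule~4 contributes—and convincing oneself that the validity conditions (\ref{consistency})--(\ref{decomposition}) of \df{decomposition} do not obstruct monotonicity of $\spar{S}$ on trace-sets. Since \df{decomposition} phrases validity entirely in terms of the failure trace $\sigma$ and its two projections $\sigma_{\rm L},\sigma_{\rm R}$, passing from $(F,G)$ to a larger pair $(F',G')$ adds new witnessing decompositions without invalidating any existing ones, so monotonicity holds; everything else is routine.
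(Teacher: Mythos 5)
Your proposal is correct and follows essentially the same route as the paper: the paper likewise derives the two explicit formulas for $\fft(\sum_{i\in I}\alpha_i.P_i)$ (split on whether some $\alpha_i=\tau$, with the stable case expressed via $\ini{\I(G)}$ exactly as in your $H_\Sigma$) and then appeals to Properties (\ref{spar explicit}), (\ref{abstraction congruence}) and (\ref{renaming congruence}) for the static operators. You merely spell out what the paper leaves implicit, namely the structural induction on the context and the monotonicity of $\col$, of $F\spar{S}G$ in $F$ and $G$, and of the existentially quantified right-hand sides.
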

\begin{proof}
   Let $G:=\sum_{i\in I}\alpha.P_i$. In case $\alpha_i\neq\tau$ for all $i\in I$, then
   (\ref{action})--(\ref{sum}) imply (or generalise to)\vspace{-1ex} {\small
$$\fft(G) = \ini{I(G)}\!\left(\bigcup_{\begin{array}{@{}c@{}}\raisebox{5pt}[0pt][0pt]{$\scriptstyle i\in I$} \\[-11pt] \scriptstyle\alpha_i\in A\end{array}} \{\alpha_i\sigma \mid \sigma \mathbin\in \fft(P_i)\} \cup
            \bigcup_{\begin{array}{@{}c@{}}\raisebox{5pt}[0pt][0pt]{$\scriptstyle i\in I$} \\[-11pt] \scriptstyle\alpha_i=\rt \end{array}} \left(
                \begin{array}{@{}l@{}}
                \{X\sigma \mid X\sigma \mathbin\in \fft(P_i) \wedge \I(G)\cap X = \emptyset\} \cup \mbox{}\\
                \left\{X a\sigma \left|\, a\sigma \mathbin\in \fft(P_i) \begin{array}{@{}l}\mbox{}\wedge \I(G)\cap X = \emptyset\\
                                                  \mbox{}\wedge a\mathbin\in X\end{array}\right\}\right.
                \end{array}\right)\!\right)\!.$$}
   Here $\I(G) :=\{\alpha_i \mid i\in I\}\setminus\{\rt\}$.
   Moreover, if $\alpha_i=\tau$ for at least one $i\in I$, then
$$\fft(G) = \bigcup_{\begin{array}{@{}c@{}}\raisebox{5pt}[0pt][0pt]{$\scriptstyle i\in I$} \\[-11pt] \scriptstyle\alpha_i\in A\end{array}} \{\alpha_i\sigma \mid \sigma \in \fft(P_i)\} \cup
             \bigcup_{\begin{array}{@{}c@{}}\raisebox{5pt}[0pt][0pt]{$\scriptstyle i\in I$} \\[-11pt] \scriptstyle\alpha_i=\tau\end{array}} \fft(P_i)\;.$$
  There equations immediately imply the monotonicity of the guarded choice operators w.r.t.\ $\sqsubseteq^*_{FT}$.
  The monotonicity of $\spar{S}$, $\tau_I$ and $\Rn$ follows from Properties (\ref{spar explicit}),
  (\ref{abstraction congruence}) and (\ref{renaming congruence}).
\end{proof}

\begin{definition}{rfftpre}
  Write $P\sqsubseteq^{r*}_{FT}Q$, for $P,Q\mathbin\in \IP$, iff $\rfft(P)\supseteq\rfft(Q)$.
\end{definition}
\noindent
In this definition, the $\pst$ bit of \df{rfft} plays a crucial r\^ole. Without it, we
would have \mbox{$b \sqsubseteq^{r*}_{FT} \tau.b$}, since $\rfft(b)= \rfft(\tau.b) \uplus \{\st\}$. This would cause a
failure of monotonicity, as $a+b \not\sqsubseteq^{r*}_{FT} a+\tau.b$, for only the latter process has a partial
failure trace $\{a\}\top$. With the $\pst$ bit one obtains
$b \not\sqsubseteq^{r*}_{FT} \tau.b$, since $\pst\in\rfft(\tau.b)$, yet $\pst\notin\rfft(b)$.
The preorder $\sqsubseteq^{r*}_{FT}$ still relates processes of which only one is stable;
for instance $b \sqsubseteq^{r*}_{FT} \rec{X|X=b+\tau.X}$.

\begin{lemma}{initials equal}
If $P\sqsubseteq^{r*}_{FT}Q$ (or $P\sqsubseteq^{*}_{FT}Q$) and $P$ and $Q$ are both stable, then $\I(P)=\I(Q)$.
\end{lemma}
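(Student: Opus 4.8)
The plan is: first reduce the rooted case to the plain case, then extract from \tab{FT CCSP} a characterisation of $\I(R)$ for stable $R$ purely in terms of $\fft(R)$, and finally compare the two processes.

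For the reduction, suppose $P\sqsubseteq^{r*}_{FT}Q$, i.e.\ $\rfft(P)\supseteq\rfft(Q)$. Then $\fft(Q)\subseteq\rfft(Q)\subseteq\rfft(P)$. Every element of $\fft(Q)$ lies in $(A\cup\Pow(A))^*\top$, and hence is neither $\st$, nor $\pst$, nor of the form $\rt X\sigma$; so by \df{rfft} it must already belong to $\fft(P)$. Thus $P\sqsubseteq^{*}_{FT}Q$, and it suffices to prove the lemma under the hypothesis $\fft(P)\supseteq\fft(Q)$.

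The heart of the argument is the observation that for a stable process $R$ and an action $a\in A$ one has
\[
a\in\I(R)\iff a\top\in\fft(R),\qquad\qquad a\notin\I(R)\iff\{a\}\top\in\fft(R),
\]
so that for each $a$ exactly one of $a\top,\{a\}\top$ belongs to $\fft(R)$, and which one encodes membership of $a$ in $\I(R)$. The ``$\Rightarrow$'' directions are immediate from \tab{FT CCSP}: if $R\goesto{a}R'$ then the first and second rules give $a\top\in\fft(R)$; and if $R\ngoesto{a}$ then, $R$ being stable, $R\ngoesto{\alpha}$ for all $\alpha\in\{a\}\cup\{\tau\}$, so the first and fourth rules give $\{a\}\top\in\fft(R)$. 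For ``$\Leftarrow$'' I would run through the six rules of \tab{FT CCSP}: for a stable process, a partial failure trace starting with a visible action can only be produced by the second rule (the third needs a $\tau$-step; the remaining four start with a refusal set or with $\top$), which forces $R\goesto{a}$; and a partial failure trace of the shape $\{a\}\top$ can only be produced by the fourth or fifth rule (the sixth produces traces $Xb\rho$ with a genuine action right after the set), both of which require $R\ngoesto{a}$.

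Granting this, the lemma follows in one line. Assume $\fft(P)\supseteq\fft(Q)$ with $P,Q$ both stable. If $a\in\I(Q)$ then $a\top\in\fft(Q)\subseteq\fft(P)$, so $a\in\I(P)$; and if $a\notin\I(Q)$ then $\{a\}\top\in\fft(Q)\subseteq\fft(P)$, so $a\notin\I(P)$. Hence $\I(P)=\I(Q)$. The only slightly delicate point—the main obstacle, such as it is—is the ``$\Leftarrow$'' direction of the key observation: one must check that no other rule of \tab{FT CCSP} can slip in a trace $a\top$ or $\{a\}\top$ for a stable process, which uses stability to rule out the $\tau$-rule and uses the fact that the two rules bridging a $\rt$-transition always keep a refusal set in front of the next action.
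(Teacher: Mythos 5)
Your proof is correct and takes essentially the same route as the paper, which also uses the witnesses $a\top$ and $\{a\}\top$ to read off $\I(Q)$ from $\rfft(Q)\subseteq\rfft(P)$. You merely spell out in more detail the rule analysis showing that these two traces characterise membership in $\I(R)$ for stable $R$, which the paper leaves implicit.
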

\begin{proof}
If $a\in \I(Q)$ then $a\top\in \rfft(Q)\subseteq \rfft(P)$, so $a\mathbin\in \I(P)$.\\
Moreover, if $a\mathbin{\notin} \I(Q)$ then $\{a\}\top\mathbin\in \rfft(Q)\subseteq \rfft(P)$, so $a\mathbin{\notin} \I(P)$.
\end{proof}

\begin{theorem}{rfft precongruence}
  Let $\mathcal{C}$ be a unary recursion-free CCSP$_\rt$-context.\\ If $P\sqsubseteq^{r*}_{FT}Q$ then
  $\mathcal{C}(P)\sqsubseteq^{r*}_{FT}\mathcal{C}(Q)$.
\end{theorem}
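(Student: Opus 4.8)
The plan is to argue by structural induction on the recursion-free context $\mathcal{C}$. The base case $\mathcal{C}=\_\!\_\,$ is trivial, and since a unary context has a single hole the inductive step reduces to one monotonicity statement: for every operator $f$ of CCSP$_\rt$, if $\rfft(P)\supseteq\rfft(P')$ (the remaining arguments of $f$ held fixed) then $\rfft(f(\dots,P,\dots))\supseteq\rfft(f(\dots,P',\dots))$; composing such steps along $\mathcal{C}$ gives the theorem. Each required inclusion is read off the corresponding defining equation among (\ref{action})--(\ref{rren}) in the proof of \thrm{congruence}. The countable choice operator is handled exactly as $+$, via the variant of (\ref{sum}) noted after \thrm{congruence}.

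For the prefixing operators this is immediate: $\fft(R)$ is the part of $\rfft(R)$ obtained by dropping $\st$, $\pst$ and every trace of shape $\rt\sigma$ (cf.\ \obsref{rfft}), so $\rfft(P)\supseteq\rfft(P')$ gives $\fft(P)\supseteq\fft(P')$, and (\ref{action}), (\ref{time-step}) and the equation for $\rfft(\tau.P)$ build $\rfft(\alpha.P)$ monotonically out of $\fft(P)$ (the operators $\ini{Z}$ being monotone). For $\spar{S}$, $\tau_I$ and $\Rn$ one uses (\ref{rpar}), (\ref{rabs}), (\ref{rren}): the operations $F\spar{S}G$, the closure $\col$, the map $\Rn^{-1}$, and the hiding operation on surviving traces all depend positively on the trace sets they act on, so the only delicate points are the side conditions mentioning $\stable(\cdot)$ and $\I(\cdot)$; these cause no trouble because $\st\in\rfft(P')$ forces $\st\in\rfft(P)$, and when $P,P'$ are both stable \lemref{initials equal} gives $\I(P)=\I(P')$, so every occurrence of $\I(P)$, and of the clause ``$\st\in\rfft(P)\wedge\I(P)\cap I=\emptyset$'', carries over verbatim.

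The real work is the choice operator, via (\ref{sum}). Fix $R$ and assume $\rfft(P)\supseteq\rfft(Q)$; we want $\rfft(P+R)\supseteq\rfft(Q+R)$. It cannot happen that $P$ is unstable while $Q$ is stable, since then $\st\in\rfft(Q)\subseteq\rfft(P)$. If $P$ and $Q$ are both unstable, or both stable (in which case $\I(P)=\I(Q)$ by \lemref{initials equal}), then (\ref{sum}) places $P+R$ and $Q+R$ in the same one of its four cases and that case's right-hand side is visibly monotone in $\rfft(P)$. There remains the case $P$ stable, $Q$ unstable; this is where the $\pst$ component of \df{rfft} earns its keep. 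I will establish the auxiliary claim that here $\rfft(Q)\subseteq\{\top\}\cup\{a\sigma\mid a\in A\}$, i.e.\ every trace of $Q$ other than $\top$ begins with a visible action. Indeed, $\emptyset\top\in\fft(Q)$ would give $\pst\in\rfft(Q)\subseteq\rfft(P)$, contradicting stability of $P$; hence $\emptyset\top\notin\fft(Q)$, which by the first and fourth rules of \tab{FT CCSP} rules out any stable state reachable from $Q$ along $\tau$-transitions, and therefore --- since every rule of \tab{FT CCSP} that produces a set-initial trace except the lifting rule for $\goesto\tau$ requires the state to be stable --- rules out any set-initial trace in $\fft(Q)$ at all; and $Q\goesto\tau$ excludes $\st$ and every $\rt\sigma$ from $\rfft(Q)$, so $\rfft(Q)=\fft(Q)$. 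Granting the claim, (\ref{sum}) writes $\rfft(Q+R)$ as $\rfft(Q)$ together with either $\rfft(R)$ or $\{a\sigma\mid a\sigma\in\rfft(R)\}$; every element of $\rfft(Q)\subseteq\rfft(P)$ is either $\top$, which always lies in $\rfft(P+R)$, or begins with a visible action and then, lying in $\rfft(P)$, reappears in $\rfft(P+R)$ in whichever form (\ref{sum}) takes (either directly as an $\{a\sigma\mid a\sigma\in\rfft(P)\}$-element, or via the first line of the set $H$ inside the $\ini{}$-closure); the contributions of $R$ reappear likewise. Hence $\rfft(P+R)\supseteq\rfft(Q+R)$ in all three applicable forms of (\ref{sum}).

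I expect the choice operator, and within it the ``flat $\rfft(Q)$'' claim, to be the main obstacle: it is the one point at which one argues directly with the operational rules of \tab{FT CCSP} rather than algebraically, and the point at which the $\pst$ bit --- redundant for the equivalence $\rffteq$ but, as already flagged after \df{rfftpre}, indispensable for the preorder --- does the decisive work. Everything else is a routine verification that the right-hand sides of (\ref{action})--(\ref{rren}) are monotone in their arguments.
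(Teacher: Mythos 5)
Your proof is correct and follows essentially the same route as the paper's: monotonicity of each operator read off equations (\ref{action})--(\ref{rren}) using \lemref{initials equal} for the $\I(\cdot)$ side conditions, with the choice operator handled by a stability case analysis whose crux is exactly the paper's observation that when $P$ is stable and $Q$ is not, the $\pst$ bit forces every element of $\rfft(Q)$ to be $\top$ or action-initial. The only (immaterial) difference is organisational: the paper splits into six cases on the stability of $P$, $Q$ and $R$, whereas you case on the $P$/$Q$ relationship and treat $R$ uniformly.
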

\begin{proof}
  The monotonicity of $\spar{S}$, $\tau_I$ and $\Rn$ follows from Properties
  (\ref{rpar})--(\ref{rren}), using \lemref{initials equal}.
  The monotonicity of $\alpha.\_\!\_\,$ follows from (\ref{action})--(\ref{time-step}),
  using that $\rfft(P)\mathbin\supseteq\rfft(Q)$ implies $\fft(P)\mathbin\supseteq\fft(Q)$.

  The monotonicity of $+$ can be formulated as $P\sqsubseteq^{r*}_{FT}Q \Rightarrow P+R\sqsubseteq^{r*}_{FT}Q+R$
  (the requirement $R+P\sqsubseteq^{r*}_{FT}R+Q$ then follows by symmetry).
  So let $P\sqsubseteq^{r*}_{FT}Q$, i.e., $\rfft(P)\supseteq\rfft(Q)$.
  \begin{itemize}
  \item Let $\neg\stable(P)$ and $\neg\stable(R)$. Then $\st\notin\rfft(P)\supseteq\rfft(Q)$, so $\neg\stable(Q)$.
    Hence\vspace{-1ex}
     \[\rfft(P+R) = \rfft(P) \cup \rfft(R) \supseteq  \rfft(Q) \cup \rfft(R) = \rfft(Q+R).\vspace{-1ex}\]
  \item The case $\neg\stable(P)$ and $\stable(R)$ proceeds likewise.\\[1ex]
    \mbox{}\hspace{-10pt}In all remaining cases {\we} assume $\stable(P)$.
    Note that $\pst\notin\rfft(P)\supseteq\rfft(Q)$.
  \item Let $\stable(Q)$ and $\neg\stable(R)$. Then
    $\rfft(P+R) = \{a\sigma \mid a\sigma\in \rfft(P)\} \cup \rfft(R)\linebreak \supseteq \{a\sigma \mid a\sigma\in \rfft(Q)\} \cup \rfft(R) = \rfft(Q+R).$
  \item Let $\neg\stable(Q)$ and $\neg\stable(R)$. Now all $\sigma\in \rfft(Q)$ must have the form $\top$ or $a\zeta$.
    Namely if $X\zeta \in \rfft(Q)$ then one would have $\pst\in\rfft(Q)$.
    Hence\\
    $\rfft(P+R) = \{a\sigma \mid a\sigma\in \rfft(P)\} \cup \rfft(R) \supseteq \rfft(Q) \cup \rfft(R) = \rfft(Q+R)$.
  \item Let $\stable(Q)$ and $\stable(R)$. Then $\I(P)=\I(Q)$ by \lemref{initials equal}.   
    Using this, the desired monotonicity property follows from the fourth case of (\ref{sum}).
  \item Let $\neg\stable(Q)$ and $\stable(R)$. Again all $\sigma\in \rfft(Q)$ must have the form $\top$ or $a\zeta$.
    Let $H$ and $K$ be as in (\ref{sum}), but with $R$ substituted for $Q$.
    Now
    \begin{align*}
    \rfft(P+R) &= \ini{\I(P){\cup} \I(Q)}\big(H\big) {\cup} \{\st\} {\cup} K \supseteq H {\cup}\{\top\}
    \\& \supseteq \rfft(Q) {\cup}  \{a\sigma \mid a\sigma\mathbin\in \rfft(R)\} = \rfft(Q{+}R).
    \tag*{\qedhere}
    \end{align*}
\end{itemize}
\end{proof}

\begin{theorem}{full abstraction preorder}
$\sqsubseteq^{r*}_{FT}$ is the largest precongruence for the operators of CCSP$_\rt$
that is included in $\sqsubseteq^{*}_{FT}$.
\end{theorem}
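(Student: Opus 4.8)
The plan is to prove the two halves of the maximality claim separately: first that $\sqsubseteq^{r*}_{FT}$ is itself a precongruence for the operators of CCSP$_\rt$ that is included in $\sqsubseteq^{*}_{FT}$, and then that every precongruence included in $\sqsubseteq^{*}_{FT}$ is included in $\sqsubseteq^{r*}_{FT}$. The first half is immediate from what is already available: $\sqsubseteq^{r*}_{FT}$ is a precongruence by \thrm{rfft precongruence}, and by \df{rfft} the set $\fft(P)$ is obtained from $\rfft(P)$ by deleting the bits $\st$ and $\pst$ and all traces of the form $\rt\sigma$; hence $\rfft(P)\supseteq\rfft(Q)$ implies $\fft(P)\supseteq\fft(Q)$, so $\sqsubseteq^{r*}_{FT}$ is included in $\sqsubseteq^{*}_{FT}$.

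For the second half, let $\sqsubseteq$ be any precongruence for the operators of CCSP$_\rt$ that is included in $\sqsubseteq^{*}_{FT}$, and suppose $P\not\sqsubseteq^{r*}_{FT}Q$; I must show $P\not\sqsubseteq Q$. It suffices to produce a recursion-free CCSP$_\rt$-context $\mathcal{C}$ with $\mathcal{C}(P)\not\sqsubseteq^{*}_{FT}\mathcal{C}(Q)$: for then $P\sqsubseteq Q$ would entail $\mathcal{C}(P)\sqsubseteq\mathcal{C}(Q)$ by precongruence and hence $\mathcal{C}(P)\sqsubseteq^{*}_{FT}\mathcal{C}(Q)$, a contradiction. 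Such a $\mathcal{C}$ is delivered by the very construction in the proof of \thrm{full abstraction +}: $P\not\sqsubseteq^{r*}_{FT}Q$ means precisely that there is a $\sigma\in\rfft(Q)\setminus\rfft(P)$, and that proof, starting from such a $\sigma$, builds --- after a preliminary bijective renaming that makes some action $f$ fresh, and using the context $\_\!\_\,+f$ --- an explicit partial failure trace $\rho\in\fft(\mathcal{C}(Q))\setminus\fft(\mathcal{C}(P))$, which is exactly a witness for $\mathcal{C}(P)\not\sqsubseteq^{*}_{FT}\mathcal{C}(Q)$.

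Two points need spelling out. First, the freshening renaming transfers correctly to an arbitrary precongruence: if $\Rn_0$ is a bijective renaming of $A$ onto $A\setminus\{f\}$, then $\Rn_0^{-1}$ is again a renaming operator with $\Rn_0^{-1}(\Rn_0(R))\bis{}R$, hence $\Rn_0^{-1}(\Rn_0(R))\rffteq R$; applying \thrm{rfft precongruence} to $\Rn_0^{-1}$ shows that $\Rn_0(P)\sqsubseteq^{r*}_{FT}\Rn_0(Q)$ would imply $P\sqsubseteq^{r*}_{FT}Q$, so from $P\not\sqsubseteq^{r*}_{FT}Q$ we get $\Rn_0(P)\not\sqsubseteq^{r*}_{FT}\Rn_0(Q)$, run the construction on the renamed processes, and let the resulting context absorb $\Rn_0$. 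Second, the case analysis of \thrm{full abstraction +} depends only on the shape of $\sigma$: if $\sigma$ is none of $\st$, $\pst$, $\rt X\rho$ then already $\sigma\in\fft(Q)\setminus\fft(P)$ and the identity context suffices, and each of the three remaining shapes is handled by $\_\!\_\,+f$. The main obstacle --- indeed essentially the only thing to verify --- is that this case analysis never invokes the opposite difference $\rfft(P)\setminus\rfft(Q)$, so that the argument, which for the equivalence statement was reduced ``by symmetry'' to precisely the direction we have here, carries over verbatim to the preorder.
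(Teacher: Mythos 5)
Your proposal is correct and follows essentially the same route as the paper: the paper's proof likewise combines \thrm{rfft precongruence} and the inclusion $\rfft(P)\supseteq\rfft(Q)\Rightarrow\fft(P)\supseteq\fft(Q)$ for the first half, and for the second half reuses the context construction from the proof of \thrm{full abstraction +}, noting that only the direction $\sigma\in\rfft(Q)\setminus\rfft(P)$ is needed. The extra care you take over the freshening renaming and over checking that the case analysis never invokes the opposite difference is exactly the content the paper compresses into ``this proceeds exactly as in the proof of \thrm{full abstraction +}''.
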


\begin{proof}
\thrm{rfft precongruence} shows that $\sqsubseteq^{r*}_{FT}$ is a precongruence for the operators of CCSP$_\rt$.
By \df{rfft}, $\sqsubseteq^{r*}_{FT}$ is included in $\sqsubseteq^{*}_{FT}$,
i.e., $P \sqsubseteq^{r*}_{FT} Q$ implies $P\sqsubseteq^{*}_{FT} Q$.
Thus it suffices to find, for any two processes $P$ and $Q$ with $P \not\sqsubseteq^{r*}_{FT} Q$, a CCSP$_\rt$-context
$\mathcal{C}$ such that $\mathcal{C}(P) \not\sqsubseteq^{*}_{FT} \mathcal{C}(Q)$.
This proceeds exactly as in the proof of \thrm{full abstraction +}. Of course the symmetric case ``$\sigma\in \rfft(P) \setminus \rfft(Q)$''
is skipped, as it is not needed here.
\end{proof}

\section{The coarsest precongruence respecting safety properties}\label{coarsest}

In \cite{vG10} I proposed a way to define refinement preorders $\sqsubseteq$ on processes,
where $P \sqsubseteq Q$ says that for all practical purposes under consideration, $Q$ is at least as
suitable as $P$, i.e., it will never harm to replace $P$ by $Q$. In the stepwise design of systems,
$P$ may be closer to a specification, and $Q$ to an implementation. Each such a refinement preorder also
yields a semantic equivalence $\equiv$, with $P\equiv Q$ saying that for practical purposes $P$ and
$Q$ are equally suitable; i.e., one can be replaced by the other without untoward side effects.
Naturally, $P \equiv Q$ iff both $P \sqsubseteq Q$ and $Q \sqsubseteq P$.

The method of \cite{vG10} equips the choice of $\sqsubseteq$ with two parameters:
a class $\fG$ of good properties of processes, ones that may be required of processes in a given
range of applications, and a class $\fO$ of useful operators for combining processes.
The first requirement on $\sqsubseteq$ is
\mbox{$P \sqsubseteq Q \Rightarrow \forall \varphi\in \fG.\; (P \models \varphi \Rightarrow Q \models \varphi)$}
where $P\models\varphi$ denotes that process $P$ has the property $\varphi$.
If this holds, $\sqsubseteq$ \emph{respects} or \emph{preserves} the properties in $\fG$. 
The second requirement is that $\sqsubseteq$ is a \emph{precongruence} for $\fO$:
$P \sqsubseteq Q \Rightarrow \mathcal{C}[P] \sqsubseteq \mathcal{C}[Q]$ for any context $\mathcal{C}[\_\!\_\,]$ built from
operators from $\fO$. Given the choice of $\fG$ and $\fO$, the preorder recommended by \cite{vG10}
is the \emph{coarsest}, or largest, one satisfying both requirements.
This preorder is called \emph{fully abstract} w.r.t.\ $\fG$ and $\fO$, and
is characterised by $$P \sqsubseteq Q ~~\Leftrightarrow~~ \forall \fO\mbox{-context } \mathcal{C}[\_\!\_].\;
\forall\varphi\in \fG.\;  (\mathcal{C}[P] \models \varphi \Rightarrow \mathcal{C}[Q] \models \varphi).$$
The corresponding semantic equivalence identifies processes only when this is enforced by the two
requirements above.

Naturally, increasing the class $\fO$ of operators makes the resulting fully abstract preorder
\emph{finer}, or smaller. The same holds when increasing the class $\fG$ of properties.
In \cite{vG10} I employed a class $\fO$ of operators that was effectively equivalent to the
operators of CCSP$^g$. Here I use the operators of CCSP$^g_\rt$, thus adding time-outs.
In this section, following \cite[Section 3]{vG10}, I take as class $\fG$ of good properties the
\emph{safety properties}, saying that something bad will never happen.
What exactly counts as a safety property could be open to some debate.
However, the \emph{canonical safety property}, proposed in \cite{vG10}, is definitely in this class.
{\We} first characterise the preorder $\sqsubseteq_{\it safety}$ that is fully abstract w.r.t.\ the
operators of CCSP$^g_\rt$ and this single safety property. It turns out to be $\sqsubseteq^{*}_{FT}$.
Then {\we} argue that adding any other safety properties to the class $\fG$ could not possible
make the resulting preorder any finer.
An immediately corollary of this and \thrm{full abstraction preorder} is that $\sqsubseteq^{r*}_{FT}$
is fully abstract for the operators of CCSP$_\rt$ and the class of safety properties.

\subsection*{Full abstraction w.r.t.\ the canonical safety property}

To formulate the canonical safety property, assume that the alphabet $A$
of visible actions contains one specific action $b$, whose occurrence
is \emph{bad}. The property now says that \textbf{$b$ will never happen}.

\begin{defi}\label{df:canonical safety}\rm
A process $P$ satisfies the \emph{canonical safety property}, notation
$P\models \textit{safety}(b)$, if no partial failure trace of $P$ contains the action $b$.
\end{defi}

\begin{lemma}{initial}
If $a\eta\in\fft(P)$ then $P{\goesto{a}}$ or $P{\goesto{\tau}}$.
\end{lemma}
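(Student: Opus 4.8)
The plan is to argue directly by inspecting which rule of \tab{FT CCSP} was applied last in a derivation of $a\eta \in \fft(P)$; no actual induction is needed, since the induction hypothesis is never invoked. The key observation is that the leading symbol of the partial failure trace $a\eta$ is the visible action $a \in A$, and this alone rules out four of the six rules. Indeed, the first rule produces only the trace $\top$, whose leading (and only) symbol is $\top$, not an action; and the fourth, fifth and sixth rules each produce a trace whose leading symbol is a refusal set $X \in \Pow(A)$, which can never coincide with $a$. Hence the last rule applied must be the second or the third.

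In the case of the second rule, the conclusion has the form $a'\rho \in \fft(P)$ with a premise $P \goesto{a'} y$; matching leading symbols of the trace forces $a' = a$, so $P \goesto{a}$. In the case of the third rule, there is a premise $P \goesto{\tau} y$ for some $y$, so $P \goesto{\tau}$ (the second premise of that rule is not even needed). In either case the disjunction in the statement holds, which completes the proof.

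I do not expect any real obstacle: the statement is immediate once one notes that, among the inference rules generating $\fft$, only action prefixing and $\tau$-lifting (the second and third rules) can contribute a partial failure trace whose first element is a visible action. The only point worth spelling out is the trivial one that refusal sets and visible actions are syntactically distinct kinds of symbols, so a trace starting with some $a \in A$ cannot have been produced by any rule that prepends a set $X$.
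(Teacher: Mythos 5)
Your proof is correct and is essentially the paper's own argument, which states in one line that $a\eta\in\fft(P)$ can only be derived from the second or third rule of Table~\ref{tab:FT CCSP}; you merely spell out why the other four rules are excluded (their conclusions begin with $\top$ or a refusal set) and why each of the two remaining rules yields the required transition. No induction is needed, exactly as you observe.
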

\begin{proof}
That $a\eta\in\fft(P)$ can be derived only from the second or third rule of \tab{FT CCSP}.
\end{proof}

\begin{corollary}{initial}
If $X a\eta\in\fft(P)$ is obtained from the fourth rule of \tab{FT CCSP}, then $a\notin X$.
\end{corollary}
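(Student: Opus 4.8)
The plan is to unfold the one rule that is assumed to have produced the derivation and then invoke the lemma immediately preceding the corollary. Concretely, suppose $X a\eta \in \fft(P)$ is obtained from the fourth rule of \tab{FT CCSP}. Matching the conclusion $X\rho\in\fft(x)$ of that rule against $X a\eta\in\fft(P)$ forces $x=P$ and $\rho = a\eta$, so the premises of the rule instance tell us two things: first, $P{\ngoesto{\alpha}}$ for all $\alpha \in X\cup\{\tau\}$, and second, $a\eta \in \fft(P)$.

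From the second premise and \lemref{initial} we get $P{\goesto{a}}$ or $P{\goesto{\tau}}$. The first premise, applied with $\alpha = \tau$, rules out $P{\goesto{\tau}}$, so we must have $P{\goesto{a}}$. Now I would argue by contradiction: if $a\in X$, then $a\in X\cup\{\tau\}$, so the first premise gives $P{\ngoesto{a}}$, contradicting $P{\goesto{a}}$. Hence $a\notin X$, which is exactly the claim.

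There is essentially no obstacle here; the corollary is a direct two-line consequence of \lemref{initial} together with the side condition of the fourth rule. The only thing to be a little careful about is that the hypothesis is specifically that the \emph{last} application in the derivation of $X a\eta$ is the fourth rule (so that we really do know $P$ refuses everything in $X\cup\{\tau\}$ and that $a\eta$, rather than some other suffix, is the sub-failure-trace fed into that rule); once that reading is fixed, the argument above goes through verbatim.

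\begin{proof}
Suppose $X a\eta \in \fft(P)$ is the conclusion of an instance of the fourth rule of \tab{FT CCSP}. Then $P{\ngoesto{\alpha}}$ for all $\alpha\mathbin\in X\cup\{\tau\}$ and $a\eta\in\fft(P)$. By \lemref{initial}, the latter yields $P{\goesto{a}}$ or $P{\goesto{\tau}}$; since $P{\ngoesto{\tau}}$, we have $P{\goesto{a}}$. If $a\in X$ then $P{\ngoesto{a}}$ by the side condition, a contradiction. Hence $a\notin X$.
\end{proof}
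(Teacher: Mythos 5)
Your proof is correct and is essentially identical to the paper's own argument: both unfold the premises of the fourth rule, apply \lemref{initial} to $a\eta\in\fft(P)$, use $P{\ngoesto\tau}$ to conclude $P{\goesto a}$, and then read off $a\notin X$ from the refusal condition. Nothing to add.
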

\begin{proof}
  The fourth rule requires that $P{\ngoesto\alpha}$ for all $\alpha\in X\cup\{\tau\}$, and
  that $a\eta\in\fft(P)$. The latter implies $P{\goesto{a}}$ or $P{\goesto{\tau}}$ by \lemref{initial}.
  So $P{\goesto{a}}$ and $a\notin X$.
\end{proof}
\noindent
As encountered already in the proof of \thrm{full abstraction +}, {\we} call an action $b$ \emph{fresh}
for a process $P$ if for each process $R$ reachable from $P$ one has $R{\ngoesto{b}}$.
In that case $P\models \textit{safety}(b)$.
{\We} call it \emph{fresh} for a partial failure trace $\sigma \in (A \cup \Pow(A))^*\top$ if $b$ does not occur in
$\sigma$, either as action or inside a refusal set\ in $\sigma$.

\begin{theorem}{may}
For each $\sigma \in (A \cup \Pow(A))^*\top$ for which $b$ is fresh, there exists a CCSP$^g_\rt$ process $T_\sigma$
such that\vspace{-2ex}
$$\tau_{A\setminus\{b\}}(T_\sigma \spar{A\setminus\{b\}} P)  \not\models\textit{safety}(b) ~~\Leftrightarrow~~ \sigma \in \fft(P)$$
for each  CCSP$^g_\rt$ process $P$ for which $b$ is fresh.
\end{theorem}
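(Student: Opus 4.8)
The plan is to define the test process $T_\sigma$ by induction on the length of $\sigma$ and then to prove the biconditional by a matching induction. For the construction I would take $T_\top := b.0$; for an action $a$ (necessarily $a\neq b$, since $b$ is fresh for $\sigma$) $T_{a\rho} := a.T_\rho + \tau.0$; and for a refusal set $X$ $T_{X\rho} := \rt.U_{X,\rho} + \sum_{a\in X}a.0$, where $U_{X,\top} := b.0$, $U_{X,Y\rho'} := T_{Y\rho'}$ for a refusal set $Y$, and $U_{X,a\rho'} := a.T_{\rho'} + \sum_{a'\in X\setminus\{a\}}a'.0$ for an action $a$. Each of these is a recursion-free CCSP$^g_\rt$ process: every summand is guarded, and infinite guarded sums are permitted. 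The guiding intuition is that the enclosing context hides every visible action on which it synchronises, so in $\tau_{A\setminus\{b\}}(T_\sigma \spar{A\setminus\{b\}} P)$ the only visible action is $b$, every other transition is labelled $\tau$ or $\rt$, and $b$ becomes enabled exactly when the test component has been driven to $b.0$, i.e.\ once $\sigma$ has been ``replayed'' against $P$. The summand $\tau.0$ in $T_{a\rho}$ keeps the test unstable, so the composition cannot bridge any time-out of $P$ while the test waits to synchronise on $a$; this matches the fact that $a\rho\in\fft(P)$ can only be derived using the second and third rules of \tab{FT CCSP}, never a time-out rule. Dually, the dead-end summands $\sum_{a\in X}a.0$ divert any premature synchronisation on an action of $X$, while the leading $\rt$ together with the stability of $T_{X\rho}$ ensures that the test gets past $X$ only after $P$ has genuinely idled in a stable state refusing $X$, and that along the way $P$'s own time-outs out of states refusing $X$ stay available (matching the third, fifth and --- within $U_{X,a\rho'}$ --- sixth rules). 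The case split in $U_{X,\rho}$ is precisely what reconciles these opposing requirements for a subword of the form ``$Xa$''.

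As a preliminary I would record a characterisation that follows directly from the path-based analysis of partial failure traces (\df{path} and the discussion after it): the action $b$ occurs in some partial failure trace of a process $R$ iff there is a finite path $R = R_0 \to \cdots \to R_n$ all of whose transitions are labelled $\tau$ or $\rt$, with every $\rt$-transition leaving a stable state, and with $R_n{\goesto{b}}$. Using this, the biconditional is proved by induction on $|\sigma|$ with a case distinction on the first symbol. The case $\sigma=\top$ is immediate, as $T_\top = b.0$ can always perform $b$. For $\sigma=a\rho$ one applies the preliminary together with the induction hypothesis for $T_\rho$ and the $a$-successors of $P$, and checks that the $\tau.0$ summand rules out every route to $b$ other than ``$P$ reaches, by a possibly empty sequence of $\tau$-transitions followed by an $a$-transition, some $P'$ with $\rho\in\fft(P')$'' --- which is precisely $a\rho\in\fft(P)$. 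For $\sigma=X\rho$ one proceeds in two steps: (i) in the composition the leading $\rt$ of $T_{X\rho}$ can be bridged precisely from configurations $T_{X\rho}\spar{A\setminus\{b\}}P'$ with $P'$ stable and $\I(P')\cap X=\emptyset$, and by the third and fifth rules such a $P'$ is reachable via the composition's $\tau$- and $\rt$-transitions iff $X\rho\in\fft(P')$, while $X\rho\in\fft(P)\Leftrightarrow X\rho\in\fft(P')$ for every such $P'$; (ii) for such a $P'$ one shows $\tau_{A\setminus\{b\}}(U_{X,\rho}\spar{A\setminus\{b\}}P') \not\models \textit{safety}(b) \Leftrightarrow X\rho\in\fft(P')$ by a case analysis on $\rho$ --- $\rho=\top$ via the fourth rule, $\rho=Y\rho'$ via the induction hypothesis for $T_{Y\rho'}$ together with the fourth rule, and $\rho=a\rho'$ by a direct argument matching the fourth and sixth rules and invoking the induction hypothesis for $T_{\rho'}$ once the (hidden) synchronisation on $a$ has occurred.

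The hard part will be keeping the operational behaviour of $\tau_{A\setminus\{b\}}(T_\sigma\spar{A\setminus\{b\}}P)$ in exact correspondence with the rule applications of \tab{FT CCSP}: one must at the same time show that freshness of $b$ together with the dead-end summands forbid every spurious route to $b$ (so that no $\sigma\notin\fft(P)$ makes $b$ reachable), and that the deliberate instability of $T_{a\rho}$ versus the deliberate stability of $T_{X\rho}$ and $U_{X,a\rho'}$ still leave room for every legitimate time-out of $P$ (so that no $\sigma\in\fft(P)$ is missed). These two requirements genuinely pull against one another --- the instability needed before a ``bare'' action must not be in force immediately after the time-out that ends an idle period refusing $X$, and conversely --- and step (ii) for $\rho=a\rho'$ is where the construction has to separate them most carefully.
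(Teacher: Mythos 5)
Your overall strategy---test processes defined by induction on $\sigma$ and replayed against $P$ inside the hiding context, with deliberate instability of the test used to freeze $P$'s time-outs and dead-end summands $\sum_{a\in X}a$ used to divert premature synchronisations---is exactly the paper's, and most of your clauses coincide with its construction: your $T_{a\rho}=a.T_\rho+\tau.0$ is the paper's $\tau+c.T_\rho$, your $U_{X,Y\rho'}$ and your $U_{X,a\rho'}$ \emph{for $a\in X$} match the paper's $T_{X\eta}$ and $T_{Xd\rho}$, and your $T_\top=b.0$ in place of the paper's $\rt.b$ is a harmless (arguably more robust) variant. However, your uniform clause $U_{X,a\rho'}:=a.T_{\rho'}+\sum_{a'\in X\setminus\{a\}}a'.0$ is wrong when $a\notin X$, and this breaks the direction ``$\Rightarrow$''. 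The distinction you miss is the one the paper makes in its subsection on system- versus environment-ended periods of idling: when $a\in X$, the idle period $X$ is ended by a time-out of $P$ (the sixth rule of \tab{FT CCSP}), so the test may sit \emph{stably} offering all of $X$ after its own $\rt$ and permit exactly one further time-out of $P$ before the synchronisation on $a$; but when $a\notin X$ the period is ended by the environment, the sixth rule is unavailable, and $P$ must offer $a$ (up to $\tau$s) already in a state refusing $X$, without any further time-out. The paper enforces this by putting $T_{X a\rho}=\rt.(\tau+a.T_\rho)+\sum_{a'\in X}a'$ for $a\notin X$, so that after the test's $\rt$ the composition is \emph{unstable} and no $\rt$ of $P$ can be bridged. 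Your $U_{X,a\rho'}$ is stable also for $a\notin X$, so it accepts too much.

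Concretely, take $\sigma=\{c\}a\top$ with $a\neq c$ and $P=\rt.(a+c)$, and write $B=A\setminus\{b\}$. Your test is $T_\sigma=\rt.(a.b+c)+c$, and the composition admits the path
$\tau_{B}(T_\sigma \spar{B} P)\goesto{\rt}\tau_{B}((a.b+c) \spar{B} \rt.(a+c))\goesto{\rt}\tau_{B}((a.b+c) \spar{B} (a+c))\goesto{\tau}\tau_{B}(b \spar{B} 0)\goesto{b}\cdots$,
in which every $\rt$-transition leaves a stable state refusing $A\cup\{\tau\}$; hence $A\,b\,\top\in\fft(\tau_{B}(T_\sigma\spar{B}P))$ and $\tau_{B}(T_\sigma\spar{B}P)\not\models\textit{safety}(b)$. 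Yet $\{c\}a\top\notin\fft(P)$: the fourth rule fails because $a\top\notin\fft(\rt.(a+c))$, the sixth fails because $a\notin\{c\}$, and the fifth fails because $a+c$ does not refuse $\{c\}$. This is also exactly where your proof sketch would get stuck: in step (ii) for $\rho=a\rho'$, a time-out of $P$ taken from a configuration $U_{X,a\rho'}\spar{B}P_j$ can only be matched, in the backwards reconstruction of a partial failure trace of $P$, by the sixth rule of \tab{FT CCSP}, which requires $a\in X$. The fix is simply to split the clause: keep your $U_{X,a\rho'}$ when $a\in X$, and set $U_{X,a\rho'}:=T_{a\rho'}=a.T_{\rho'}+\tau.0$ when $a\notin X$, after which your induction goes through along the same lines as the paper's.
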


\begin{proof}
Let $B := A{\setminus}\{b\}$, the set of all visible actions except $b$.
Define $T_\sigma$ with structural induction on $\sigma$:\\[1ex]
\mbox{}\hfill
$T_\top := \rt.b$
\hfill
$T_{c\rho} := \tau + c.T_\rho$
\hfill
$T_{X\eta} := \rt.T_\eta +\sum_{a\in X} a$
\hfill
$T_{X d\rho} := \rt.(d.T_\rho +\sum_{a\in X\setminus\{d\}} a)+\sum_{a\in X} a$
\hfill\mbox{}\\[1ex]
where, $c,d\in B$, $X\subseteq B$, $d\in X$, and $\eta$ is not of the form $d\rho$ with $d\mathbin\in X$.
Note that $\tau_{B}(T_\sigma \spar{B} P)  \not\models\textit{safety}(b)$ iff there is a path
$\pi := \tau_{B}(T_\rho \spar{B} P) \goesto{\alpha_1} Q_1 \goesto{\alpha_2} \cdots \goesto{\alpha_n} Q_\ell \goesto{b} Q'$,
and this path gives rise to a partial failure trace $\beta b \top \in \fft(\tau_{B}(T_\sigma \spar{B} P))$
by application of the rules of \tab{FT CCSP}. Due to the operator $\tau_B$, each of the $\alpha_i$
must be $\tau$ or $\rt$.

To obtain ``$\Rightarrow$'' {\we} assume $\tau_{B}(T_\sigma \spar{B} P) \not\models\textit{safety}(b)$
and apply structural induction on $\sigma$.

\begin{itemize}
\item Let $\sigma=\top$. Then $\sigma \in \fft(P)$ for all $P$.
\item Let $\sigma=c\rho$.
  For the process $\tau_{B}(T_\sigma \spar{B} P)$ to ever reach a $b$-transition, the component
  $T_\sigma = \tau + c.T_\rho$ must take the $c$-transition to $T_\rho$. 
  So the beginning of $\pi$ must have the form
  $$\tau_{B}(T_\sigma \spar{B} P) \goesto{\alpha_1}
   \tau_{B}(T_\sigma \spar{B} P_1) \goesto{\alpha_2}
   \cdots                       \goesto{\alpha_n}
   \tau_{B}(T_\sigma \spar{B} P_n) \goesto{\tau}
   \tau_{B}(T_\rho \spar{B} P')$$
  for some $n\geq 0$, where $\tau_{B}(T_\rho \spar{B} P')  \not\models\textit{safety}(b)$. 
  Let $P_0:=P$.
  All of these $\alpha_i$ must be $\tau$, for $\rt$-transitions lose out from the outgoing
  $\tau$-transition from $\tau_{B}(T_\sigma \spar{B} P_{i-1})$. Technically, if $\alpha_{i}=\rt$,
  for $1\leq i\leq n$,
  then the fifth and sixth rules of \tab{FT CCSP} do not allow any partial failure trace of 
  $\tau_{B}(T_\rho \spar{B} P_{i})$ to give rise to a partial failure trace of $\tau_{B}(T_\sigma \spar{B} P_{i-1})$.
  The above path must stem from a path
  $$\qquad T_\sigma \spar{B} P \goesto{\beta_1}
   T_\sigma \spar{B} P_1 \goesto{\beta_2}
   \cdots               \goesto{\beta_n}
   T_\sigma \spar{B} P_n \goesto{c}
   T_\rho \spar{B} P'$$
  that in turns stems from a path 
  $P\goesto{\beta_1} P_1\goesto{\beta_2} \cdots \goesto{\beta_n} P_n \goesto{c} P'$.
  All of these $\beta_i$ must be $\tau$, as the synchronisation on $B$ stops all visible actions
  stemming from $P$.

  By induction $\rho \in \fft(P')$.
  The second and third rules of \tab{FT CCSP} yield $\sigma \in \fft(P)$.
\item Let $\sigma=X\eta$, where $X\subseteq B$ and $\eta$ is not of the form $d\rho$ with $d\mathbin\in X$.
  For the process $\tau_{B}(T_\sigma \spar{B} P)$ to ever reach a $b$-transition, the component
  $T_\sigma = \rt.T_\eta +\sum_{a\in X} a$ must take the $\rt$-transition to $T_\eta$. 
  So the beginning of $\pi$ must have the form
  $$\tau_{B}(T_\sigma \spar{B} P) \goesto{\alpha_1}
   \tau_{B}(T_\sigma \spar{B} P_1) \goesto{\alpha_2}
   \cdots                       \goesto{\alpha_n}
   \tau_{B}(T_\sigma \spar{B} P_n) \goesto{\rt}
   \tau_{B}(T_\eta \spar{B} P_n)$$
  for some $n\geq 0$, where $\tau_{B}(T_\eta \spar{B} P_n)  \not\models\textit{safety}(b)$. 
  By induction $\eta \in \fft(P_n)$.
  As above, again using that visible actions from $P$ are blocked by $\spar{B}$,
  it must be that $P\goesto{\alpha_1} P_1\goesto{\alpha_2} \cdots \goesto{\alpha_n} P_n$.\linebreak[3]
  I proceed with induction on $n$.

  \emph{Base case:} It must be that $P=P_n {\ngoesto\alpha}$ for all $\alpha\in X\cup\{\tau\}$, for
  otherwise $\tau_{B}(T_\sigma \spar{B} P_n)$ would have an outgoing $\tau$-transition that wins
  from the $\rt$-transition to $\tau_{B}(T_\eta \spar{B} P_n)$. The fourth rule of \tab{FT CCSP} yields $\sigma \in \fft(P)$.

  \emph{Induction step:} By induction  $\sigma \in \fft(P_1)$. In case $\alpha_1=\tau$ one obtains
  $\sigma \in \fft(P)$ by the third rule of \tab{FT CCSP}. In case $\alpha_1=\rt$, it must be that
  $P {\ngoesto\alpha}$ for all $\alpha\in X\cup\{\tau\}$, for otherwise $\tau_{B}(T_\sigma \spar{B} P)$
  would have an outgoing $\tau$-transition that wins from the $\rt$-transition to $\tau_{B}(T_\sigma \spar{B} P_1)$. 
  The fifth rule of \tab{FT CCSP} yields $\sigma \in \fft(P)$.

\item Let $\sigma\mathbin=X d\rho$ with $d\mathbin\in X$. Writing $R$ for $d.T_\rho +\sum_{a\in X\setminus\{d\}} a$,
  the first part of $\pi$ must have the form\vspace{-3pt}
  $$\tau_{B}(T_\sigma \spar{B} P) \goesto{\alpha_1}
   \tau_{B}(T_\sigma \spar{B} P_1) \goesto{\alpha_2}
   \cdots                       \goesto{\alpha_n}
   \tau_{B}(T_\sigma \spar{B} P_n) \goesto{\rt}
   \tau_{B}(R \spar{B} P_n)$$
\vspace{-3ex}
  $$\tau_{B}(R \spar{B} P_n) \goesto{\alpha_{n+1}}
   \tau_{B}(R \spar{B} P_{n+1}) \goesto{\alpha_{n+2}}
   \cdots                       \goesto{\alpha_m}
   \tau_{B}(R \spar{B} P_m) \goesto{\tau}
   \tau_{B}(T_\rho \spar{B} P')$$
  for some $m \geq n\geq 0$, where $\tau_{B}(T_\rho \spar{B} P')  \not\models\textit{safety}(b)$. 
  By induction $\rho \in \fft(P')$.
  As above, it must be that 
  $P\goesto{\alpha_1} P_1\goesto{\alpha_2} \cdots \goesto{\alpha_n} P_n \goesto{\alpha_{n+1}} P_{n+1}\goesto{\alpha_{n+2}} \cdots \goesto{\alpha_m} P_m \goesto{d} P'$.
  Necessarily, $P_n {\ngoesto\alpha}$ for all $\alpha\in X\cup\{\tau\}$, for otherwise $\tau_{B}(T_\sigma \spar{B} P_n)$
  would have an outgoing $\tau$-transition that wins from the $\rt$-transition to $\tau_{B}(R \spar{B} P_n)$. 
  In particular, $P_n{\ngoesto d}$, whereas $P_m{\goesto d}$, so $n\neq m$.
  Moreover, $\alpha_{n+1}=\rt$, for $\alpha_{n+1}=\tau$ contradicts with $P_n{\ngoesto \tau}$.
  Let $k\leq m$ be the highest index such that $\alpha_{k}=\rt$.
  Then $d\rho\in\fft(P_k)$, by the second and third rules of \tab{FT CCSP}.
  For each index $i$ with $\alpha_i=\rt$ one must have  $P_{i-1} {\ngoesto\alpha}$ for all $\alpha\in X\cup\{\tau\}$,
  for the same reason as above. The sixth rule of \tab{FT CCSP} yields $X d\rho\in\fft(P_{k-1})$.
  Successive applications of the third and fifth rules yield  $X d\rho\in\fft(P_{j})$ for all $j<k$,
  so in particular $\sigma\in\fft(P)$.
\end{itemize}

To obtain ``$\Leftarrow$'', {\we} will show that 
$\sigma \mathbin\in \fft(P)$ implies $A b \top \in \fft(\tau_{B}(T_\sigma \spar{B} P))$.
Here $A$ is the set of all visible actions.
In doing so, {\we} apply induction on the derivation of $\sigma \mathbin\in \fft(P)$.

Suppose $\sigma \mathbin= \top \mathbin\in\fft(P)$ is derived from the first rule of \tab{FT CCSP}.
Since $T_\sigma = \rt.b.0$ one has
$\tau_{B}(T_\sigma \spar{B} P) \goesto \rt \tau_{B}(b.0 \spar{B} P) \goesto b \tau_{B}(0 \spar{B} P)$.
Hence $A b\top \in \fft(\tau_{B}(T_\sigma \spar{B} P))$, using the first, second and sixth rules of \tab{FT CCSP},
and using that $\tau_{B}(T_\sigma \spar{B} P){\ngoesto\alpha}$ for all $\alpha\in A\cup\{\tau\}$.

Suppose $\sigma = c\rho\in\fft(P)$ is derived from the second rule of \tab{FT CCSP}.
Then $P\goesto c P'$ and $\rho\in\fft(P')$.
By induction, $A b \top \in \fft(\tau_{B}(T_\rho \spar{B} P'))$.
Since $\tau_{B}(T_\sigma \spar{B} P) \goesto\tau \tau_{B}(T_\rho \spar{B} P')$,
it follows with the third rule of \tab{FT CCSP}
that $A b \top \in \fft(\tau_{B}(T_\sigma \spar{B} P))$.

Suppose $\sigma\in\fft(P)$ is derived from the third rule of \tab{FT CCSP}.
Then $P\goesto\tau P'$ and $\sigma\in\fft(P')$.
By induction, $A b \top \in \fft(\tau_{B}(T_\sigma \spar{B} P'))$.
Since $\tau_{B}(T_\sigma \spar{B} P) \goesto\tau \tau_{B}(T_\sigma \spar{B} P')$,
it follows with the third rule of \tab{FT CCSP} that $A b \top \in \fft(\tau_{B}(T_\sigma \spar{B} P))$.

Suppose $\sigma\mathbin=X\eta\mathbin\in\fft(P)$ is derived from the fourth rule of \tab{FT CCSP}.
Then $P{\ngoesto{\alpha}}$ for all $\alpha\mathbin\in X\cup\{\tau\}$ and $\eta\in\fft(P)$.
In the special case that $\eta=d\rho$, by \corref{initial} $d\notin X$.
By induction $A b\top\mathbin\in \fft(\tau_{B}(T_\eta \spar{B} P))$.
Since $\tau_{B}(T_\sigma \spar{B} P) \goesto\rt \tau_{B}(T_\eta \spar{B} P)$,
and $\tau_{B}(T_\sigma \spar{B} P){\ngoesto\alpha}$ for all $\alpha\in A\cup\{\tau\}$,
the fifth rule of \tab{FT CCSP} yields $A b \top \in \fft(\tau_{B}(T_\sigma \spar{B} P))$.

Suppose $\sigma\mathbin=X\eta\mathbin\in\fft(P)$ is derived from the fifth rule of \tab{FT CCSP}.
Then $P{\ngoesto{\alpha}}$ for all $\alpha\mathbin\in X\cup\{\tau\}$,
$P \goesto\rt P'$ and $\sigma\in\fft(P')$.
So $\tau_{B}(T_\sigma \spar{B} P){\ngoesto\alpha}$ for all $\alpha\in A\cup\{\tau\}$,
and $A b\top\mathbin\in \fft(\tau_{B}(T_\sigma \spar{B} P'))$ by induction.
Since $\tau_{B}(T_\sigma \spar{B} P) \goesto\rt \tau_{B}(T_\sigma \spar{B} P')$,
the fifth rule yields $A b \top \in \fft(\tau_{B}(T_\sigma \spar{B} P))$.

Suppose $\sigma\mathbin=X d\rho\mathbin\in\fft(P)$ is derived from the sixth rule of \tab{FT CCSP}.
Then $P{\ngoesto{\alpha}}$ for all $\alpha\mathbin\in X\cup\{\tau\}$, $d\mathbin\in X$,
$P \goesto\rt P'$ and $d\rho\in\fft(P')$. So $\tau_{B}(T_\sigma \spar{B} P){\ngoesto\alpha}$
and $\tau_{B}(R \spar{B} P){\ngoesto\alpha}$ for all $\alpha\in A\cup\{\tau\}$,
where $R := d.T_\rho +\sum_{a\in X\setminus\{d\}} a$.\vspace{-1pt}
By induction $A b\top\mathbin\in \fft(\tau_{B}(T_{d\rho} \spar{B} P'))$.
As $\tau_{B}(T_\sigma \spar{B} P) \goesto\rt \tau_{B}(R \spar{B} P)\linebreak[2] \goesto\rt \tau_{B}(R \spar{B} P')$,
two applications of the fifth rule of \tab{FT CCSP} yield $A b \top \in \fft(\tau_{B}(T_\sigma \spar{B} P))$.
\end{proof}

\begin{corollary}{safety}
$\sqsubseteq^{*}_{FT}$ is fully abstract w.r.t.\ the operators of CCSP$^g_\rt$ and the canonical safety property.
\end{corollary}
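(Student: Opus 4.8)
The plan is to read off the definition of full abstraction from \cite{vG10} and to show that the preorder it produces---the coarsest one that both respects the canonical safety property and is a precongruence for the operators of CCSP$^g_\rt$---coincides with $\sqsubseteq^*_{FT}$. This amounts to two inclusions, of which one is immediate and the other is essentially \thrm{may}.

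First {\we} would show that $\sqsubseteq^*_{FT}$ itself meets the two defining requirements, so that it is contained in the fully abstract preorder (which is the largest such). Being a precongruence is exactly \thrm{fft precongruence}. That $\sqsubseteq^*_{FT}$ respects $\textit{safety}(b)$ is immediate from \df{fftpre} and \df{canonical safety}: if $\fft(P)\supseteq\fft(Q)$ and no partial failure trace of $P$ mentions $b$, then a fortiori no partial failure trace of $Q$ does, so $P\models\textit{safety}(b)$ implies $Q\models\textit{safety}(b)$.

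For the reverse inclusion {\we} would argue contrapositively. Given $P\not\sqsubseteq^*_{FT}Q$, pick $\sigma\in\fft(Q)\setminus\fft(P)$; the goal is a CCSP$^g_\rt$-context $\mathcal{C}$ with $\mathcal{C}[P]\models\textit{safety}(b)$ but $\mathcal{C}[Q]\not\models\textit{safety}(b)$, which by the characterisation of full abstraction shows that $P$ and $Q$ are not related by the fully abstract preorder, hence that this preorder is included in $\sqsubseteq^*_{FT}$. Exactly as at the start of the proof of \thrm{full abstraction +}, a preliminary bijective renaming of $A$ onto $A\setminus\{b\}$---itself a CCSP$^g_\rt$-context---lets {\us} assume that $b$ is fresh for $P$, $Q$ and $\sigma$; then \thrm{may} supplies a process $T_\sigma$ and the context $\mathcal{C}[\_\!\_\,]:=\tau_{A\setminus\{b\}}(T_\sigma\spar{A\setminus\{b\}}\,\_\!\_\,)$ with $\mathcal{C}[R]\not\models\textit{safety}(b)\Leftrightarrow\sigma\in\fft(R)$ for every such $R$. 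Instantiating $R$ with $P$ and with $Q$ finishes the argument, and composing $\mathcal{C}$ with the renaming yields a context distinguishing the original $P$ and $Q$.

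Given \thrm{may} and \thrm{fft precongruence}, very little work remains---the corollary is essentially their conjunction, read through the definition of full abstraction, and the real difficulty was already absorbed into the proof of \thrm{may}. The one point that still needs a line of care is the freshness reduction: {\we} should check that a bijective renaming maps each element of $\fft(Q)\setminus\fft(P)$ to an element of $\fft(\Rn(Q))\setminus\fft(\Rn(P))$---an easy consequence of (\ref{renaming congruence}), since for an injective renaming $\Rn^{-1}$ of a failure trace over $A\setminus\{b\}$ is a singleton and the side condition in the definition of $\Rn^{-1}(\sigma)$ holds automatically---and that $T_\sigma$ together with its wrapping context stays inside CCSP$^g_\rt$, i.e.\ uses only guarded sums, $\spar{S}$, $\tau_I$ and $\Rn$; this holds because $T_\sigma$ is built solely from $0$, action prefixing and guarded choice.
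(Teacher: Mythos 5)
Your proposal is correct and follows essentially the same route as the paper's proof: establish that $\sqsubseteq^{*}_{FT}$ is a precongruence (\thrm{fft precongruence}) respecting $\textit{safety}(b)$, then for coarsest-ness reduce to the fresh-$b$ case via a bijective renaming and invoke \thrm{may} with the context $\tau_{A\setminus\{b\}}(T_\sigma \spar{A\setminus\{b\}} \_\!\_\,)$. The only (immaterial) difference is bookkeeping: you transport the witness $\sigma$ through the renaming using (\ref{renaming congruence}), whereas the paper picks $\sigma$ after renaming and strips $b$ from refusal sets via Property (*) from the proof of \thrm{full abstraction +}.
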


\begin{proof}
By \thrm{fft precongruence}, $\sqsubseteq^{*}_{FT}$ is a precongruence for the operators of CCSP$^g_\rt$.
By definition, $P \sqsubseteq^{*}_{FT} Q$ implies $P \models \textit{safety}(b) \Rightarrow Q \models \textit{safety}(b)$.
It remains to show that $\sqsubseteq^{*}_{FT}$ is the coarsest preorder with these properties.
To this end it suffices to find for any processes $P$ and $Q$ with $P \not\sqsubseteq^{*}_{FT} Q$
a recursion-free CCSP$^g_\rt$-context $\mathcal{C}[\_\!\_\,]$ such that $\mathcal{C}[P] \models\textit{safety}(b)$,
yet $\mathcal{C}[Q] \not\models\textit{safety}(b)$.

So assume $P \not\sqsubseteq^{*}_{FT} Q$. 
By applying a bijective renaming operator on $P$ and $Q$, mapping $A$ to $A\setminus\{b\}$,
{\we} may assume that $b$ is fresh for $P$ and $Q$.
Take $\sigma\in\fft(Q)\setminus\fft(P)$. By Property (*) in the proof of \thrm{full abstraction +}
I may assume that $b$ is fresh for $\sigma$. Now \thrm{may} yields
$\tau_{A\setminus\{b\}}(T_\sigma \spar{A\setminus\{b\}} P) \models\textit{safety}(b)$, yet
\plat{$\tau_{A\setminus\{b\}}(T_\sigma \spar{A\setminus\{b\}} Q) \not\models\textit{safety}(b)$}.
\end{proof}

\subsection*{Full abstraction w.r.t.\ general safety properties}

The above says that the unique preorder $\sqsubseteq_{\it safety}$ that is fully abstract w.r.t.\ the
operators of CCSP$^g_\rt$ and the canonical safety property coincides with $\sqsubseteq^{*}_{FT}$.
Without the time-out operator, $\sqsubseteq_{\it safety}$ would be way less discriminating, and coincides
with reverse weak partial trace inclusion \cite{vG10}.
{\We} now check whether one could get an even finer preorder than $\sqsubseteq^{*}_{FT}$ by also considering other safety properties.

To arrive at a general concept of safety property for LTSs, the paper \cite{vG10} assumes that some
notion of \emph{bad} is defined.  This induces the safety property saying that this bad thing will
never happen.  To judge whether a process $P$ satisfies this safety property, one should judge
whether $P$ can reach a state in which one would say that this bad thing had happened. But all
observable behaviour of $P$ that is recorded in an LTS until one comes to such a verdict, is the
sequence of visible actions performed until that point. Thus the safety property is completely
determined by the set sequences of visible actions that, when performed by $P$, lead to such a
judgement. Therefore, \cite{vG10} proposes to define the concept of a safety property in terms of such a
set: a safety property is given by a set $B\subseteq A^*$ of \emph{bad} partial traces.
A process $P$ \emph{satisfies} this property, notation $P\models\textit{safety}(B)$, if $P$ has no
partial trace in $B$. It follows immediately from this definition that partial trace equivalent
processes satisfy the same safety properties of this kind.

The present paper allows for a stronger concept of safety property. Here the observable behaviour of
a process that is recorded until one comes to the verdict that something bad has happened can be modelled
as a partial failure trace. Based on this, my definition is analogous to the one in \cite{vG10}:
\begin{definition}{safety}
A \emph{safety property} of CCSP$_\rt$ processes is given by a set $B\subseteq (A \cup \Pow(A))^*\top$
of \emph{bad} partial failure traces. A process $P$ \emph{satisfies} this property, notation $P\models\textit{safety}(B)$, 
if $\fft(P)\cap B = \emptyset$.
\end{definition}
\noindent
It follows immediately that $\sqsubseteq^{*}_{FT}$ respects all safety properties.
Consequently, as in \cite{vG10}, for the definition of $\sqsubseteq_{\it safety}$ it does not matter
whether all safety properties are considered, or only the canonical one.

\subsection*{Trace equivalence and its congruence closure}

A possible definition of a (weak) partial trace is simply a partial failure trace from which
all refusal sets have been omitted. Define two processes to be (weak) (partial) trace equivalent iff
they have the same partial traces. Now \corref{safety} implies that $\ffteq$ is the congruence closure
of trace equivalence, that is, the coarsest congruence for the operators of CCSP$_\rt$ that is finer
than trace equivalence.\vspace{-1ex}

\subsection*{May testing}

\emph{May testing} was proposed by De Nicola \& Hennessy in \cite{DH84} for the process algebra CCS\@.
Translated to CCSP$_\rt$ it works as follows. A \emph{test} is a CCSP$_\rt$ process that instead of
visible actions from the alphabet $A$, uses visible actions from the alphabet $A\uplus\{\omega\}$,
where $\omega$ is a fresh action reporting ``success''. A process $P$ \emph{may pass} a test $T$
iff $\tau_A(T\spar{A}P)$ has some partial failure trace $\beta\omega\top$, or, equivalently,
has a partial trace containing the action $\omega$. A process with such a trace represents a system
with an execution that eventually reports success. Now define the preorder
$\sqsubseteq_{\it may}$ on  CCSP$_\rt$ processes by
\begin{center}
$P \sqsubseteq_{\it may}Q$ iff each test $T$ that $P$ may pass, may also be passed by $Q$.
\end{center}
May testing contrasts with \emph{must testing}, which requires
that \emph{each} execution leads to a state that can report success.

\begin{theorem}{may2}
$P \sqsubseteq_{\it may}Q$ iff $Q \sqsubseteq^{*}_{FT} P$.
\end{theorem}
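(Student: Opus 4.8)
The plan is to prove the equivalence $P \sqsubseteq_{\it may} Q \Leftrightarrow Q \sqsubseteq^{*}_{FT} P$ by establishing the two implications separately, leaning heavily on \thrm{may} and the characterisation of $\sqsubseteq^{*}_{FT}$ via partial failure trace inclusion (\df{fftpre}).

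\medskip\noindent\textbf{``$\Leftarrow$'': $Q \sqsubseteq^{*}_{FT} P$ implies $P \sqsubseteq_{\it may} Q$.}
Assume $\fft(Q) \supseteq \fft(P)$, and let $T$ be any test that $P$ may pass; I must show $Q$ may pass $T$. By definition $\tau_A(T \spar{A} P)$ has a partial trace containing $\omega$, equivalently a partial failure trace $\beta\omega\top \in \fft(\tau_A(T\spar{A}P))$ for some $\beta$. The key observation is that the success-reporting behaviour of $\tau_A(T\spar{A}P)$ is determined by $\fft(P)$ together with $T$: indeed, \thrm{congruence parallel} (Property~(\ref{spar explicit})) shows $\fft(T\spar{A}P) = \col(\fft(T)\spar{A}\fft(P))$, and \thrm{congruence abstraction} then determines $\fft(\tau_A(T\spar{A}P))$ from $\fft(T\spar{A}P)$. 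Since $\fft(P)\subseteq \fft(Q)$ and both the $\spar{A}$-operator on failure-trace sets and the abstraction operation are monotone with respect to inclusion (this monotonicity is exactly \thrm{fft precongruence}, or can be read off the explicit definitions of $F\spar{S}G$ and of ``survives abstraction''), we get $\fft(\tau_A(T\spar{A}P)) \subseteq \fft(\tau_A(T\spar{A}Q))$. Hence $\beta\omega\top \in \fft(\tau_A(T\spar{A}Q))$, so $Q$ may pass $T$ as well. Note one must be slightly careful that a test uses the extended alphabet $A\uplus\{\omega\}$ while the monotonicity results are stated for $\mathrm{CCSP}_\rt$; this is handled by simply treating $\omega$ as an ordinary action in $A$ for the duration of the argument, which is harmless since $\omega$ does not occur in $P$ or $Q$.

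\medskip\noindent\textbf{``$\Rightarrow$'': $P \sqsubseteq_{\it may} Q$ implies $Q \sqsubseteq^{*}_{FT} P$.}
This is the direction where the heavy machinery of \thrm{may} is used, and I expect it to be the main obstacle --- in particular, bridging between the success action $\omega$ used by may testing and the ``bad'' action $b$ used in \thrm{may}. Suppose, towards a contradiction, that $Q \not\sqsubseteq^{*}_{FT} P$, i.e.\ $\fft(Q)\not\supseteq\fft(P)$, so there is some $\sigma\in\fft(P)\setminus\fft(Q)$. First, by applying a bijective renaming operator mapping $A$ to $A\setminus\{b\}$ (as in the proofs of \thrm{full abstraction +} and \corref{safety}) I may assume that a distinguished action $b\in A$ is fresh for $P$, for $Q$, and for $\sigma$. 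Now apply \thrm{may} to $\sigma$: there is a $\mathrm{CCSP}^g_\rt$ process $T_\sigma$ such that $\tau_{A\setminus\{b\}}(T_\sigma \spar{A\setminus\{b\}} R) \not\models\textit{safety}(b) \Leftrightarrow \sigma\in\fft(R)$ for every $R$ with $b$ fresh. The plan is to turn $T_\sigma$ into a genuine test $T$ by renaming its occurrences of $b$ into $\omega$; then ``$\tau_{A\setminus\{b\}}(T_\sigma\spar{A\setminus\{b\}}R)$ reaches a $b$'' becomes ``$\tau_A(T\spar{A}R)$ reaches an $\omega$'', i.e.\ $R$ may pass $T$. (A small adjustment may be needed so that the abstraction operator hides exactly $A$ rather than $A\setminus\{b\}$ and that the synchronisation set matches; this is routine since $b=\omega$ is fresh for $R$, so synchronising on it versus not does not change the reachable traces of $T\spar{A}R$ modulo the final $\omega$-step, which $T$ performs on its own.) Consequently $P$ may pass $T$ (since $\sigma\in\fft(P)$) but $Q$ may not pass $T$ (since $\sigma\notin\fft(Q)$), contradicting $P\sqsubseteq_{\it may}Q$. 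Therefore $Q\sqsubseteq^{*}_{FT}P$, completing the proof.

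\medskip
The main delicacy, as noted, is the precise alignment between the may-testing setup (alphabet $A\uplus\{\omega\}$, abstraction $\tau_A$, synchronisation on $A$, success action $\omega$) and the formulation of \thrm{may} (alphabet with a bad action $b\in A$, abstraction and synchronisation on $A\setminus\{b\}$). I would dispatch this by carefully checking that, since $b$ (respectively $\omega$) is fresh for the argument process, the behaviour of $\tau_{A}(T\spar{A}R)$ that ever performs $\omega$ coincides with that of $\tau_{A\setminus\{\omega\}}(T\spar{A\setminus\{\omega\}}R)$ followed by $T$'s private $\omega$-transition --- so the two ``reaches success'' predicates agree --- and otherwise the proof is a direct two-way application of \thrm{may}.
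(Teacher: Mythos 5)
Your proof is correct and follows essentially the same route as the paper: the ``$\Rightarrow$'' direction is the two-way application of \thrm{may} with $\omega$ in the r\^ole of $b$, and the ``$\Leftarrow$'' direction is the monotonicity of the testing context, which the paper subsumes under the same appeal (via \thrm{fft precongruence}/\corref{safety}). The alphabet alignment you worry about at length is dispatched in the paper by simply instantiating \thrm{may} with global alphabet $A\uplus\{\omega\}$ and $b:=\omega$, so that $A\setminus\{b\}$ becomes exactly $A$ and no renaming or adjustment is needed.
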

\begin{proof}
Note that $R$ has some partial failure trace $\beta\omega\top$ iff $R \not\models\textit{safety}(\omega)$.
Using this, \thrm{may2} is an immediately consequence of \thrm{may}, with $\omega$ in the r\^ole of $b$.
The side conditions of $b$ being fresh in \thrm{may} are redundant in this context, as $\omega$
is fresh by construction.
\end{proof}
\noindent
So the may-preorder is the converse of the safety preorder. This is related to calling $\omega$ a
success action, instead of $b$ a bad action. The may-preorder preserves the property that a process
\emph{can} do something good, whereas the safety preorder preserves the property that a process
\emph{can not} so something bad.

\section{Conclusion}

This paper extended the model of labelled transition systems with time-out transitions $s_1 \goesto\rt s_2$.
Such a transition is assumed to be instantaneous and unobservable by the environment, just like a
transition $s_1 \goesto\tau s_2$, but becomes available only a positive but finite amount of time
after the represented system reaches state $s_1$. This extension allows the implementation of a
simple priority mechanism, thereby increasing the expressiveness of the model.

To denote such labelled transition systems I extended the standard process algebra CCSP with the action
prefixing operator $\rt.\_\!\_\,$. Many semantic preorders $\sqsubseteq$ have been defined on 
labelled transition systems, and thereby on the expressions in CCSP and other process algebras.
$P \sqsubseteq Q$ says that process $Q$ is a \emph{refinement} of process $P$, where $P$ is closer to the
specification of a system, and $Q$ to its implementation.
The least one may wish to require from such a preorder is
\begin{enumerate}[(1)]
\item that it is a precongruence for the \emph{static} operators of CCSP: partially synchronous parallel
  composition, renaming, and abstraction by renaming visible actions into the hidden action
  $\tau$,
\item and that it respects (basic) safety properties, meaning that any safety property of a
  process $P$ also holds for its refinements $Q$.
\end{enumerate}
The coarsest preorder with these properties is well known to be the reverse weak partial trace inclusion
(see \eg\ \cite{vG10}). Accordingly, weak partial trace inclusion is the preorder generated by
\emph{may testing} \cite{DH84}.

After the extension with time-out transitions, the weak partial trace preorder is no longer a
precongruence for the static operators of CCSP\@. I here characterised the coarsest preorder satisfying
(1) and (2) above as the \emph{partial failure trace} preorder.
Again, its converse is the preorder generated by may testing.
All the above also applies to semantic equivalences, arising as the kernels of these preorders.

To obtain a (pre)congruence for all CCSP operators, including the CCS choice $+$, one needs to use
a \emph{rooted} version of the partial failure trace preorder, as is common in the study of preorders
that abstract from the hidden action $\tau$.

The work reported here remains in the realm of untimed process algebra, in the sense that the
progress of time is not quantified. In the study of Timed CSP \cite{RR87,DS95,Schn95,RR99},
similar work has been done in a setting where the progress of time is quantified.
Also there a form of failure trace semantics was found to be the right equivalence, and the
connection with may testing was made in \cite{Schn95}. Since the bookkeeping of time in 
\cite{RR87,DS95,Schn95,RR99} is strongly interwoven with the formalisation of failure traces, it is not
easy to determine whether my semantics can in some way be seen as a abstraction of the one for Timed
CCSP, for instance by instantiating each quantified passage of time with a nondeterministic choice allowing
\emph{any} amount of time. This appears to be a question worthy of further research.
A contribution of the present work is that the transformation of a failures based semantics of CSP
\cite{BHR84,Ho85} to a failure trace based semantics of Timed CSP \cite{RR87,DS95,Schn95,RR99}
is not really necessitated by quantification of time---often seen as the main difference
between CSP and Timed CSP---but rather by the introduction of a time-out operator (quantified or not).

Future work includes proving a congruence result for recursion, finding complete axiomatisations,
and extending the approach from partial to complete failure traces, so that liveness properties will
be respected. In that setting the expressiveness questions of Section~\ref{expressiveness} can be studied.
The adaption of strong bisimilarity to the setting with time-out transitions is studied in \cite{vG20b}.

\bibliographystyle{eptcsalpha}
\bibliography{XYZ13}

\appendix

\section{Proofs of congruence properties}

This appendix contains the proofs of the explicit characterisations of the operators $\spar{S}$,
$\tau_I$ and $\Rn$ on (rooted) partial failure trace sets: Properties (\ref{spar explicit}),
(\ref{abstraction congruence}), (\ref{renaming congruence}), (\ref{rpar}), (\ref{rabs}) and (\ref{rren}).
These are the central parts of the proofs showing that $\ffteq$ and $\rffteq$ are congruences for these operators.

\begin{proposition}{spar explicit}
$\fft(P \spar{S} Q) = \col(\fft(P) \spar{S} \fft(Q))$.
\end{proposition}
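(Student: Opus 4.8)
The plan is to prove the stated equation by two inclusions, relating (derivations of) partial failure traces of $P\spar{S}Q$ to pairs of (derivations of) partial failure traces of $P$ and $Q$ via the notion of decomposition from \df{decomposition}, and using \obsref{doubling}.

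For ``$\supseteq$'' I would first observe that $\fft(P\spar{S}Q)$ is closed under the rewrite $\sigma X X \rho\mapsto\sigma X\rho$ --- this is the right-to-left direction of \obsref{doubling} --- so that $\col(\fft(P\spar{S}Q))=\fft(P\spar{S}Q)$, and hence it suffices to prove $\fft(P)\spar{S}\fft(Q)\subseteq\fft(P\spar{S}Q)$. Given $\sigma$ with a valid decomposition into $\sigma_{\rm L}\in\fft(P)$ and $\sigma_{\rm R}\in\fft(Q)$, I would build a derivation of $\sigma\in\fft(P\spar{S}Q)$ out of derivations of $\sigma_{\rm L}\in\fft(P)$ and $\sigma_{\rm R}\in\fft(Q)$, following the decomposition: an action $\sigma_i\mathbin\in A\setminus S$ is supplied by the one component having it, an action $\sigma_i\mathbin\in S$ by both via the joint rule of \tab{sos CCSP}, and for a refusal set $\sigma_i=X$ the internal ($\tau$- and $\rt$-)steps making up the realms of $\sigma_i^{\rm L}=X_{\rm L}$ and $\sigma_i^{\rm R}=X_{\rm R}$ are woven into a single realm for $X$. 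The validity conditions are exactly what makes this possible: (\ref{decomposition}) guarantees that at every product state $P_j\spar{S}Q_j$ arising in $X$'s realm one has $\I(P_j\spar{S}Q_j)\cap X=\emptyset$, using $\I(P_j)\cap X_{\rm L}=\emptyset$, $\I(Q_j)\cap X_{\rm R}=\emptyset$, and the fact that a synchronising action $a\mathbin\in S$ is disabled in the product as soon as it is disabled in one component; and (\ref{consistency}), (\ref{consistency2}) guarantee that the idle period $X$ is ended by a $\rt$-step of exactly one component when $X$ is system-ended in $\sigma$, and of neither component when it is not --- consistent with \tab{sos CCSP} having a joint synchronisation rule for actions in $S$ but no joint $\rt$-rule.

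For ``$\subseteq$'' I would argue by induction on the derivation of $\sigma\in\fft(P\spar{S}Q)$ via \tab{FT CCSP}. When the last rule is the first, second, or third rule of \tab{FT CCSP} (so $\sigma\mathbin=\top$, $\sigma\mathbin=a\rho$, or a $\tau$ is peeled off), \tab{sos CCSP} shows which component(s) moved, the induction hypothesis $\fft(P'\spar{S}Q')\subseteq\col(\fft(P')\spar{S}\fft(Q'))$ applies, and one re-assembles by prefixing the action to both or to one of the component traces. The only real case is a leading refusal set $X$, whose realm is built from the fourth, fifth, and sixth rules (plus the $\tau$-rule): at each stable state $P_j\spar{S}Q_j$ of that realm at which $X$ must be refused one chooses a split $X_{\rm L},X_{\rm R}$ putting $X\setminus S$ on both sides and each $a\mathbin\in X\cap S$ on some side that refuses it (possible since $P_j\spar{S}Q_j{\ngoesto{a}}$ iff $P_j{\ngoesto{a}}$ or $Q_j{\ngoesto{a}}$). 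If a single split works throughout the realm, $X$ contributes one set to each component trace; otherwise --- exactly the phenomenon of \ex{collapse required} --- one subdivides the realm wherever the split has to change, replacing the occurrence of $X$ by $X X\cdots X$ with one consistent copy per sub-realm, obtaining $\sigma'$ with $\sigma'_{\rm L}\in\fft(P)$, $\sigma'_{\rm R}\in\fft(Q)$ forming a valid decomposition, while $\col$ collapses $\sigma'$ back to $\sigma$. Conditions (\ref{consistency}), (\ref{consistency2}) hold for $\sigma'$ because a refusal realm in $P\spar{S}Q$ ends with a $\rt$-step contributed by exactly one component iff that component's (sub-)realm ends with a $\rt$-step: never by both (\tab{sos CCSP} has no joint $\rt$-rule), and never by neither unless the global environment ends the idle period.

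The main obstacle is this refusal-set bookkeeping. On the ``$\supseteq$'' side, weaving the two components' internal realm-paths into one while keeping every product state that precedes a $\rt$-step --- and the final state --- stable and refusing the right set requires a careful choice of the relative order of the two components' $\tau$- and $\rt$-steps, including at the boundaries between consecutive realms; on the ``$\subseteq$'' side, the split of a shared refusal set may be forced to vary within one realm, which is precisely why the closure operator $\col$ is unavoidable, and this must be squared with the system-ended/environment-ended accounting of Conditions (\ref{consistency})--(\ref{consistency2}) and \obsref{doubling}. The action and $\tau$ cases of the induction, by contrast, are routine.
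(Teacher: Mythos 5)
Your proposal is correct and follows essentially the same route as the paper's proof: both inclusions by induction on derivations, with \obsref{doubling} disposing of $\col$ on the ``$\supseteq$'' side, the validity conditions (\ref{consistency})--(\ref{decomposition}) used exactly as you describe, and the varying split of a shared refusal set within one realm handled by duplicating that set and letting $\col$ collapse the copies. The only (cosmetic) difference is that the paper formalises your ``subdivide the realm where the split changes'' step by working with a variant derivation system in which the fifth rule of \tab{FT CCSP} emits $XX\rho$ instead of $X\rho$, so that each $\rt$-step contributes its own copy of $X$ and the induction goes through uniformly.
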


\begin{proof}
``$\supseteq$'': Let $\sigma\in\fft(P) \spar{S} \fft(Q)$, and fix a valid decomposition of
$\sigma$ such that $\sigma_{\rm L} \in \fft(P)$ and $\sigma_{\rm R} \in \fft(Q)$.
With \obsref{doubling} it suffices to show that $\sigma\in\fft(P\spar{S}Q)$.
{\We} proceed with structural induction on the derivations of
$\sigma_{\rm L} \in \fft(P)$ and $\sigma_{\rm R} \in \fft(Q)$ from the rules in \tab{FT CCSP},
making a case distinction on the first symbol of $\sigma$.
The case $\sigma=\top$ is trivial.
\begin{itemize}
\item
Let $\sigma= a\rho$ with $a \in S$.
Then $\sigma_{\rm L} = a\rho_{\rm L}$ and $\sigma_{\rm R} = a\rho_{\rm R}$.

Assume $\sigma_{\rm L} \in \fft(P)$ is obtained from the third rule of \tab{FT CCSP}.
Then $P \goesto{\tau} P'$ and $\sigma_{\rm L} \in \fft(P')$.
So $P\spar{S}Q \goesto\tau P'\spar{S}Q$ by \tab{sos CCSP}, and $\sigma \in \fft(P'\spar{S}Q)$ by induction.
It follows that $\sigma\in\fft(P\spar{S}Q)$.

The case that $\sigma_{\rm R} \in \fft(Q)$ is obtained from the third rule of \tab{FT CCSP} proceeds likewise.

So assume $\sigma_{\rm L} \in \fft(P)$ and $\sigma_{\rm R} \in \fft(Q)$ are both obtained from the second rule of \tab{FT CCSP}.
Then $P \goesto{a} P'$,  $Q \goesto{a} Q'$, $\rho_{\rm L} \in \fft(P')$ and  $\rho_{\rm R} \in \fft(Q')$.
So $P\spar{S}Q \goesto{a} P'\spar{S}Q'$ by \tab{sos CCSP}, and $\rho \in \fft(P'\spar{S}Q')$ by induction.
It follows that $\sigma\in\fft(P\spar{S}Q)$.

\item
Let $\sigma= a\rho$ with $a \notin S$.
Assume that $\sigma_{\rm L} = a\rho_{\rm L}$ and $\sigma_{\rm R} = \rho_{\rm R}$---the other case, that
$\sigma_{\rm L} = \rho_{\rm L}$ and $\sigma_{\rm R} = a\rho_{\rm R}$, will follow by symmetry.

Assume $\sigma_{\rm L} \in \fft(P)$ is obtained from the third rule of \tab{FT CCSP}.
Then $P \goesto{\tau} P'$ and $\sigma_{\rm L} \in \fft(P')$.
So $P\spar{S}Q \goesto\tau P'\spar{S}Q$ by \tab{sos CCSP}, and $\sigma \in \fft(P'\spar{S}Q)$ by induction.
It follows that $\sigma\in\fft(P\spar{S}Q)$.

Now assume $\sigma_{\rm L}  \mathbin\in \fft(P)$ is obtained from the second rule of \tab{FT CCSP}.
Then $P \goesto{a} P'$ and $\rho_{\rm L} \in \fft(P')$.
So $P\spar{S}Q \goesto{a} P'\spar{S}Q$ by \tab{sos CCSP}, and $\rho \in \fft(P'\spar{S}Q)$ by induction.
It follows that $\sigma\in\fft(P\spar{S}Q)$.

\item
Let $\sigma= X\rho$.
Then $\sigma_{\rm L} = X_{\rm L}\rho_{\rm L}$ and $\sigma_{\rm R} = X_{\rm R}\rho_{\rm R}$.
Both these statements must be derived from the fourth to sixth rule of \tab{FT CCSP}.
Hence $P {\ngoesto{\alpha}}$ for all $\alpha \in X_{\rm L}\cup\{\tau\}$,
and $Q {\ngoesto{\alpha}}$ for all $\alpha \in X_{\rm R}\cup\{\tau\}$.
Note that  (\ref{consistency})--(\ref{decomposition}) hold with $X$, $X_{\rm L}$ and $X_{\rm R}$ in the
r\^oles of $\sigma_i$, $\sigma_i^{\rm L}$ and $\sigma_i^{\rm R}$.
From (\ref{decomposition}) it follows that $P\spar{S}Q {\ngoesto{\alpha}}$ for all $\alpha \in X\cup\{\tau\}$.

Assume $\sigma_{\rm L} \in \fft(P)$ is obtained from the fifth rule of \tab{FT CCSP}.
Then $P \goesto{\rt} P'$ and $\sigma_{\rm L} \in \fft(P')$.
So $P\spar{S}Q \goesto{\rt} P'\spar{S}Q$ by \tab{sos CCSP}, and $\sigma \in \fft(P'\spar{S}Q)$ by induction.
It follows that $\sigma\in\fft(P\spar{S}Q)$.

The case that $\sigma_{\rm R} \in \fft(Q)$ is obtained from the fifth rule of \tab{FT CCSP} proceeds likewise.

Assume $\sigma_{\rm L} \mathbin\in \fft(P)$ and $\sigma_{\rm R} \in \fft(Q)$ are both obtained from the fourth rule of \tab{FT CCSP}.
Then $\rho_{\rm L} \mathbin\in \fft(P)$ and $\rho_{\rm R} \mathbin\in \fft(Q)$.
So $\rho \in \fft(P\spar{S}Q)$ by induction.
Hence $\sigma\in\fft(P\spar{S}Q)$.

Assume $\sigma_{\rm L} \mathbin\in \fft(P)$ is obtained from the sixth rule of \tab{FT CCSP} and
$\sigma_{\rm R} \mathbin\in \fft(Q)$ from the fourth.
Then $P \goesto{\rt} P'$, $\rho_{\rm L} \in \fft(P')$ and $\rho_{\rm L}$ has the form $a \eta_{\rm L}$
with $a \mathbin\in X_{\rm L}\subseteq X$---the latter using (\ref{decomposition}). Moreover, $\rho_{\rm R} \in \fft(Q)$.
Now $P\spar{S}Q \goesto{\rt} P'\spar{S}Q$ by \tab{sos CCSP}, and $\rho \in \fft(P'\spar{S}Q)$ by induction.
Since $X_{\rm L}$ is system-ended in $\sigma_{\rm L}$, by (\ref{consistency}) $X$ must be system-ended in $\sigma$.
Hence $\rho$ has the form $b\eta$ with $b\in X$.
Using the sixth rule  of \tab{FT CCSP} it follows that $\sigma\in\fft(P\spar{S}Q)$.

The case that $\sigma_{\rm L} \mathbin\in \fft(P)$ is obtained from the fourth rule of \tab{FT CCSP} and
$\sigma_{\rm R} \mathbin\in \fft(Q)$ from the sixth proceeds likewise.

In case $\sigma_{\rm L} \mathbin\in \fft(P)$ and $\sigma_{\rm R} \in \fft(Q)$
are both obtained from the sixth rule of \tab{FT CCSP},
$X_{\rm L}$ is system-ended in $\sigma_{\rm L}$ and $X_{\rm R}$ is system-ended in $\sigma_{\rm R}$.
So this case is excluded by condition (\ref{consistency2}).
\end{itemize}
``$\subseteq$'': Let $\ffte(x)$ be defined exactly as $\fft(x)$---see \tab{FT CCSP}---except that
the conclusion of the fifth rule reads ``$XX\rho \in \ffte(x)$''.
Now $\fft(P \spar{S} Q) \subseteq \col(\ffte(P \spar{S} Q))$, so, using that $\col$ is monotonous, it
suffices to show that $\ffte(P \spar{S} Q) \subseteq \fft(P) \spar{S} \fft(Q)$.

So let $\sigma \in \ffte(P \spar{S} Q)$. With structural induction on the derivation of
$\sigma \in \ffte(P \spar{S} Q)$ from the (amended) rules of Tables~\ref{tab:sos CCSP} and~\ref{tab:FT CCSP}
I show that $\sigma \in \fft(P) \spar{S} \fft(Q)$. This means I have to give a valid decomposition
of $\sigma$ such that $\sigma_{\rm L}\in \fft(P)$ and $\sigma_{\rm R} \in  \fft(Q)$.
\begin{itemize}
\item
  Let $\sigma=\top$, with $\sigma\in\ffte(P\spar{S}Q)$ derived from the first rule of \tab{FT CCSP}.
  Then $\sigma$ has only one decomposition, which is valid.
  Trivially, $\sigma_{\rm L}= \top \in \fft(P)$ and $\sigma_{\rm R} = \top \in  \fft(Q)$.
\item
  Let $\sigma\mathbin=a\rho$,  with $\sigma\in\ffte(P\spar{S}Q)$ derived from the second rule.
  Then $P \spar{S} Q \goesto{a} P' \spar{S} Q'$ and $\rho\in\ffte(P' \spar{S} Q')$.
  By induction there is a valid decomposition of $\rho$ such that
  $\rho_{\rm L}\in \fft(P')$ and $\rho_{\rm R} \mathbin\in  \fft(Q')$.
  Depending on which rule of \tab{sos CCSP} was employed to derive
  $P \spar{S} Q \goesto{a} P' \spar{S} Q'$,
  \begin{enumerate}[(i)]
  \item either $a \notin S$, $P \goesto{a} P'$ and $Q'=Q$,
  \item or $a\in S$, $P \goesto{a} P'$ and $Q \goesto{a} Q'$,
  \item or $a \notin S$, $P'=P$ and $Q \goesto{a} Q'$.
  \end{enumerate}
  In case (i) take $\sigma_{\rm L}:=a\rho_{\rm L}$ and
  $\sigma_{\rm R} := \rho_{\rm R} \in \fft(Q)$.
  As the given decomposition of $\rho$ is valid, so is the newly constructed one of $\sigma$.
  The second rule of \tab{FT CCSP} yields $\sigma_{\rm L} \in \fft(P)$.
  In case (ii) $\sigma_{\rm L}\mathbin{:=}a\rho_{\rm L} \mathbin\in \fft(P)$ and  $\sigma_{\rm R} \mathbin{:=} a\rho_{\rm R}\mathbin\in \fft(Q)$.
  Also this decomposition is valid.
  In case (iii) $\sigma_{\rm L}\mathbin{:=}\rho_{\rm L} \mathbin\in \fft(P)$ and $\sigma_{\rm R} \mathbin{:=} a\rho_{\rm R}\mathbin\in \fft(Q)$.
  Also this decomposition is valid.

\item
  Suppose $\sigma\in\ffte(P\spar{S}Q)$ is derived from the third rule of \tab{FT CCSP}.
  Then $P \spar{S} Q \goesto{\tau} P' \spar{S} Q'$ and $\sigma\in\ffte(P' \spar{S} Q')$.
  By induction there is a valid decomposition of $\sigma$ such that
  $\sigma_{\rm L}\in \fft(P')$ and $\sigma_{\rm R} \mathbin\in  \fft(Q')$.
  Depending on which rule of \tab{sos CCSP} was employed to derive
  $P \spar{S} Q \goesto{\tau} P' \spar{S} Q'$,
  \begin{enumerate}[(i)]
  \item either $P \goesto{\tau} P'$ and $Q'=Q$,
  \item or $P'=P$ and $Q \goesto{\tau} Q'$.
  \end{enumerate}
  In either case the third rule of \tab{FT CCSP} yields
  $\sigma_{\rm L}\in \fft(P)$ and $\sigma_{\rm R} \in  \fft(Q)$.

\item
  Let $\sigma= X\rho$, with $\sigma\in\ffte(P\spar{S}Q)$ derived from the fourth rule of \tab{FT CCSP}.
  Then $P\spar{S}Q {\ngoesto{\alpha}}$ for all $\alpha \in X\cup\{\tau\}$, and $\rho\in\ffte(P\spar{S}Q)$.
  By induction there is a valid decomposition of $\rho$ such that
  $\rho_{\rm L}\in \fft(P)$ and $\rho_{\rm R} \mathbin\in  \fft(Q)$.
  Let $X_{\rm L} := (X\setminus S) \cup \{a\in X\cap S \mid P {\ngoesto{a}}\}$
  and $X_{\rm R} := (X\setminus S) \cup \{a\in X\cap S \mid Q {\ngoesto{a}}\}$.
  Considering that, for $a \mathbin\in X\cap S$, $P\spar{S}Q{\ngoesto{a}}$ iff $P {\ngoesto{a}} \vee Q {\ngoesto{a}}$,\linebreak[3]
  Condition (\ref{decomposition}), with $X$, $X_{\rm L}$ and $X_{\rm R}$ in the
  r\^oles of $\sigma_i$, $\sigma_i^{\rm L}$ and $\sigma_i^{\rm R}$, is satisfied.
  Moreover, $P{\ngoesto{\alpha}}$ for all $\alpha \in X_{\rm L}\cup\{\tau\}$, and
  $Q{\ngoesto{\alpha}}$ for all $\alpha \mathbin\in X_{\rm R}\cup\{\tau\}$.\linebreak[3]
  Take $\sigma_{\rm L}\mathbin= X_{\rm L}\rho_{\rm L}$ and $\sigma_{\rm R}\mathbin= X_{\rm R}\rho_{\rm R}$.
  The fourth rule of \tab{FT CCSP} yields $\sigma_{\rm L}\mathbin\in \fft(P)$ and $\sigma_{\rm R} \mathbin\in  \fft(Q)$.
  It remains to show that this decomposition satisfies Conditions (\ref{consistency}) and (\ref{consistency2}).

  Suppose $\rho_{\rm L}$ has the form $a\eta_{\rm L}$. Then, by \corref{initial}, $a\notin X_{\rm L}$.
  It follows that $X_{\rm L}$ is not system-ended in $\sigma_{\rm L}$.
  In the same manner it follows that $X_{\rm R}$ is not system-ended in $\sigma_{\rm R}$,
  and that $X$ is not system-ended in $\sigma$.

\item
  Let $\sigma= XX\rho$, with $\sigma\in\ffte(P\spar{S}Q)$ derived from the amended fifth rule of \tab{FT CCSP}.
  Then $P\spar{S}Q {\ngoesto{\alpha}}$ for all $\alpha \in X\cup\{\tau\}$,
  $P \spar{S} Q \goesto{\rt} P' \spar{S} Q'$ and $X\rho\in\ffte(P' \spar{S} Q')$.
  By induction there is a valid decomposition $(X_{\rm L}\rho_{\rm L},X_{\rm R}\rho_{\rm R})$ of
  $X\rho$ such that $X_{\rm L}\rho_{\rm L}\in \fft(P')$ and $X_{\rm R}\rho_{\rm R} \mathbin\in  \fft(Q')$.
  Note that $P'{\ngoesto{a}}$ for all $a \in X_{\rm L} \cup \{\tau\}$.
  Depending on which rule of \tab{sos CCSP} was employed to derive
  $P \spar{S} Q \goesto{\rt} P' \spar{S} Q'$,
  \begin{enumerate}[(i)]
  \item either $P \goesto{\rt} P'$ and $Q'=Q$,
  \item or $P'=P$ and $Q \goesto{\rt} Q'$.
  \end{enumerate}
  For reasons of symmetry, {\we} may assume the former.

  Let $X_{\rm L}' := \{a \in X_{\rm L} \mid P {\ngoesto{a}}\}$ and
  $X_{\rm R}' := (X\setminus S) \cup \{a \in X\cap S \mid Q {\ngoesto{a}}\}$.
  Using the fourth rule of \tab{FT CCSP}, 
  $\sigma_{\rm L} := X_{\rm L}'X_{\rm L}\rho_{\rm L} \in \fft(P')$,
  and by the fifth rule $\sigma_{\rm L} \in \fft(P)$, also using that $P{\ngoesto\tau}$.
  Moreover, by the fourth rule of \tab{FT CCSP}, 
  $\sigma_{\rm R} := X_{\rm R}'X_{\rm R}\rho_{\rm R} \in \fft(Q)$, using that $Q{\ngoesto\alpha}$
  for all $\alpha\mathbin\in (X{\setminus} S) \cup \{\tau\}$.
  It remains to show the validity of the decomposition of $\sigma$ that yields
  $\sigma_{\rm L}$ and $\sigma_{\rm R}$.
  Conditions~(\ref{consistency}) and~(\ref{consistency2}) hold trivially.
  So it remains to check Condition~(\ref{decomposition}).

  Let $a \in X\setminus S$. Then $a \in X_{\rm L}\setminus S$ and $P\spar{S}Q {\ngoesto{a}}$, so
  $P {\ngoesto{a}}$, and $a\in X'_{\rm L}\setminus S$.

  Conversely, $X'_{\rm L}\setminus S \subseteq X_{\rm L}\setminus S = X\setminus S$,
  and by definition $X'_{\rm R}\setminus S = X\setminus S$.

  Let $a \in X\cap S$. Then $P\spar{S}Q {\ngoesto{a}}$, so $P {\ngoesto{a}}$ or $Q {\ngoesto{a}}$.
  In case $Q {\ngoesto{a}}$, by definition $a \in X'_{\rm R}\cap S$.
  In case $Q{\goesto{a}}$, it is not possible that $a \in X_{\rm R}$.
  Hence $a \in X_{\rm L}$, by Property~(\ref{decomposition}) of the given valid decomposition of $X\rho$.
  Moreover, $P {\ngoesto{a}}$, so $a \in X'_{\rm L}\cap S$.

  Conversely, by definition $X'_{\rm R}\cap S \subseteq X\cap S$ and
  $X'_{\rm L}\cap S \subseteq X_{\rm L}\cap S \subseteq X\cap S$.

\item
  Let $\sigma= X a\rho$, with $\sigma\in\ffte(P\spar{S}Q)$ derived from the sixth rule of \tab{FT CCSP}.
  Then $P\spar{S}Q {\ngoesto{\alpha}}$ for all $\alpha \in X\cup\{\tau\}$, $a \in X$,
  $P \spar{S} Q \goesto{\rt} P' \spar{S} Q'$ and $a\rho\in\ffte(P' \spar{S} Q')$.
  By induction there is a valid decomposition of $\zeta:=a\rho$ such that
  $\zeta_{\rm L}\in \fft(P')$ and $\zeta_{\rm R} \mathbin\in  \fft(Q')$.
  Let $X_{\rm L} := (X\setminus S) \cup \{b\in X\cap S \mid P {\ngoesto{b}}\}$
  and $X_{\rm R} := (X\setminus S) \cup \{b\in X\cap S \mid Q {\ngoesto{b}}\}$.
  Then $P{\ngoesto{\alpha}}$ for all $\alpha \in X_{\rm L}\cup\{\tau\}$, and
  $Q{\ngoesto{\alpha}}$ for all $\alpha \mathbin\in X_{\rm R}\cup\{\tau\}$.
  Take $\sigma_{\rm L}\mathbin= X_{\rm L}\zeta_{\rm L}$ and $\sigma_{\rm R}\mathbin= X_{\rm R}\zeta_{\rm R}$.
  {\We} have to show that this is a valid decomposition of $\sigma$, and that
  $\sigma_{\rm L} \in \fft(P)$ and $\sigma_{\rm R} \in \fft(Q)$.

  Considering that, for all $b \mathbin\in X\cap S$, $P\spar{S}Q{\ngoesto{b}}$ iff $P {\ngoesto{b}} \vee Q {\ngoesto{b}}$,
  Condition (\ref{decomposition}) is satisfied.

  Depending on which rule of \tab{sos CCSP} was employed to derive
  $P \spar{S} Q \goesto{\rt} P' \spar{S} Q'$,
  \begin{enumerate}[(i)]
  \item either $P \goesto{\rt} P'$ and $Q'=Q$,
  \item or $P'=P$ and $Q \goesto{\rt} Q'$.
  \end{enumerate}
  For reasons of symmetry, {\we} may assume the former.
  By the fourth rule of \tab{FT CCSP}, $\sigma_{\rm R}\in\fft(Q)$.

  {\We} now show that $X_{\rm R}$ is not system-ended in $\sigma_{\rm R}$.
  Namely, if $\zeta_{\rm R}$ is of the form $b\eta_{\rm R}$, then 
  $b\notin X_{\rm R}$ by \corref{initial}. Thus Condition~(\ref{consistency2}) holds.

  \begin{itemize}
  \item
    As a further case distinction, assume $a\notin S$.
    Since $a \in X_{\rm}$ and thus $a\in X_{\rm L}$ and $a\in X_{\rm R}$, by the above it follows
    that $\zeta_{\rm R}$ is not of the form $a\eta_{\rm R}$. Hence, $\zeta_{\rm L}$ has the form $a \eta_{\rm L}$.
    The sixth rule of \tab{FT CCSP} yields $\sigma_{\rm L}\in\fft(P)$.
    Since $X_{\rm L}$ is system-ended in $\sigma_{\rm L}$
    and $X$ is system-ended in $\sigma$, Condition~(\ref{consistency}) holds as well.
  \item
    Finally assume $a\in S$. Then $\zeta_{\rm L}$ has the form $a \eta_{\rm L}$ and $\zeta_{\rm R}$
    has the form $a\eta_{\rm R}$. Since $X_{\rm R}$ is not system-ended in $\sigma_{\rm R}$, one has
    $a \notin X_{\rm R}$, that is, $Q{\goesto{a}}$. As $P\spar{S}Q {\ngoesto{a}}$, it follows that
    $P{\ngoesto{a}}$, and thus $a \in X_{\rm L}$.
    The sixth rule of \tab{FT CCSP} yields $\sigma_{\rm L}\in\fft(P)$.
    Since $X_{\rm L}$ is system-ended in $\sigma_{\rm L}$
    and $X$ is system-ended in $\sigma$, Condition~(\ref{consistency}) holds as well.
  \qedhere
  \end{itemize}
\end{itemize}
\end{proof}

\begin{proposition}{abstraction explicit}
$\sigma\in \fft(\tau_I(P)) ~~\Leftrightarrow~~ \exists \rho \in \fft(P).~ \tau_I(\rho) =\sigma\cup I$.
\end{proposition}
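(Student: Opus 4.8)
The plan is to establish both implications by structural induction on derivations in \tab{FT CCSP}: ``$\Rightarrow$'' by induction on a derivation of $\sigma\in\fft(\tau_I(P))$, and ``$\Leftarrow$'' by induction on a derivation of $\rho\in\fft(P)$ (note that $\tau_I(\rho)$ being defined means $\rho$ survives abstraction from $I$, in the sense introduced in the proof of \thrm{congruence abstraction}). The only operational ingredient is the standard decomposition of the transitions of $\tau_I(P)$: for $a\in A\setminus I$, $\tau_I(P)\goesto{a}Q'$ iff $Q'=\tau_I(P')$ with $P\goesto{a}P'$; $\tau_I(P)\goesto{\tau}Q'$ iff $Q'=\tau_I(P')$ with $P\goesto{\tau}P'$ or $P\goesto{c}P'$ for some $c\in I$; and $\tau_I(P)\goesto{\rt}Q'$ iff $Q'=\tau_I(P')$ with $P\goesto{\rt}P'$. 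A direct consequence is that $\tau_I(P){\ngoesto{\alpha}}$ for all $\alpha\in Y\cup\{\tau\}$ iff $P{\ngoesto{\alpha}}$ for all $\alpha\in Y\cup I\cup\{\tau\}$; this is precisely why the hidden hypothesis $I\subseteq X$ on refusal sets and the replacement $\sigma\cup I$ occur in the statement.

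For ``$\Rightarrow$'' I match each of the six rules deriving $\sigma\in\fft(\tau_I(P))$ with a construction of $\rho$. The first rule gives $\rho=\top$. For a leading visible action $a$ (second rule) the transition forces $a\in A\setminus I$ and $P\goesto{a}P'$, and I prepend $a$ to the trace $\rho'$ supplied by the induction hypothesis. For the third rule the underlying $\tau$-transition of $\tau_I(P)$ comes either from $P\goesto{\tau}P'$, in which case I keep $\rho'$, or from some $P\goesto{c}P'$ with $c\in I$, in which case I prepend $c$ --- $\tau_I$ simply deletes this leading $c$, since it is not enclosed between two occurrences of one set. For the fourth, fifth and sixth rules I prepend the set $X:=X'\cup I$, where $X'$ is the refusal set occurring in $\sigma$; the fourth-rule premise on $\tau_I(P)$ upgrades to $P{\ngoesto{\alpha}}$ for $\alpha\in X\cup\{\tau\}$, and in the sixth rule the action following $X'$ lies in $X'$ and hence in $X$. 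In every case one verifies that $\rho$ still survives abstraction --- the only nontrivial point is Conditions (ii) and (iii) of that notion, but these constrain only subsequences beginning with a set and are therefore inherited by the shorter trace --- and that $\tau_I(\rho)=\sigma\cup I$.

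For ``$\Leftarrow$'' I run the symmetric case analysis on a derivation of $\rho\in\fft(P)$, peeling off the leading symbol(s) of $\rho$ and, via the transition decomposition above, producing the matching rule application for $\tau_I(P)$: a leading action $c\in I$ is absorbed by an application of the third rule; a leading action $a\notin I$ is reproduced by the second rule; and a leading set $X$ contributes any set $X'$ with $X'\cup I=X$, which is exactly the freedom left by $\sigma\cup I=\tau_I(\rho)$. A useful preliminary observation, used repeatedly, is that if $\rho=X\rho'$ comes from the fourth rule and survives abstraction, then $\rho'$ cannot begin with a visible action: such an action $a$ would satisfy $P\goesto{a}$, hence $a\notin X$, while Condition (iii) would force $a\in X$. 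So $\rho'$ begins with $\top$ or a set, and the leading set $X$ of $\rho$ is never merged away by a contraction, which makes that case routine.

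The main obstacle is the bookkeeping forced by the set-contraction built into $\tau_I(\cdot)$: one must track exactly when the leading set of $\rho$ survives into $\tau_I(\rho)$ and when it is merged into a block $Xc_0\cdots c_nX$ with all $c_i\in I$, and how this interacts with Conditions (i)--(iii). I would tame this by first proving a small normal-form lemma: for $\rho$ surviving abstraction from $I$, the trace $\tau_I(\rho)$ is obtained by deleting every maximal run of $I$-actions that lies strictly between two occurrences of a common set (fusing those two set-occurrences into one) and deleting every other occurrence of an $I$-action, leaving all sets and all non-$I$-actions untouched; in particular the first symbol of $\tau_I(\rho)$ is $\top$, a non-$I$-action, or the first set of $\rho$. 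With this in hand, each inductive case becomes a straightforward comparison of leading symbols, and the occasional appeals to \obsref{doubling} that realign doubled sets are immediate. The same argument, carried through for the extra components $\st$, $\pst$ and the traces of the form $\rt X\sigma$, is what one then adapts to obtain the rooted variant \pr{abstraction explicit rooted}.
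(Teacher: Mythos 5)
Your overall strategy coincides with the paper's: both directions proceed by structural induction on the derivations in \tab{FT CCSP}, using the decomposition of the transitions of $\tau_I(P)$ and the normalisation of refusal sets to $X\cup I$. There is, however, one concrete false step. Your ``useful preliminary observation'' is wrong, and it rests on a misreading of Condition (iii) of abstraction survival. Take $P=a.0$ and $I=\{b\}$: the trace $\{b\}a\top\in\fft(P)$ is obtained from the fourth rule and survives abstraction from $I$ (Conditions (ii) and (iii) only constrain subsequences in which at least one action from $I$ separates the refusal set from what follows; the adjacent pair $\{b\}\,a$ is unconstrained), yet its tail begins with the visible action $a$. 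If Condition (iii) were instead read as you read it, i.e.\ as applying also to the subsequence $Xa$ with an empty run of $I$-actions in between, the proposition itself would fail: $\{b\}a\top\in\fft(\tau_{\{b\}}(a.0))$ would have no surviving preimage $\rho$. The same misreading also undercuts your claim in the ``$\Rightarrow$'' direction that Conditions (ii) and (iii) are ``inherited by the shorter trace'': prepending a refusal set does create new subsequences that must be checked, and in the fourth-rule case the new subsequence is precisely $Xa$ with $a\notin X$.

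The conclusion you extract from the observation --- that the leading set of $\rho$ is never merged away by a contraction --- is nevertheless true, and this is exactly how the paper handles the fourth-rule case: if $\eta$ begins with $a$, then $a\notin X$ by \corref{initial}, hence $a\notin I$ since $I\subseteq X$, so $a$ is neither deleted nor part of a block $Xc_0\dots c_nX$, and $\tau_I(X\eta)=X\,\tau_I(\eta)$. With that repair (and the corresponding explicit check of Condition (iii) in the sixth-rule case, where the side condition $a\in X$ of the rule is what makes the new subsequence admissible), your argument matches the paper's proof.
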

\begin{proof}
Since each state $R$ reachable from $\tau_I(P)$ satisfies $R{\ngoesto{\alpha}}$ for all $\alpha\in I$,
a trivial induction shows that
\[
\sigma\in \fft(\tau_I(P)) \Leftrightarrow \sigma\cup I \in \fft(\tau_I(P)).
\tag{*}
\]
``$\Leftarrow$'': Let $\rho \in \fft(P)$ and $\rho$ survives abstraction from $I$.
By Condition (i) in the definition of abstraction survival,
$\tau_I(\rho)$ is of the form $\sigma\cup I$. In view of (*), it suffices to show that 
$\tau_I(\rho)\in \fft(\tau_I(P))$. {\We} do so with structural induction on the
derivation of $\rho \in \fft(P)$. The case $\rho=\top$ is trivial.

Assume $\rho \in \fft(P)$ is obtained from the third rule of \tab{FT CCSP}.
Then $P \goesto{\tau} P'$ and $\rho \in \fft(P')$.
By induction $\tau_I(\rho) \in \fft(\tau_I(P'))$.
By \tab{sos CCSP} $\tau_I(P) \goesto\tau \tau_I(P')$.
Hence  $\tau_I(\rho) \in \fft(\tau_I(P))$ by the third rule of \tab{FT CCSP}.

Assume $\rho=a\eta \in \fft(P)$ is obtained from the second rule of \tab{FT CCSP}.
Then $P \goesto{a} P'$ and $\eta \in \fft(P')$.
By induction $\tau_I(\eta) \in \fft(\tau_I(P'))$.
By \tab{sos CCSP} $\tau_I(P) \goesto\alpha \tau_I(P')$, with $\alpha=\tau$ if $a\in I$ and $\alpha=a$ otherwise.
In the first case $\tau_I(\rho) =\tau_I(\eta) \in \fft(\tau_I(P))$ by the third rule of \tab{FT CCSP}.
In the second case $\tau_I(\rho) = a\tau_I(\eta) \in \fft(\tau_I(P))$ by the second rule of \tab{FT CCSP}.

Assume $\rho\mathbin=X\eta \mathbin\in \fft(P)$ is obtained from the fourth rule of \tab{FT CCSP}.
Then $P{\ngoesto{\alpha}}$ for all $\alpha\mathbin\in X\cup\{\tau\}$ and $\eta \in \fft(P)$.
  If $\eta$ has the form $a\zeta$, then $a\notin X$ by \corref{initial}.
  In particular, $a\notin I$, using Condition (i) in the definition of abstraction survival.
Consequently, $\tau_I(\rho) = X\tau_I(\eta)$, for any possibly form of $\eta$.
Furthermore, $\tau_I(P){\ngoesto{\alpha}}$ for all $\alpha\mathbin\in X\cup\{\tau\}$.
Here {\we} use that $I\subseteq X$.
By induction $\tau_I(\eta) \in \fft(\tau_I(P))$.
Hence  $\tau_I(\rho) \in \fft(\tau_I(P))$ by the fourth rule of \tab{FT CCSP}.

Assume $\rho\mathbin=X\eta \mathbin\in \fft(P)$ is obtained from the fifth rule of \tab{FT CCSP}.
Then $P{\ngoesto{\alpha}}$ for all $\alpha\mathbin\in X\cup\{\tau\}$,
$P \goesto{\rt} P'$ and $\rho \in \fft(P')$.
So $\tau_I(P) \goesto{\rt} \tau_I(P')$ and $\tau_I(P){\ngoesto{\alpha}}$ for all
$\alpha\mathbin\in X\cup\{\tau\}$, using that $I\subseteq X$.
By induction $\tau_I(\rho) \in \fft(\tau_I(P'))$.
Note that $\tau_I(\rho)$ has the form $X\zeta$, although not necessarily with $\zeta=\tau_I(\eta)$.
The fifth rule of \tab{FT CCSP} yields $\tau_I(\rho) \in \fft(\tau_I(P))$.

Assume $\rho\mathbin=X\eta \mathbin\in \fft(P)$ is obtained from the sixth rule of \tab{FT CCSP}.
Then $P{\ngoesto{\alpha}}$ for all $\alpha\mathbin\in X\cup\{\tau\}$,
$P \goesto{\rt} P'$ and $\eta \in \fft(P')$.
So  $\tau_I(P) \goesto{\rt} \tau_I(P')$ and $\tau_I(P){\ngoesto{\alpha}}$ for all $\alpha\mathbin\in X\cup\{\tau\}$.
By induction $\tau_I(\eta) \in \fft(\tau_I(P'))$.
Using that $\rho$ survives abstraction, {\we} consider three alternatives for $\eta$.
\begin{itemize}
\item
  Let $\eta = c_0 \cdots c_n \top$ with the $c_i\in I$.
  Then $\tau_I(\rho) = X\top$.
  The first and fourth rules of \tab{FT CCSP} yield $\tau_I(\rho) \in \fft(\tau_I(P))$.
\item
  Let $\eta = c_0 \cdots c_n X \zeta$ with the $c_i\in I$.
  Then $\tau_I(\rho) = \tau_I(\eta) = \tau_I(X\zeta)$, and $\tau_I(\eta)$ has the form $X\zeta'$.
  The fifth rule of \tab{FT CCSP} yields $\tau_I(\rho) \in \fft(\tau_I(P))$.
\item
  Let $\eta = c_1 \cdots c_n a \zeta$ with the $c_i\mathbin\in I$ and $a\mathbin{\notin} I$.
  Then $a \mathbin\in X$ (using Condition (iii) of abstraction survival if $n>0$, and the side
  condition of the sixth rule of \tab{FT CCSP} if $n=0$)
  and $\tau_I(\rho) \mathbin= X\tau_I(\eta)$ with $\tau_I(\eta) \mathbin= a\tau_I(\zeta)$.
  The sixth rule of \tab{FT CCSP} yields $\tau_I(\rho) \in \fft(\tau_I(P))$.
\end{itemize}
``$\Rightarrow$'': Let $\sigma\in \fft(\tau_I(P))$; by (*) {\we} may assume that $\sigma=\sigma\cup I$.
{\We} have to find a $\rho\in\fft(P)$ such that $\tau_I(\rho)=\sigma$.
{\We} do so with structural induction on the derivation of $\sigma\in \fft(\tau_I(P))$.

Assume $\sigma\in \fft(\tau_I(P))$ is obtained from the first rule of \tab{FT CCSP}.
Take $\rho=\top$, so that $\tau_I(\rho)=\top=\sigma$.
The first rule of \tab{FT CCSP} yields $\rho\in\fft(P)$.

Assume $\sigma\mathbin\in \fft(\tau_I(P))$ is obtained from the second rule of \tab{FT CCSP}.
Then $\sigma\mathbin=a\zeta$, $\tau_I(P) \goesto{a} \tau_I(P')$ and $\zeta\mathbin\in \fft(\tau_I(P'))$.
So $P \goesto{a} P'$ and $a\mathbin{\notin} I$.
By induction, there is a $\rho\mathbin\in\fft(P')$ such that $\tau_I(\rho)\mathbin=\zeta$.
The second rule of \tab{FT CCSP} yields $a\rho\in\fft(P)$.
Moreover, $\tau_I(a\rho)=a\zeta=\sigma$.

Assume $\sigma\mathbin\in \fft(\tau_I(P))$ is obtained from the third rule of \tab{FT CCSP}.
Then $\tau_I(P) \goesto{\tau} \tau_I(P')$ and $\sigma\mathbin\in \fft(\tau_I(P'))$.
So $P \goesto{\alpha} P'$ with $\alpha\in I\cup\{\tau\}$.
By induction, there is a $\rho\mathbin\in\fft(P')$ such that $\tau_I(\rho)=\sigma$.
In case $\alpha=\tau$, the third rule of \tab{FT CCSP} yields $\rho\in\fft(P)$.
In case $\alpha\neq\tau$, the second rule of \tab{FT CCSP} yields $\alpha\rho\in\fft(P)$;
moreover, $\tau_I(\alpha\rho)=\tau_I(\rho)=\sigma$.

Assume $\sigma\mathbin\in \fft(\tau_I(P))$ is obtained from the fourth rule of \tab{FT CCSP}.
Then $\sigma\mathbin=X\zeta$, $\zeta\mathbin\in \fft(\tau_I(P))$ and
$\tau_I(P){\ngoesto{\alpha}}$ for all $\alpha\mathbin\in X\cup\{\tau\}$.
So $P{\ngoesto{\alpha}}$ for all $\alpha\mathbin\in X\cup\{\tau\}$.
By induction, there is a $\rho\mathbin\in\fft(P)$ such that $\tau_I(\rho)\mathbin=\zeta$.
Hence $X\rho\in\fft(P)$ by the fourth rule of \tab{FT CCSP}.
In case $\rho$ has the form $\top$ or $Y\eta$, one has $\tau_I(X\rho)=X\tau_I(\rho)=X\zeta=\sigma$,
which needed to be shown. 
  So suppose $\rho$ has the form $a\eta$. Then $a\notin X$ by \corref{initial}.
  In particular, $a\notin I$, using Condition (i) in the definition of abstraction survival.
Consequently, $\tau_I(X\rho)=X\tau_I(\rho)=X\zeta=\sigma$.

Assume $\sigma\mathbin\in \fft(\tau_I(P))$ is obtained from the fifth rule of \tab{FT CCSP}.
Then $\sigma\mathbin=X\zeta$, $\tau_I(P) \goesto{\rt} \tau_I(P')$, $\sigma\mathbin\in \fft(\tau_I(P'))$ and
$\tau_I(P){\ngoesto{\alpha}}$ for all $\alpha\mathbin\in X\cup\{\tau\}$.
So $P \goesto{\rt} P'$ and $P{\ngoesto{\alpha}}$ for all $\alpha\mathbin\in X\cup\{\tau\}$.
By induction, there is a $\rho\mathbin\in\fft(P')$ such that $\tau_I(\rho)\mathbin=\sigma$.
This $\rho$ must have the form $X\eta$.
Hence $\rho\in\fft(P)$ by the fifth rule of \tab{FT CCSP}.

Finally, assume $\sigma\mathbin\in \fft(\tau_I(P))$ is obtained from the sixth rule of \tab{FT CCSP}.
Then $\sigma\mathbin=X a\zeta$, $a \in X$, $\tau_I(P) \goesto{\rt} \tau_I(P')$, $a\zeta\mathbin\in \fft(\tau_I(P))$ and
$\tau_I(P){\ngoesto{\alpha}}$ for all $\alpha\mathbin\in X\cup\{\tau\}$.
So $P \goesto{\rt} P'$ and $P{\ngoesto{\alpha}}$ for all $\alpha\mathbin\in X\cup\{\tau\}$.
By induction, there is a $\rho\mathbin\in\fft(P')$ such that $\tau_I(\rho)=a\zeta$.
So $a\notin I$ and $\rho$ has the form $c_1 \dots c_n a\eta$ with $c_i\in I\subseteq X$
for $i=1,\dots,n$ and $\tau_I(\eta)=\zeta$.
The sixth rule of \tab{FT CCSP} yields $X\rho\in\fft(P)$.
Moreover, $\tau_I(X\rho) = X a\tau_I(\eta)=X a \zeta = \sigma$.
\end{proof}

\begin{proposition}{renaming explicit}
  $\sigma\in \fft(\Rn(P)) ~~\Leftrightarrow~~ \exists \rho \in \fft(P).~ \rho \in \Rn^{-1}(\sigma)$.
\end{proposition}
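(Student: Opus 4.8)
The plan is to prove both implications by structural induction on the derivation of the relevant partial failure trace from the rules of \tab{FT CCSP}, in close analogy with the proof of \pr{abstraction explicit}. Two facts about the operational rule for $\Rn$ in \tab{sos CCSP} drive everything. First, $\Rn(P)\goesto\tau Q$ iff $Q=\Rn(P')$ with $P\goesto\tau P'$; $\Rn(P)\goesto\rt Q$ iff $Q=\Rn(P')$ with $P\goesto\rt P'$; and for $b\in A$, $\Rn(P)\goesto b Q$ iff $Q=\Rn(P')$ with $P\goesto a P'$ for some $a$ with $(a,b)\mathbin\in\Rn$. Second, as a consequence, $\Rn(P){\ngoesto\beta}$ for all $\beta\mathbin\in Y\cup\{\tau\}$ holds exactly when $P{\ngoesto\alpha}$ for all $\alpha\mathbin\in\Rn^{-1}(Y)\cup\{\tau\}$: an action $\beta\mathbin\in Y$ enabled in $\Rn(P)$ must come from an action $\alpha\mathbin\in\Rn^{-1}(Y)$ enabled in $P$, and conversely.

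For ``$\Leftarrow$'', take $\rho\in\fft(P)$ with $\rho\in\Rn^{-1}(\sigma)$ and induct on the derivation of $\rho\in\fft(P)$, with a case split on the rule applied. The case $\rho=\top$ is immediate. For the second rule, $\rho=a\eta$ with $P\goesto a P'$ and $\eta\in\fft(P')$; then $\sigma=b\sigma'$ with $(a,b)\mathbin\in\Rn$ and $\eta\in\Rn^{-1}(\sigma')$, so $\Rn(P)\goesto b\Rn(P')$ and the claim follows by the induction hypothesis and the second rule. The third rule (a $\tau$-step) is trivial since $\Rn$ commutes with $\tau$. For the fourth and fifth rules $\rho$ carries a leading refusal set $X=\Rn^{-1}(Y)$ (where $Y$ is the leading set of $\sigma$); transferring the ``refuses $X\cup\{\tau\}$'' condition to a ``refuses $Y\cup\{\tau\}$'' condition on $\Rn(P)$ via the second fact, the claim follows by induction and the matching rule. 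For the sixth rule, $\rho=X a\zeta$ with $a\mathbin\in X$, $P\goesto\rt P'$, $a\zeta\in\fft(P')$; here $\sigma=Y b\zeta'$ with $X=\Rn^{-1}(Y)$ and $(a,b)\mathbin\in\Rn$, and the side condition in the definition of $\Rn^{-1}$ forces $b\mathbin\in Y$ (since $a\mathbin\in X=\Rn^{-1}(Y)$), so the sixth rule for $\fft(\Rn(P))$ applies.

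For ``$\Rightarrow$'', take $\sigma\in\fft(\Rn(P))$ and induct on its derivation, producing a witness $\rho\in\fft(P)\cap\Rn^{-1}(\sigma)$. The cases for rules one, two, three and five are routine, pulling transitions of $\Rn(P)$ back to transitions of $P$ via the first fact. The main obstacle, as in \pr{abstraction explicit}, lies in the fourth and sixth rules, where the witness $\rho$ must respect the side condition of $\Rn^{-1}$: when $\sigma=Y b\zeta'$ the witness must begin with $\Rn^{-1}(Y)$ followed by an action $a$ with $(a,b)\mathbin\in\Rn$ and $a\mathbin\in\Rn^{-1}(Y)\Leftrightarrow b\mathbin\in Y$. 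For the sixth rule this is automatic: from $b\mathbin\in Y$ and $(a,b)\mathbin\in\Rn$ one gets $a\mathbin\in\Rn^{-1}(Y)$ by definition. For the fourth rule one argues as before: by \corref{initial} applied to $\sigma$ one has $b\notin Y$; and if the witness $\rho$ furnished by the induction hypothesis begins with an action $a$, then $a\eta\in\fft(P)$, so by \lemref{initial} $P\goesto a$ or $P\goesto\tau$, whence $a\notin\Rn^{-1}(Y)$ since $P$ refuses $\Rn^{-1}(Y)\cup\{\tau\}$; thus the side condition holds vacuously. (This step is exactly where the side condition earns its keep, as witnessed by \ex{renaming}.) With the side condition secured, the fourth rule for $\fft(P)$ yields the required $\rho$, completing the induction.

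I expect the bookkeeping around the side condition to be the only genuinely delicate part; the remaining cases are mechanical, differing from the abstraction proof only in that a single transition $\goesto a$ of $P$ may give rise to several transitions $\goesto b$ of $\Rn(P)$, one for each $b$ with $(a,b)\mathbin\in\Rn$, and symmetrically.
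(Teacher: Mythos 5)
Your proposal is correct and follows essentially the same route as the paper's own proof: both directions by structural induction on the derivation from \tab{FT CCSP}, using the correspondence between transitions of $\Rn(P)$ and $P$ (in particular that $\Rn(P)$ refuses $Y\cup\{\tau\}$ iff $P$ refuses $\Rn^{-1}(Y)\cup\{\tau\}$), with the side condition of $\Rn^{-1}$ discharged in the fourth-rule case via \corref{initial}/\lemref{initial} exactly as the paper does, and automatically in the sixth-rule case. No gaps.
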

\begin{proof}
``$\Leftarrow$'': Let $\rho \in \fft(P)$.
With structural induction on the derivation of $\rho \in \fft(P)$
{\we} show that, for any $\sigma$ with $\rho \in \Rn^{-1}(\sigma)$,
one has $\sigma\in \fft(\Rn(P))$.

If $\rho \mathbin\in \fft(P)$ is obtained from the first rule of \tab{FT CCSP}, then $\sigma=\rho=\top$ and $\sigma\mathbin\in\fft(\Rn(P))$.

Assume $\rho \mathbin= a\eta \mathbin\in \fft(P)$ is obtained from the second rule.
Then $P \goesto{a} P'$ and $\eta \mathbin\in \fft(P')$.
Moreover, $\sigma\mathbin= b\zeta$ with $(a,b)\mathbin\in\Rn$ and $\eta\mathbin\in \Rn^{-1}(\zeta)$.
So $\Rn(P) \goesto{b} \Rn(P')$ by \tab{sos CCSP}, and $\zeta \mathbin\in \fft(\Rn(P'))$ by induction.
It follows that $\sigma\mathbin\in\fft(\Rn(P))$.

Assume $\rho \mathbin\in \fft(P)$ is obtained from the third rule of \tab{FT CCSP}.
Then $P \goesto{\tau} P'$ and $\rho \mathbin\in \fft(P')$.
So $\Rn(P) \goesto\tau \Rn(P')$ by \tab{sos CCSP}, and $\sigma \mathbin\in \fft(\Rn(P'))$ by induction.
It follows that $\sigma\mathbin\in\fft(\Rn(P))$.

Assume $\rho \mathbin= Y \eta \mathbin\in \fft(P)$ is obtained from the fourth rule.
Then $P{\ngoesto{\alpha}}$ for all $\alpha\mathbin\in Y\cup\{\tau\}$ and $\eta \in \fft(P)$.
Moreover, $\sigma\mathbin= X \zeta$ with $Y=\Rn^{-1}(X)$ and $\eta\mathbin\in \Rn^{-1}(\zeta)$.
So $\Rn(P){\ngoesto{\alpha}}$ for all $\alpha\mathbin\in X\cup\{\tau\}$,
and $\zeta \mathbin\in \fft(\Rn(P))$ by induction.
It follows that $\sigma\mathbin\in\fft(\Rn(P))$.

Assume $\rho \mathbin= Y \eta \mathbin\in \fft(P)$ is obtained from the fifth rule.
Then $P{\ngoesto{\alpha}}$ for all $\alpha\mathbin\in Y\cup\{\tau\}$,
$P \goesto{\rt} P'$ and $\rho \in \fft(P')$.
Moreover, $\sigma\mathbin= X \zeta$ with $Y=\Rn^{-1}(X)$.
So $\Rn(P){\ngoesto{\alpha}}$ for all $\alpha\mathbin\in X\cup\{\tau\}$,
$\Rn(P) \goesto{\rt} \Rn(P')$ and $\sigma \mathbin\in \fft(\Rn(P'))$ by induction.
It follows that $\sigma\mathbin\in\fft(\Rn(P))$.

Assume $\rho \mathbin= Y a\eta \mathbin\in \fft(P)$ is obtained from the sixth rule.
Then $P{\ngoesto{\alpha}}$ for all $\alpha\mathbin\in Y\cup\{\tau\}$,
$P \goesto{\rt} P'$, $a\mathbin\in Y$ and $a\eta \in \fft(P)$.
Moreover, $\sigma\mathbin= X b\zeta$ with $Y\mathbin=\Rn^{-1}(X)$, $(a,b)\mathbin\in\Rn$,
$b\mathbin\in X$ and $a\eta\mathbin\in \Rn^{-1}(b\zeta)$.
So $\Rn(P){\ngoesto{\alpha}}$ for all $\alpha\mathbin\in X\cup\{\tau\}$,
$\Rn(P) \goesto{\rt} \Rn(P')$ and $b\zeta \mathbin\in \fft(\Rn(P'))$ by induction.
It follows that $\sigma\mathbin\in\fft(\Rn(P))$.

``$\Rightarrow$'': Let $\sigma \in \fft(\Rn(P))$.
With structural induction on the derivation of $\sigma \in \fft(\Rn(P))$ {\we} show that
there exists a $\rho \in \fft(P)$ with $\rho \in \Rn^{-1}(\sigma)$.

If $\sigma =\top \mathbin\in \fft(\Rn(P))$ is obtained from the first rule of \tab{FT CCSP}, take
$\rho=\top$. Now $\rho\mathbin\in\fft(P)$.

Assume $\sigma=b\zeta\in\fft(\Rn(P))$ is obtained from the second rule of \tab{FT CCSP}.
Then $\Rn(P)\goesto{b}\Rn(P')$ and $\zeta\in\fft(\Rn(P'))$.
So $P \goesto{a} P'$ for some $a$ with $(a,b)\mathbin\in\Rn$, and by induction there is an
$\eta \mathbin\in\fft(P')$ with $\eta \in \Rn^{-1}(\zeta)$. 
Take $\rho=a\eta$. Then $\rho \in \fft(P)$ and $\rho \in \Rn^{-1}(\sigma)$.

Assume $\sigma\in\fft(\Rn(P))$ is obtained from the third rule of \tab{FT CCSP}.
Then $\Rn(P)\goesto{\tau}\Rn(P')$ and $\sigma\in\fft(\Rn(P'))$.
So $P \goesto{\tau} P'$, and by induction there is a
$\rho \mathbin\in\fft(P')$ with $\rho \in \Rn^{-1}(\sigma)$. 
Hence $\rho \in \fft(P)$.

Assume $\sigma=X\zeta\in\fft(\Rn(P))$ is obtained from the fourth rule of \tab{FT CCSP}.
Then $\Rn(P){\ngoesto{\alpha}}$ for all $\alpha\mathbin\in X\cup\{\tau\}$ and $\zeta\in\fft(\Rn(P))$.
So $P{\ngoesto{\alpha}}$ for all $\alpha\mathbin\in \Rn^{-1}(X)\cup\{\tau\}$.
By induction there is an $\eta \mathbin\in\fft(P)$ with $\eta \in \Rn^{-1}(\zeta)$. 
Take $\rho = \Rn^{-1}(X)\eta$. Then $\rho \in \fft(P)$ by the fourth rule of \tab{FT CCSP}.
In case $\zeta$ has the form $b\zeta'$, then $\eta$ has the form
$a\eta'$ with $(a,b)\in\Rn$; moreover, \corref{initial} yields  $b\notin X$ as well as $a\notin \Rn^{-1}(X)$.
Consequently, $\rho\in\Rn^{-1}(\sigma)$.

Assume $\sigma=X\zeta\in\fft(\Rn(P))$ is obtained from the fifth rule of \tab{FT CCSP}.
Then $\Rn(P){\ngoesto{\alpha}}$ for all $\alpha\mathbin\in X\cup\{\tau\}$, $\Rn(P) \goesto{\rt} \Rn(P')$
and $\sigma\in\fft(\Rn(P'))$.
So $P{\ngoesto{\alpha}}$ for all $\alpha\mathbin\in \Rn^{-1}(X)\cup\{\tau\}$, $P \goesto{\rt} P'$,
and by induction there is a $\rho \mathbin\in\fft(P')$ with $\rho \in \Rn^{-1}(\sigma)$. 
Of course $\rho$ must have the form $\Rn^{-1}(X)\eta$.
So $\rho \in \fft(P)$ by the fifth rule of \tab{FT CCSP}.

Finally assume $\sigma=X b\zeta\in\fft(\Rn(P))$ is obtained from the sixth rule of \tab{FT CCSP}.
Then $\Rn(P){\ngoesto{\alpha}}$ for all $\alpha\mathbin\in X\cup\{\tau\}$, $b\in X$, $\Rn(P) \goesto{\rt} \Rn(P')$
and $b\zeta\in\fft(\Rn(P'))$.
So $P{\ngoesto{\alpha}}$ for all $\alpha\mathbin\in \Rn^{-1}(X)\cup\{\tau\}$, $P \goesto{\rt} P'$,
and by induction there is an $a\eta \mathbin\in\fft(P')$ with $(a,b)\in\Rn$ and $\eta \in \Rn^{-1}(\zeta)$. 
So $a\mathbin\in \Rn^{-1}(X)$. Take $\rho= \Rn^{-1}(X)a\eta$. Then $\rho \in \fft(P)$ by the sixth rule of \tab{FT CCSP}.
Moreover, $\rho \in \Rn^{-1}(\sigma)$.
\end{proof}

\begin{proposition}{spar explicit rooted}
$\rfft(P \spar{S} Q) = \col(\rfft(P) \spar{S} \rfft(Q))$.
\end{proposition}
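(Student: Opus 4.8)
The plan is to reduce this rooted identity to its unrooted counterpart, Proposition~\ref{pr:spar explicit}, plus a short case analysis of the ingredients of $\rfft$ that go beyond $\fft$. Observe first that $\rfft(R)$ splits into four disjoint parts: the \emph{$\fft$-shaped} elements (those in $\fft(R)$, i.e.\ ending in $\top$ and not starting with $\rt$), the bit $\st$, the bit $\pst$, and the \emph{rooted} traces $\rt X\sigma$. Since $\col$ preserves the first symbol of a sequence and maps $\top$-ended sequences to $\top$-ended sequences, it acts independently on each of these parts; hence it suffices to verify the claimed equation part by part, and likewise for the extended operator $\spar{S}$ on (rooted) trace sets.

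For the $\fft$-shaped part I would argue as follows. By Proposition~\ref{pr:spar explicit}, $\fft(P\spar{S}Q)=\col(\fft(P)\spar{S}\fft(Q))$, and this is exactly the $\fft$-shaped part of $\rfft(P\spar{S}Q)$. Conversely, an $\fft$-shaped trace lying in $\col(\rfft(P)\spar{S}\rfft(Q))$ arises by $\col$-reductions from some $\fft$-shaped $\zeta\in\rfft(P)\spar{S}\rfft(Q)$; any valid decomposition of such a $\zeta$ has $\fft$-shaped components (the final $\top$ goes to both sides and, as $\zeta$ does not begin with $\rt$, neither component does), hence components in $\fft(P)$ and $\fft(Q)$; together with $\fft(P)\subseteq\rfft(P)$ this pins down the $\fft$-shaped part as $\col(\fft(P)\spar{S}\fft(Q))=\fft(P\spar{S}Q)$. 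For the two bits I would use: $P\spar{S}Q{\ngoesto\tau}$ iff $P{\ngoesto\tau}$ and $Q{\ngoesto\tau}$ (as $\tau\notin S$), matching the clause that $\st\in\rfft(P)\spar{S}\rfft(Q)$ iff $\st\in\rfft(P)$ and $\st\in\rfft(Q)$; and for $\pst$, that $P\spar{S}Q{\goesto\tau}$ iff $P{\goesto\tau}$ or $Q{\goesto\tau}$, that $\emptyset\top\in\fft(P\spar{S}Q)$ iff $\emptyset\top\in\fft(P)$ and $\emptyset\top\in\fft(Q)$ (from Proposition~\ref{pr:spar explicit} and \obsref{doubling}), and that a stable process always has $\emptyset\top$ as a failure trace, so $\emptyset\top\in\fft(R)$ iff $\st\in\rfft(R)$ or $\pst\in\rfft(R)$. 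Combining these three observations reproduces exactly the clause defining when $\pst$ is in $\rfft(P)\spar{S}\rfft(Q)$.

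The substantive part is the rooted traces. Fix $\rt X\sigma$; both sides of the equation force $P$ and $Q$ stable (on the left because $\rt X\sigma\in\rfft(P\spar{S}Q)$ requires $P\spar{S}Q{\ngoesto\tau}$, on the right because rooted traces are dropped from $\rfft(P)\spar{S}\rfft(Q)$ unless $\st\in\rfft(P)$ and $\st\in\rfft(Q)$), so I would assume this throughout. Since $\rt\notin S$ and the operational semantics of CCSP$_\rt$ has no synchronisation rule for $\rt$, a transition $P\spar{S}Q\goesto\rt R$ is necessarily $P\goesto\rt P'$ with $R=P'\spar{S}Q$, or the symmetric alternative; I would treat the former and invoke symmetry. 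Then I would apply \obsref{doubling} and Proposition~\ref{pr:spar explicit} to $P'\spar{S}Q$ to decompose $X\sigma$, splitting the refusal set $X$ into $X_{\rm L}$ and $X_{\rm R}$ with the same choice $X_{\rm L}:=(X\setminus S)\cup\{a\in X\cap S\mid P{\ngoesto a}\}$ used in that proof, and recombining the leading $\rt$ (which, under the extended \df{decomposition}, behaves like an action in $A\setminus S$ and so goes entirely to one side) with the $P'$-component to obtain a rooted trace $\rt X_{\rm L}\sigma_{\rm L}\in\rfft(P)$, while the $Q$-component $X_{\rm R}\sigma_{\rm R}$ lands in $\fft(Q)\subseteq\rfft(Q)$. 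The side conditions would be checked by: $\I(P)\cap X_{\rm L}=\emptyset$ from the choice of $X_{\rm L}$ together with $\I(P\spar{S}Q)\cap X=\emptyset$ (which for $a\in X\setminus S$ forces $a\notin\I(P)$); and $\I(Q)\cap X_{\rm R}=\emptyset$ from $X_{\rm R}\sigma_{\rm R}\in\fft(Q)$ with $Q$ stable. The converse direction runs this analysis backwards, and the $XX$-doubling introduced via \obsref{doubling} is absorbed by the $\col$-closure on both sides, exactly as in Proposition~\ref{pr:spar explicit}.

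The hard part will be this last step: showing that a \emph{single} valid decomposition of the refusal set $X$ can be chosen which simultaneously respects the decomposition constraints (\ref{consistency})--(\ref{decomposition}) and the rootedness side condition $\I(P)\cap X_{\rm L}=\emptyset$ (and its mirror image), and checking that the leading $\rt$ interacts correctly both with the $\col$-closure and with the ``rooted traces dropped unless $\st$ in both'' clause of the extended operator $\spar{S}$, so that no rooted trace is spuriously created or lost. Once this bookkeeping is settled, the remainder is a routine structural induction paralleling the proof of Proposition~\ref{pr:spar explicit}.
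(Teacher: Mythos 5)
Your overall strategy---splitting $\rfft$ into its $\fft$-shaped part, the bits $\st$ and $\pst$, and the rooted traces $\rt X\sigma$, and reducing the first of these to \pr{spar explicit} and the two bits to elementary facts about stability and $\tau$-reachability in $P\spar{S}Q$---is exactly the paper's, and those parts of your argument are sound, as is your sketch of the right-to-left inclusion for rooted traces.

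The gap is in the left-to-right inclusion for a rooted trace $\rt X\sigma\in\rfft(P\spar{S}Q)$, and it is precisely the step you flag as ``the hard part''. You want a \emph{single} decomposition of the refusal set $X$ into $X_{\rm L},X_{\rm R}$ satisfying both the rootedness condition $\I(P)\cap X_{\rm L}=\emptyset$ (a constraint on $P$, the state \emph{before} the time-out) and $X_{\rm L}\sigma_{\rm L}\in\fft(P')$ (which forces $P'{\ngoesto{a}}$ for $a\in X_{\rm L}$, a constraint on the state \emph{after} it). Your formula $X_{\rm L}:=(X\setminus S)\cup\{a\in X\cap S\mid P{\ngoesto{a}}\}$ secures the former but not the latter: take $P=\rt.a$, $Q=0$, $S=\{a\}$. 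Then $\rt\{a\}\top\in\rfft(P\spar{S}Q)$ and your recipe yields $X_{\rm L}=\{a\}$, but $\{a\}\top\notin\fft(a.0)=\fft(P')$, so $\rt X_{\rm L}\sigma_{\rm L}\notin\rfft(P)$. The formula actually used in \pr{spar explicit}, which refers to $P'$ instead, fails symmetrically on the rootedness condition (e.g.\ $P=a+\rt.0$, $Q=\rt.0$, $S=\{a\}$, where it puts $a$ into $X_{\rm L}$ although $a\in\I(P)$). So neither candidate works, and whether some single splitting always exists is further complicated by the fact that the realm of $X$ in $P'\spar{S}Q'$ may demand different splittings at different states along it.

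The paper does not look for a single decomposition at all. It introduces an auxiliary set $\rffte$ in which rooted traces carry a \emph{doubled} refusal set $\rt XX\sigma$, and decomposes the two copies of $X$ differently: the inner copy according to the post-time-out states (into $X_{\rm L},X_{\rm R}$, inherited from the inductive decomposition of $X\sigma\in\ffte(P'\spar{S}Q')$), and the outer copy according to the pre-time-out states (into $X_{\rm L}':=\{a\in X_{\rm L}\mid P{\ngoesto{a}}\}$ and $X_{\rm R}':=(X\setminus S)\cup\{a\in X\cap S\mid Q{\ngoesto{a}}\}$). This yields $\rt X_{\rm L}'X_{\rm L}\sigma_{\rm L}\in\rfft(P)$ and $X_{\rm R}'X_{\rm R}\sigma_{\rm R}\in\fft(Q)$, after which $\col$ collapses $XX$ back to $X$. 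You have both ingredients (the doubling of \obsref{doubling} and the closure $\col$) in hand, but you use the doubling only as a formality rather than as the mechanism that lets the two occurrences of $X$ be split in incompatible ways; without that move the induction does not close.
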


\begin{proof}
``$\supseteq$'': Let $\sigma\in\rfft(P) \spar{S} \rfft(Q)$, and fix a valid decomposition of
$\sigma$ as a witness.
With the clearly valid generalisation of \obsref{doubling} to $\rfft(P)$ it suffices to show that $\sigma\in\rfft(P\spar{S}Q)$.
\begin{itemize}
\item
Let $\sigma= \st$.  Then $\st \in \rfft(P)$ and $\st \in \rfft(Q)$.
So $\stable(P)$ and $\stable(Q)$. Indeed, this implies $\stable(P\spar{S} Q)$, and thus $\sigma\in\rfft(P\spar{S}Q)$.

\item
  Let $\sigma\mathbin=\top$, $\sigma\mathbin= a\rho$ or $\sigma \mathbin= X\rho$.
  Then $\sigma_{\rm L} \in \fft(P)$ and $\sigma_{\rm R} \in \fft(Q)$,
  so from (the proof of) \pr{spar explicit} one obtains $\sigma \in \fft(P\spar{S}Q) \subseteq \rfft(P\spar{S}Q)$.

\item
Let $\sigma= \rt X\rho$. Assume that $\sigma_{\rm L} \mathbin= \rt X_{\rm L}\rho_{\rm L}\mathbin \in \rfft(P)$ and
$\sigma_{\rm R} \mathbin= X_{\rm R}\rho_{\rm R}\mathbin\in \rfft(Q)$; the other case follows by symmetry.
Then $\sigma_{\rm R} \mathbin\in \fft(Q)$. Moreover, $\st\mathbin\in \rfft(Q)$, so $Q$ is stable.
By \df{rfft}, $P \goesto{\rt} P'$, $X_{\rm L}\rho_{\rm L} \in \fft(P')$
and $P{\ngoesto\alpha}$ for $\alpha\in X_{\rm L}\cup\{\tau\}$.
Thus $P\spar{S}Q \goesto{\rt} P'\spar{S}Q$ by \tab{sos CCSP}, and $X\rho \in \fft(P'\spar{S}Q)$ by
(the proof of) \pr{spar explicit}.
Furthermore, $X_{\rm R}\rho_{\rm R}\in \fft(Q)$ must be derived by the fourth to sixth rule of \tab{FT CCSP},
so $Q{\ngoesto\alpha}$ for all $\alpha\in X_{\rm R}\cup\{\tau\}$.
Hence $P\spar{S}Q{\ngoesto\alpha}$ for all $\alpha\in X\cup\{\tau\}$, using (\ref{decomposition}).
Therefore, $\sigma \in \rfft(P\spar{S}Q)$.

\item
  Let $\sigma\mathbin=\pst$. Then either $\pst\mathbin\in \rfft(P)\wedge \pst\mathbin\in \rfft(Q)$, or 
  $\pst\mathbin\in \rfft(P) \wedge \st\mathbin\in \rfft(Q)$, or $\st\mathbin\in \rfft(P) \wedge \pst\mathbin\in \rfft(Q)$.
  Either way, $P$ and $Q$ both have paths of $\tau$-transitions to a stable state, and at least one
  of them is nonempty. It follows that $P\spar{S}Q$ has a nonempty path of $\tau$-transitions to a
  stable state, so $\sigma \in \rfft(P\spar{S}Q)$.
\end{itemize}
``$\subseteq$'': Let $\rffte(P)$ be defined exactly as $\rfft(P)$---see \df{rfft}---but writing
$\rt X X \sigma$ instead of $\rt X \sigma$, and employing
$\ffte(P)$ and $\ffte(P')$, as defined in the proof of \pr{spar explicit}, instead of $\fft(P)$ and $\fft(P')$.
Obviously, $\rfft(P \spar{S} Q) \subseteq \col(\rffte(P \spar{S} Q))$, so, using that $\col$ is monotonous, it
suffices to show that $\rffte(P \spar{S} Q) \subseteq \rfft(P) \spar{S} \rfft(Q)$.

So let $\sigma \in \rffte(P \spar{S} Q)$. With structural induction on the derivation of
$\sigma \in \rffte(P \spar{S} Q)$ from the (amended) rules of Tables~\ref{tab:sos CCSP} and~\ref{tab:FT CCSP}
I show that $\sigma \in \rfft(P) \spar{S} \rfft(Q)$. This means that in case $\sigma \mathbin{\neq} \st,\pst$
I have to give a valid decomposition of $\sigma$ such that $\sigma_{\rm L}\in \rfft(P)$ and $\sigma_{\rm R} \in  \rfft(Q)$;
moreover, I have to show that $\st\in \rfft(P)$ and $\st\in  \rfft(Q)$ in case $\sigma$ has the form $\rt X \rho$.
\begin{itemize}
\item
  Let $\sigma\mathbin=\top$, $\sigma\mathbin= a\rho$ or $\sigma \mathbin= X\rho$.
  Then $\sigma \in \ffte(P \spar{S} Q)$. So by the proof of \pr{spar explicit}
  there is a valid decomposition of $\sigma$ such that $\sigma_{\rm L}\in \fft(P) \subseteq \rfft(P)$
  and $\sigma_{\rm R} \in \fft(Q)\subseteq \rfft(Q)$.

\item
  Let $\sigma=\st$.
  Then $P\spar{S}Q{\ngoesto\tau}$, so $P{\ngoesto\tau}$ and $Q{\ngoesto\tau}$.
  So $\st \in \rfft(P)$ and $\st \in \rfft(Q)$.

\item
  Let $\sigma=\rt XX \rho$. By \df{rfft},
  $P\spar{S}Q {\ngoesto{\alpha}}$ for all $\alpha \in X\cup\{\tau\}$,
  $P \spar{S} Q \goesto{\rt} P' \spar{S} Q'$ and $X\rho\in\ffte(P' \spar{S} Q')$.
  It follows that $P {\ngoesto{\tau}}$ and $Q {\ngoesto{\tau}}$, so $\st\in \rfft(P)$ and $\st\in \rfft(P)$.
  By the proof of \pr{spar explicit} there is a valid decomposition of $X\rho$ such that
  $X_{\rm L}\rho_{\rm L}\in \fft(P')$ and $X_{\rm R}\rho_{\rm R} \mathbin\in \fft(Q')$.
  Depending on which rule of \tab{sos CCSP} derived
  $P \spar{S} Q \goesto{\rt} P' \spar{S} Q'$,
  \begin{enumerate}[(i)]
  \item either $P \goesto{\rt} P'$ and $Q'=Q$,
  \item or $P'=P$ and $Q \goesto{\rt} Q'$.
  \end{enumerate}
  For reasons of symmetry, {\we} may assume the former.

  Let $X_{\rm L}' := \{a \in X_{\rm L} \mid P {\ngoesto{a}}\}$ and
  $X_{\rm R}' := (X\setminus S) \cup \{a \in X\cap S \mid Q {\ngoesto{a}}\}$.
  Using the fourth rule of \tab{FT CCSP}, 
  $X_{\rm L}'X_{\rm L}\rho_{\rm L} \in \fft(P')$, so by \df{rfft}
  $\sigma_{\rm L} := \rt X_{\rm L}'X_{\rm L}\rho_{\rm L} \in \rfft(P)$.
  Moreover, by the fourth rule of \tab{FT CCSP}, 
  $\sigma_{\rm R} := X_{\rm R}'X_{\rm R}\rho_{\rm R} \in \fft(Q)\subseteq \rfft(Q)$, using that $Q{\ngoesto\alpha}$
  for all $\alpha\mathbin\in (X{\setminus} S) \cup \{\tau\}$.
  It remains to show the validity of the decomposition of $\sigma$ into $\sigma_{\rm L}$ and $\sigma_{\rm R}$.
  This proceeds exactly as in the corresponding case in the proof of \pr{spar explicit}
  (the case $\sigma= X X\rho$ in direction ``$\subseteq$'').

\item
  Let $\sigma=\pst$. Then $P\spar{S}Q$ has a nonempty path of $\tau$-transitions to a stable state.
  This path projects to paths of $\tau$-transitions from $P$ and from $Q$ to stable states, and at
  least one of them must be nonempty. It follows that either $\pst\mathbin\in \rfft(P)\wedge \pst\mathbin\in \rfft(Q)$, or 
  $\pst\mathbin\in \rfft(P) \wedge \st\mathbin\in \rfft(Q)$, or $\st\mathbin\in \rfft(P) \wedge \pst\mathbin\in \rfft(Q)$.
\qedhere
\end{itemize}
\end{proof}

\begin{proposition}{abstraction explicit rooted}
Let $\st\neq\sigma\neq\pst$.
Then $$\sigma\in \rfft(\tau_I(P)) ~~\Leftrightarrow~~ \exists \rho \in \rfft(P).~ \tau_I(\rho) =\sigma\cup I.$$
Moreover, $\st\in \rfft(\tau_I(P)) ~~\Leftrightarrow~~ \st\in\rfft(P) \wedge \I(P)\cap I=\emptyset$,\\
and $\pst\in \rfft(\tau_I(P)) ~~\Leftrightarrow~~ \begin{array}[t]{@{}ll@{}}
                        (\exists c_0c_1\dots c_n I\top \in \rfft(P) \mbox{~where~}n\geq 0 \mbox{~and all~}c_i\in I)\\
                          \mbox{} \vee (\pst\in \rfft(P) \wedge I\top\in \rfft(P) ).
                        \end{array}$
\end{proposition}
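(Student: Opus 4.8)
The plan is to prove the three stated equivalences one at a time, in each case peeling the outer $\tau_I$ off with \pr{abstraction explicit} and unfolding \df{rfft}, while reading the $\tau$- and $\rt$-transitions out of $\tau_I(P)$ off the rules for $\tau_I$ in \tab{sos CCSP}. Two conventions are worth fixing first: since $\tau_I(\cdot)$ and ``survives abstraction from $I$'' are defined only on (rooted) partial failure traces, the quantifier ``$\exists \rho \in \rfft(P)$'' in the first equivalence effectively ranges over $\fft(P) \cup \{\rt X\eta \in \rfft(P)\}$; and, as noted in the proof of \thrm{congruence}, $\tau_I$ leaves an initial $\rt$-tag inert, so $\tau_I(\rt X\eta) = \rt\,\tau_I(X\eta)$, where $\tau_I(X\eta)$ begins with the same set $X$. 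With this, the clause for $\st \neq \sigma \neq \pst$ splits on whether $\sigma$ begins with $\rt$. If it does not, then $\sigma \in \rfft(\tau_I(P)) \Leftrightarrow \sigma \in \fft(\tau_I(P))$, and any $\rho$ with $\tau_I(\rho) = \sigma \cup I$ neither begins with $\rt$ nor equals $\st$ or $\pst$, so that $\rho \in \fft(P) \Leftrightarrow \rho \in \rfft(P)$; the equivalence is then exactly \pr{abstraction explicit}.

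If $\sigma = \rt X\rho$, I would unfold \df{rfft} on both sides. On the left, $\rt X\rho \in \rfft(\tau_I(P))$ asks for $\tau_I(P) {\ngoesto{\alpha}}$ for all $\alpha \in X\cup\{\tau\}$, a transition $\tau_I(P) \goesto{\rt} R$ with $X\rho \in \fft(R)$, and $\I(\tau_I(P)) \cap X = \emptyset$. The only rule producing $\tau_I(P) \goesto{\rt} R$ forces $R = \tau_I(P')$ with $P \goesto{\rt} P'$; the $\tau$-rules give $\tau_I(P){\ngoesto{\tau}} \Leftrightarrow P {\ngoesto{\alpha}}$ for all $\alpha \in I\cup\{\tau\}$, and then $\I(\tau_I(P)) = \I(P)$ (using $I \subseteq X$, so $\I(P)\cap X = \emptyset$ too). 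Feeding $X\rho \in \fft(\tau_I(P'))$ to \pr{abstraction explicit} produces a witness $\rho' \in \fft(P')$ with $\tau_I(\rho') = X\rho \cup I$; since $X\rho\cup I$ begins with the set $X\cup I$, the construction in the proof of \pr{abstraction explicit} delivers such a $\rho'$ that itself begins with $X\cup I$, so that $\rt\rho' \in \rfft(P)$ holds by \df{rfft}, with $\tau_I(\rt\rho') = (\rt X\rho)\cup I$. The reverse direction runs the same identifications backwards.

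The $\st$ clause is immediate: $\st \in \rfft(\tau_I(P)) \Leftrightarrow \tau_I(P){\ngoesto{\tau}} \Leftrightarrow (P{\ngoesto{\tau}} \wedge \forall a\in I.\,P{\ngoesto{a}}) \Leftrightarrow (\st \in \rfft(P) \wedge \I(P)\cap I = \emptyset)$, where in the last step an unstable $P$ makes both sides fail. For the $\pst$ clause, $\pst \in \rfft(\tau_I(P)) \Leftrightarrow \tau_I(P)\goesto{\tau} \wedge \emptyset\top \in \fft(\tau_I(P))$; the first conjunct is ``$P\goesto{\tau}$ or $\exists a\in I.\,P\goesto{a}$'', and \pr{abstraction explicit} turns the second into ``$\exists \rho\in\fft(P).\,\tau_I(\rho) = I\top$''. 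Reading that equation off, every action of $\rho$ lies in $I$, every set in $\rho$ is $I$, and at least one set occurs. I would then trace the derivation of $\rho\in\fft(P)$ from \tab{FT CCSP}: applications of the third rule walk along a $\tau$-path $P = P_0 \goesto{\tau} P_1 \goesto{\tau} \cdots \goesto{\tau} P_k$, which ends either with the second rule --- then $\rho$ has a leading $I$-action, and cutting off the maximal $I$-action prefix $c_0\cdots c_n$ up to the first set and re-attaching $\top$ gives $c_0\cdots c_n I\top \in \fft(P) \subseteq \rfft(P)$ (first disjunct) --- or with one of the fourth to sixth rules at a state $P_k$ refusing $I\cup\{\tau\}$ --- then $I\top, \emptyset\top \in \fft(P_k)$, hence $I\top \in \rfft(P)$ and $\emptyset\top\in\fft(P)$ by the third rule, and $k\geq 1$ is forced by $\tau_I(P)\goesto{\tau}$, so $P\goesto{\tau}$ and $\pst\in\rfft(P)$ (second disjunct). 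Both converses are a line each: $c_0\cdots c_n I\top\in\rfft(P)$ gives $P\goesto{c_0}$ with $c_0\in I$, hence $\tau_I(P)\goesto{\tau}$, plus $\tau_I(c_0\cdots c_n I\top) = I\top$; while $\pst\in\rfft(P)\wedge I\top\in\rfft(P)$ gives $P\goesto{\tau}$ and $\tau_I(I\top) = I\top$, hence $\emptyset\top\in\fft(\tau_I(P))$.

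The main obstacle I anticipate is the forward direction of the $\pst$ clause: it is the only part that is not a transparent re-reading of \df{rfft} but demands the case analysis along the derivation of $\rho \in \fft(P)$ sketched above, and the two right-hand disjuncts genuinely record two different reasons why $\tau_I(P)$ can at once perform a $\tau$ and exhibit the ``idle then halt'' pattern $\emptyset\top$ --- the $\tau$ arising either from a hidden $I$-action available immediately, or only after a real $\tau$-step into a stable $I\cup\{\tau\}$-refusing state. A secondary, purely technical point is justifying in the $\rt X\rho$ case that the witness $\rho'$ supplied by \pr{abstraction explicit} can be taken to begin with a set; the remaining work --- checking the ``survives abstraction'' conditions (i)--(iii), the action of $\tau_I$ on individual actions and sets, and the inertness of the $\rt$-tag --- is fiddly but follows the pattern already laid down in the proof of \pr{abstraction explicit}.
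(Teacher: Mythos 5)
Your proposal is correct and follows essentially the same route as the paper's proof: the non-$\rt$ cases reduce directly to \pr{abstraction explicit}, the $\rt X\rho$ case unfolds \df{rfft} on both sides and relies (as the paper also does) on the strengthened, proof-level fact that the witness produced by \pr{abstraction explicit} for a trace beginning with a set itself begins with that set, the $\st$ clause is immediate, and the $\pst$ clause rests on the same dichotomy of whether a hidden $I$-action occurs before reaching a state refusing $I\cup\{\tau\}$. The only cosmetic difference is that for the forward direction of the $\pst$ clause you induct on the derivation of $\rho\in\fft(P)$ from \tab{FT CCSP}, where the paper reasons directly about the corresponding path of $\tau$- and $I$-labelled transitions in the LTS; these are interchangeable.
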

\begin{proof}
Let $\st\neq\sigma\neq\pst$. Since each state $R$ reachable from $\tau_I(P)$ satisfies $R{\ngoesto{\alpha}}$ for all $\alpha\in I$,
a trivial induction shows that
\[
\sigma\in \rfft(\tau_I(P)) \Leftrightarrow \sigma\cup I \in \rfft(\tau_I(P))
.
\tag{*}
\]
``$\Leftarrow$'': Let $\rho \mathbin\in \rfft(P)$ and $\rho$ survives abstraction from $I$.
By Condition (i) in the definition of abstraction survival,
$\tau_I(\rho)$ is of the form $\sigma\cup I$. In view of (*), it suffices to show that 
$\tau_I(\rho)\mathbin\in \rfft(\tau_I(P))$.

  Suppose $\rho\mathbin=\top$, $\rho\mathbin= a\eta$ or $\rho \mathbin= X\eta$.
  Then $\rho \in \fft(P)$. So $\tau_I(\rho)\in \fft(\tau_I(P)) \subseteq \rfft(\tau_I(P))$
  by the proof of \pr{abstraction explicit}.

  Suppose $\rho\mathbin=\rt X\eta$.
  Then $P{\ngoesto\alpha}$ for all $\alpha\in X\cup\{\tau\}$, $P\goesto\rt P'$ and $X\eta \in \fft(P')$, by \df{rfft}.
  So $\tau_I(X\eta)\in \fft(\tau_I(P'))$ by the proof of \pr{abstraction explicit}.
  Moreover, $\tau_I(P) \goesto{\rt} \tau_I(P')$ and $\tau_I(P){\ngoesto{\alpha}}$ for all
  $\alpha\mathbin\in X\cup\{\tau\}$, using that $I\subseteq X$.
  Note that $\tau_I(X\eta)$ has the form $X\zeta$, although not necessarily with $\zeta=\tau_I(\eta)$.
  \df{rfft} yields $\tau_I(\rho) = \rt\tau_I(X\eta) \in \rfft(\tau_I(P))$.
\\[1ex]
``$\Rightarrow$'': Let $\sigma\in \rfft(\tau_I(P))$; by (*) {\we} may assume that $\sigma=\sigma\cup I$.
{\We} have to find a $\rho\in\rfft(P)$ such that $\tau_I(\rho)=\sigma$.

  Suppose $\sigma\mathbin=\top$, $\sigma\mathbin= a\eta$ or $\sigma \mathbin= X\eta$.
  Then $\sigma\in \fft(\tau_I(P))$. So by the proof of \pr{abstraction explicit}
  there is a $\rho\in\fft(P)\subseteq\rfft(P)$ such that $\tau_I(\rho)=\sigma$.

  Suppose $\sigma\mathbin=\rt X\zeta$. Then $\tau_I(P){\ngoesto\alpha}$ for all $\alpha\in X\cup\{\tau\}$,
  $\tau_I(P)\goesto\rt \tau_I(P')$ and $X\zeta \in \fft(\tau_I(P'))$, by \df{rfft}.
  Hence $P{\ngoesto\alpha}$ for all $\alpha\in X\cup\{\tau\}$, and $P\goesto\rt P'$.
  By the proof of \pr{abstraction explicit} there is a $X\eta\in\fft(P')$ such that $\tau_I(X\eta)=X\zeta$.
  So $\rho:=\rt X \eta \in\rfft(P)$ by \df{rfft}, and $\tau_I(\rho)=\sigma$.
\\[1ex]
The second statement of \pr{abstraction explicit rooted} is trivial.
Now consider the third.
\\[1ex]
``$\Leftarrow$'': Suppose $\exists c_0c_1\dots c_n I\top \in \rfft(P) \mbox{~where~}n\geq 0 \mbox{~and all~}c_i\in I$.
Then $P$ has a nonempty path, all of which transitions are labelled $\tau$ or $c_i\in I$, ending in a
state $P'$ satisfying $P'{\ngoesto\alpha}$ for all $\alpha\in I\cup \{\tau\}$.
Consequently, $\tau_I(P)$ has a nonempty part, all of which transitions are labelled $\tau$, ending in
a state $\tau(P')$ satisfying $\tau_I(P'){\ngoesto\tau}$. Consequently, $\tau_I(P){\goesto\tau}$ and
$\emptyset\top\in\fft(\tau_I(P))$. Thus $\pst\in\rfft(\tau_I(P))$.

Now suppose $\pst\in \rfft(P) \wedge I\top\in \rfft(P)$.
Then $P{\goesto\tau}$ and $P$ has a path, all of which transitions are labelled $\tau$, ending in
a state $P'$ satisfying $P'{\ngoesto\alpha}$ for all $\alpha\in I\cup \{\tau\}$.
This path must be nonempty. Again it follows that $\pst\in\rfft(\tau_I(P))$.
\\[1ex]
``$\Leftarrow$'': Suppose $\pst\in\rfft(\tau_I(P))$. Then $\tau_I(P)$ has a nonempty part, all of
which transitions are labelled $\tau$, ending in a state $\tau(P')$ satisfying $\tau_I(P'){\ngoesto\tau}$.
Consequently, $P$ has a nonempty path, all of which transitions are labelled $\tau$ or $c\in I$, ending in a
state $P'$ satisfying $P'{\ngoesto\alpha}$ for all $\alpha\in I\cup \{\tau\}$.
In case on this path some transitions are labelled $c\mathbin\in I$, one obtains
$\exists c_0c_1\dots c_n I\top \in \rfft(P) \mbox{~where~}n\mathbin{\geq} 0$ and all $c_i\mathbin\in I$.
In case there are no such transitions,  $\pst\in \rfft(P)$ and $I\top\in \rfft(P)$.
\end{proof}

\begin{proposition}{renaming rooted}
$\sigma\in \rfft(\Rn(P)) ~~\Leftrightarrow~~ \exists \rho \in \rfft(P).~ \rho \in \Rn^{-1}(\sigma)$.
\end{proposition}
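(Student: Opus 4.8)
The plan is to reduce everything possible to \pr{renaming explicit}, and to treat the genuinely new ingredients of $\rfft$ --- the bits $\st$ and $\pst$ and the rooted traces $\rt X\sigma$ --- by a direct appeal to \df{rfft}. First I would record three bookkeeping observations about relational renaming: (a) $\Rn(P){\goesto\tau}\Leftrightarrow P{\goesto\tau}$ and $\Rn(P){\goesto\rt}\Leftrightarrow P{\goesto\rt}$, immediately from the renaming rule of \tab{sos CCSP}; (b) $\Rn^{-1}(\emptyset)=\emptyset$, so $\emptyset\top$ is the only element of $\Rn^{-1}(\emptyset\top)$; and (c) for stable $P$ one has $\I(\Rn(P))\cap X=\emptyset\Leftrightarrow\I(P)\cap\Rn^{-1}(X)=\emptyset$, by unfolding both sides. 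I would also make explicit the convention, already stipulated in the proof of \thrm{congruence}, that $\Rn^{-1}$ acts on a rooted trace $\rt X\sigma$ by prepending $\rt$ to $\Rn^{-1}(X\sigma)$, and fixes $\st$ and $\pst$; in particular the first element of every $\rho\in\Rn^{-1}(Y\zeta)$ is exactly $\Rn^{-1}(Y)$, which is immediate from the definition of $\Rn^{-1}(\sigma)$.

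For ``$\Leftarrow$'', given $\rho\in\rfft(P)$ with $\rho\in\Rn^{-1}(\sigma)$, I would distinguish the form of $\rho$. If $\rho\in\fft(P)$ (i.e.\ $\rho$ is $\top$, $a\eta$ or $X\eta$), then $\sigma\in\fft(\Rn(P))\subseteq\rfft(\Rn(P))$ by \pr{renaming explicit}. If $\rho=\st$, then $P{\ngoesto\tau}$, so $\Rn(P){\ngoesto\tau}$ by (a) and $\sigma=\st\in\rfft(\Rn(P))$. If $\rho=\pst$, then $P{\goesto\tau}$ and $\emptyset\top\in\fft(P)$; by (b) and \pr{renaming explicit} also $\emptyset\top\in\fft(\Rn(P))$, and $\Rn(P){\goesto\tau}$ by (a), so $\sigma=\pst\in\rfft(\Rn(P))$. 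The main case is $\rho=\rt X\eta$: here $\sigma$ must be $\rt Y\zeta$ with $X=\Rn^{-1}(Y)$ and $X\eta\in\Rn^{-1}(Y\zeta)$; by \df{rfft} we have $P\goesto\rt P'$, $X\eta\in\fft(P')$, $P{\ngoesto\tau}$ and $\I(P)\cap X=\emptyset$, hence by \pr{renaming explicit}, (a) and (c) we obtain $\Rn(P)\goesto\rt\Rn(P')$, $Y\zeta\in\fft(\Rn(P'))$, $\Rn(P){\ngoesto\tau}$ and $\I(\Rn(P))\cap Y=\emptyset$, so $\sigma=\rt Y\zeta\in\rfft(\Rn(P))$ by \df{rfft}.

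For ``$\Rightarrow$'', given $\sigma\in\rfft(\Rn(P))$, I would again split on the form of $\sigma$. If $\sigma\in\fft(\Rn(P))$, then \pr{renaming explicit} supplies $\rho\in\fft(P)\subseteq\rfft(P)$ with $\rho\in\Rn^{-1}(\sigma)$. If $\sigma=\st$, then $\Rn(P){\ngoesto\tau}$, so $P{\ngoesto\tau}$ by (a) and $\rho:=\st\in\rfft(P)\cap\Rn^{-1}(\st)$. If $\sigma=\pst$, then $\Rn(P){\goesto\tau}$ and $\emptyset\top\in\fft(\Rn(P))$; by (a), (b) and \pr{renaming explicit} we get $P{\goesto\tau}$ and $\emptyset\top\in\fft(P)$, so $\rho:=\pst$ works. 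Finally, if $\sigma=\rt Y\zeta$, then \df{rfft} gives $\Rn(P)\goesto\rt\Rn(P')$, $Y\zeta\in\fft(\Rn(P'))$, $\Rn(P){\ngoesto\tau}$ and $\I(\Rn(P))\cap Y=\emptyset$; by \pr{renaming explicit} there is $X\eta\in\fft(P')$ with $X\eta\in\Rn^{-1}(Y\zeta)$ and $X=\Rn^{-1}(Y)$, and then (a), (c) yield $P\goesto\rt P'$, $P{\ngoesto\tau}$ and $\I(P)\cap X=\emptyset$, so $\rt X\eta\in\rfft(P)$ by \df{rfft} and $\rt X\eta\in\Rn^{-1}(\sigma)$.

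Since the bulk of the work is already packaged in \pr{renaming explicit}, I expect no real obstacle here; the only points that require care are the elementary observations (a)--(c) and stating precisely enough how $\Rn^{-1}$ and $\I$ interact with an initial $\rt$-prefix, so that the appeals to \df{rfft} and \pr{renaming explicit} compose cleanly.
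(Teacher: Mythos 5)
Your proposal is correct and follows essentially the same route as the paper's own proof: the same case split on the form of $\rho$ (resp.\ $\sigma$), reduction to \pr{renaming explicit} for the $\fft$-shaped traces and for the tail of a rooted trace $\rt X\eta$, and direct appeals to \df{rfft} for the $\st$, $\pst$ and $\rt X\sigma$ cases. The only difference is that you isolate as explicit observations (a)--(c) some transfer facts about $\Rn$ that the paper uses silently.
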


\begin{proof}
``$\Leftarrow$'': Let $\rho \in \rfft(P)$, and let $\sigma$ satisfy $\rho \in\Rn^{-1}(\sigma)$.
{\We} have to show that $\sigma\in \rfft(\Rn(P))$.

Let $\rho\mathbin=\top$, $\rho\mathbin= a\eta$ or $\rho \mathbin= X\eta$.
Then $\rho \mathbin\in \fft(P)$, so by \pr{renaming explicit} $\sigma\mathbin\in \fft(\Rn(P)) \subseteq \rfft(\Rn(P))$.

Let $\rho=\rt Y \eta$.
Then $P{\ngoesto{\alpha}}$ for all $\alpha\mathbin\in Y\cup\{\tau\}$,
$P \goesto{\rt} P'$ and $Y\eta \in \fft(P')$.
Moreover, $\sigma\mathbin= \rt X \zeta$ with $Y\mathbin=\Rn^{-1}(X)$ and $Y\eta\in\Rn^{-1}(X\zeta)$.
Consequently $\Rn(P){\ngoesto{\alpha}}$ for all $\alpha\mathbin\in X\cup\{\tau\}$,
$\Rn(P) \goesto{\rt} \Rn(P')$ and $X\zeta \mathbin\in \fft(\Rn(P'))$ by \pr{renaming explicit}.
It follows that $\sigma\mathbin\in\rfft(\Rn(P))$.

Let $\rho=\st$. Then $\stable(P)$, so $\stable(\Rn(P))$ and $\sigma=\st\in\rfft(\Rn(P))$.

Let $\rho=\pst$. Then $P{\goesto\tau}$ and $\emptyset\top\in\fft(P)$.
So $\Rn(P){\goesto\tau}$ and $\emptyset\top\in\fft(\Rn(P))$.
Therefore $\sigma=\pst\in\rfft(\Rn(P))$.

``$\Rightarrow$'': Let $\sigma\in \rfft(\Rn(P))$. {\We} have to find a $\rho \in \rfft(P)$ with $\rho \in \Rn^{-1}(\sigma)$.

Let $\sigma\mathbin=\top$, $\sigma\mathbin= a\eta$ or $\sigma \mathbin= X\eta$.
Then $\sigma\in \fft(\Rn(P))$, so by \pr{renaming explicit} there is a $\rho \in\fft(P)\subseteq \rfft(P)$ with $\rho \in \Rn^{-1}(\sigma)$.

Let $\sigma = \rt X\zeta$. Then $\Rn(P){\ngoesto{\alpha}}$ for all $\alpha\mathbin\in X\cup\{\tau\}$,
$\Rn(P) \goesto{\rt} \Rn(P')$ and $X\zeta \mathbin\in \fft(\Rn(P'))$.
By \pr{renaming explicit} there is a $Y\eta \in\fft(P')$ with $Y\eta \in \Rn^{-1}(X\zeta)$.
In particular $Y=\Rn^{-1}(X)$. Therefore, $P{\ngoesto{\alpha}}$ for all $\alpha\mathbin\in Y\cup\{\tau\}$.
Moreover, $P \goesto{\rt} P'$. Take $\rho:=\rt Y \eta$. Then $\rho \in \rfft(P)$ and $\rho \in \Rn^{-1}(\sigma)$.

Let $\sigma=\st$. Then $\stable(\Rn(P))$, so $\stable(P)$ and $\rho=\st\in\rfft(P)$.

Let $\sigma=\pst$. Then $\Rn(P){\goesto\tau}$ and $\emptyset\top\in\fft(\Rn(P))$.
So $P{\goesto\tau}$ and $\emptyset\top\in\fft(P)$. Therefore $\rho=\pst\in\rfft(P)$.
\end{proof}
\end{document}